\theoremstyle{plain}
\newtheorem{thm}{Theorem}[section]
\newtheorem{lem}[thm]{Lemma}
\newtheorem{prop}[thm]{Proposition}
\newcommand{\thistheoremname}{}
\newtheorem*{genericthm*}{\thistheoremname}
\newenvironment{namedthm*}[1]
  {\renewcommand{\thistheoremname}{#1}%
   \begin{genericthm*}}
  {\end{genericthm*}}
\newtheorem{cor}[thm]{Corollary}
\theoremstyle{definition}
\newtheorem{defn}[thm]{Definition}
\newtheorem{prob}{Problem}
\newtheorem{assumption}{Assumption}
\theoremstyle{remark}
\newtheorem{rem}[thm]{Remark}
\DeclareMathOperator{\supp} {supp}
\DeclareMathOperator{\w} {w}
\DeclareMathOperator{\dd} {d}
\DeclareMathOperator{\Span} {span}
\DeclareMathOperator{\ev}{ev}
\DeclareMathOperator{\Res}{Res}
\DeclareMathOperator{\Ker}{Ker}
\newcommand{\AGcode}[3]{\mathcal{C}_L (#1, #2, #3)}
\newcommand{\word}[1]{\ensuremath{\boldsymbol{#1}}}
\newcommand{\xv}{\word{x}}
\newcommand{\yv}{\word{y}}
\newcommand{\error}{\word{e}}
\newcommand{\av}{\word{a}}
\newcommand{\bv}{\word{b}}
\newcommand{\cv}{\word{c}}
\renewcommand{\error}{\word{e}}
\newcommand{\wv}{\word{w}}
\newcommand{\zerov}{\word{0}}
\newcommand{\X}{\mathcal{X}}
\newcommand{\eqdef}{\stackrel{\textrm{def}}{=}}
\newcommand{\map}[4]{
    \left\{
    \begin{array}{ccc}
        #1 & \longrightarrow & #2 \\ #3 & \longmapsto & #4
    \end{array}
    \right.
}
\newcommand{\ip}[1]{\textcolor{red}{\bf [Isabella : #1 ]}}
\title{Attaining Sudan's decoding radius with no genus penalty for algebraic geometry codes}
\author[1,2]{Isabella Panaccione\thanks{\texttt{isabella.panaccione@inria.fr}}}
\affil[1]{INRIA}
\affil[2]{
  LIX, CNRS UMR 7161\break
  École Polytechnique,\break
  91128 Palaiseau Cedex, France
}
\begin{document}

\maketitle

\begin{abstract}
\noindent
In this paper we present a decoding algorithm for algebraic geometry codes with error--correcting capacity beyond half the designed distance of the code. This algorithm comes as a fusion of the Power Error Locating Pairs algorithm for algebraic geometry codes and the technique used by Ehrhard in order to correct these codes up to half the designed distance. The decoding radius of this algorithm reaches that of Sudan algorithm, without any penalty given by the genus of the curve.
\end{abstract}

\medskip

\noindent {\bf Key words : } Error correcting codes; algebraic geometry codes; decoding algorithms; error correcting pairs; Sudan algorithm; genus.

\section*{Introduction}
Algebraic geometry codes were first introduced by Goppa in $1981$ \cite{G81} and gave a breakthrough in coding theory when Tsafsman, Vl\u{a}dut and Zink proved that Gilbert Varshamov bound could be exceeded when some specific curves where considered \cite{TVZ82}. Furthermore, these codes have interested the Cryptography scene too, in particular for McEliece scheme \cite{JM96}.
\subsection*{Unique decoding of algebraic geometry codes}
Thanks to their strong algebraic structure, it has been possible to design several decoding algorithms for algebraic geometry codes. In $1989$ Justesen, Larsen, Jensen, Havemose and H{\o}holdt proposed one of the first decoding algorithms for a specific class of algebraic geometry codes \cite{JLJHH89} achieving the correction capacity of $\frac{d^*-1-g}{2}$, where $d^*$ is the designed distance of the code and $g$ the genus of the curve. This algorithm, also called \textit{basic algorithm} in the literature, is the starting point for several years of research aimed at improving this decoding radius by erasing the penalty in the genus of the curve. One of the first attempts in this sense, came from Skorobatov and Vl\u{a}dut \cite{SV90} who generalised the basic algorithm to arbitrary curves and improved the decoding radius in some cases. Their result was in turn improved by Duursma \cite{D93} who generalised it to all algebraic geometry codes and reached the decoding radius $\frac{d^*-1}{2}-\sigma$, where $\sigma$ is the \textit{Clifford defect}. The problem though was not complitely solved, as for instance for plane curves we have in average $\sigma=\frac{g}{4}$. This last algorithm is also referred to as the \textit{modified algorithm}. \\

\noindent
In parallel to the basic algorithm and with the same correction capacity, we have the algorithm proposed by Porter in his thesis \cite{P88}. Porter's idea mainly consists in solving a \textit{key equation}, using a generalisation of Euclide's algorithm for functions on curves. Though, the price of this generalisation lies in strong restrictions on the codes and the curves, which entail the correctness of the algorithm only for a small class of codes. In \cite{E92}, Ehrhard generalised this algorithm to all curves by solving the key equation of Porter's algorithm with simple linear algebra operations and proved this algorithm to be equivalent to the basic algorithm for a divisor $F$ with no evaluation points in its support. The correctness of the algorithm was proved independently as well by Porter, Shen and Pellikaan in  \cite{PSP92}, where in addition they succeded in pushing the decoding radius up to the one of the modified algorithm.\\

\noindent
The first algorithm able to correct up to $\frac{d^*-1}{2}$ has been proposed by Pellikaan in \cite{P89}. This algorithm, whose correctness is ensured for maximal curves and some other cases, consists in running the basic algorithm in parallel on several divisors $F_1, \dots, F_s$, as for counting reasons one among them has to work. Though, this result only ensures the existence of these divisors and even if it was extended to almost all curves (Vl\u{a}dut \cite{V90}), no practical precedure to find the $F_i$'s has been found yet. A constructive algorithm to achieve the correction capacity $\frac{d^*-1}{2}$ was finally found by Ehrhard in \cite{E93}, whose idea consists in providing a more suitable divisor $F$, obtained from a gradual adaptation process, and running \cite{E92} on this $F$. In the same year Feng and Rao proposed the so called \textit{majority vote} for \textit{unknown syndromes} \cite{FR93}, which also corrects $\frac{d^*-1}{2}$ errors.

\medskip
\noindent
In 1992 Pellikaan \cite{P92} and, independently, K\"oetter \cite{K92} introduced the so called \textit{error correcting pairs} algorithm, which generalises the basic algorithm to all codes which dispose from a particular structure called \textit{error correcting pair}. This algorithm cannot correct more than $\frac{d-1}{2}$ errors and in particular, since for algebraic geometry codes it reduces to the basic algorithm, for this class of codes it is equivalent to Ehrhard algorithm \cite{E92} for a divisor $F$ with no evaluation points in its support and corrects up to $\frac{d^*-1-g}{2}$ errors.
\subsection*{Beyond half the designed distance}
It is known that several decoding algorithms have been extended from Reed--Solomon codes to algebraic geometry codes to correct amounts of errors superior than half the designed distance. Though, as for the basic algorithm, anytime such a generalisation is made, a penalty in the genus of the curve appears in the decoding radius. Sudan algorithm \cite{S97} has been extended to algebraic geometry codes by Shokrollahi Wasserman \cite{SW99} with a penalty of $\frac{\ell g}{\ell+1}$, where $\ell$ is the degree of Sudan's polynomial. It is known that Sudan algorithm gives in return the list of all possible solutions to the decoding problem. Though it is possible to generalise to algebraic geometry codes also algorithms like the power decoding (\cite{SSB10}, revisited in \cite{R15}), which gives back the closer solution (if it exists) or fails. Also the version of the error correcting pairs algorithm to correct more errors, which is the \textit{power error locating pairs} algorithm \cite{CP20} can be run on algebraic geometry codes with a penalty in the genus $\frac{\ell g}{\ell+1}$, where the paramter $\ell$ is the \textit{power} used in the algorithm. These three algorithms have in practice the same decoding radius (at most they differ by $1$) and in particular they share the penalty in the genus. Now, it is known it is possible to get an improved decoding radius with no genus penalty using Guruswami--Sudan algorithm for an appropriate multiplicity. However, since the size of the linear system of the algorithm depends on this multiplicity, the complexity of the algorithm can become quite large very fast.

\subsection*{Our contribution}
In this paper we propose a decoding algorithm for algebraic geometry codes, whose decoding radius turns to equal the one of Sudan algorithm, but with no factor in $g$. To do so, first we adapt the language of power error locating pairs algorithm \cite{CP20} to the one of Ehrhard's result \cite{E93}. In this way we get a decoding algorithm with a correction capacity equivalent to Sudan's. Therefore, to erase the penalty, we apply our algorithm to a suitable divisor $F$ provided by using Ehrhard's adaptation process (\cite{E93}).

\subsection*{Outline of the article}
In \S\ref{Notation} we will give some notations and results about Riemann--Roch spaces, algebraic geometry codes and star product. In \S\ref{sec:Ehrhard}, also introductive, we will give an overview of Ehrhard's algorithm \cite{E93} and show it as an extension of \cite{E92} (see Remark \ref{rem:9293}). In \S\ref{sec:l2} and \S\ref{sec:l>2} the new algorithm is presented, while some experimental observations are given in \S\ref{sec:test}. The nature of these results being quite technical, we decided to report some proofs in appendix and suggest the reader to postpone their reading to a second moment.

\section{Notation and preliminaries}\label{Notation}
\subsection{Codes and decoding problems}
Given a finite field $\mathbb{F}_q$, a code over $\mathbb{F}_q$ of length $n$ is simply a subset of $\mathbb{F}_q^n$. The code is said to be \textit{linear} if it is a vector subspace of $\mathbb{F}_q^n$. Its elements are called \textit{codewords}. 
\begin{defn}
Given two vectors $\av=(a_1, \dots, a_n), \bv=(b_1, \dots, b_n)\in\mathbb{F}_q^n$, their \textit{Hamming distance} is 
\[\dd(\av, \bv)\eqdef \#\{i\in \{1, \dots, n\}\mid a_i\ne b_i\}.\] 
The weight of $\av$ is defined as $\w(\av)\eqdef\dd(\av, \zerov)$. 
\end{defn}
\begin{defn}
Given a code $\mathcal{C}\subseteq \mathbb{F}_q^n$, the \textit{minimum distance} of $\mathcal{C}$ is 
\[\dd(\mathcal{C})\eqdef\min_{\substack{\av, \bv\in\mathcal{C} \\ \av\ne\bv}} \dd(\av, \bv).\]
\end{defn}
\medskip
\noindent
In the rest of the paper we will write sometimes $\dd$ instead of $\dd(\mathcal{C})$ when there is no ambiguity on the code. We recall that if the code $\mathcal{C}$ is linear, which is the case for the codes considered in this paper, we have $\dd(\mathcal{C})=\min_{\av\in \mathcal{C}\setminus \{\zerov\}}\w(\av)$.
\begin{defn}
Given a vector $\av=(a_1, \dots, a_n)\in \mathbb{F}_q^n$ we define its \textit{support} as 
\[\supp(\av)\eqdef\{i\in \{1, \dots, n\}\mid a_i\ne 0\}.\]
\end{defn}
\medskip
\noindent
We can now present the decoding problem we want to solve. 
\begin{prob}\label{Dec_prob}
Let $\mathcal{C}\subseteq\mathbb{F}_q^n$ be a code, $\yv\in\mathbb{F}_q^n$ and $t\in\{1,\dots, n\}$. Return (if it exists) $\cv\in \mathcal{C}$ such that 
\[\dd(\yv, \cv)\le t.\] 
\end{prob}
\medskip
\noindent
This problem takes the name of \textit{bounded decoding problem}. A \textit{decoding algorithm} for a code $\mathcal{C}$, is an algorithm which solves Problem \ref{Dec_prob} for $\mathcal{C}$ for some $t$. The maximum value of $t$ for which the algorithm can solve this problem, is called \textit{decoding radius} of the algorithm. Depending on the value of $t$, it is possible to estimate the number of possible solutions to Problem \ref{Dec_prob}. In particular, it is known that if $t\le\frac{\dd(\mathcal{C})-1}{2}$, then there exists at most one solution. Once $t$ exceeds this amount, called \textit{unique decoding bound}, the uniqueness of the solution is no longer entailed. This last case is the one we want to treat for algebraic geometry codes.
\begin{rem}
Mind that it is not always easy to compute the exact minimum distance of an algebraic geometry code. Only lower bounds are known and the main one is referred to as the \textit{designed distance}, which is why for this codes the unique decoding bound is considered to be half the designed distance of the code. We will present the notion of algebraic geometry codes and designed distance in the next section.
\end{rem}
\medskip
\noindent
There exist several decoding algorithms which solve Problem \ref{Dec_prob} for algebraic geometry codes for $t$ up to half the designed distance and even beyond. We know that in the second case, there could be more than one solution, and it is not always easy to chose the ``best'' one. Indeed there could be cases, called \textit{worst cases}, where two solutions $\cv_1, \cv_2$ exist and satisfy $\dd(\cv_1, \yv)=\dd(\cv_2, \yv)$. Some decoding algorithms treat these cases by giving back the list of all possible solutions and then take the name of \textit{list decoding algorithm}. Other algorithms, like the one we are going to propose in this paper, are called \textit{probabilistic} and give back one solution or fail. For sake of simplicity, all along this paper we work on Problem \ref{Dec_prob} for a generic $t$, by making the following assumption.
\begin{assumption}\label{ass0}
There exists $\cv\in \mathcal{C}$ and $\error\in\mathbb{F}_q^n$ with $\w(\error)=t$, such that 
\[\yv=\cv+\error.\]
\end{assumption}
\medskip
\noindent
That is, we assume Problem \ref{Dec_prob} to have a solution $\cv$ and that $\dd(\cv, \yv)=t$.

\subsection{Algebraic geometry codes}
In what follows, we will only consider smooth projective geometrically connected curves over $\mathbb{F}_q$. For the sake of semplicity, no proof will be included in this first section, but we direct the reader to \cite{S09} and \cite{TVN07} for further details. The Riemann Roch space associated to a curve $\X$ and a divisor $G$ is defined as
\[L(G)\eqdef\{f\in \mathbb{F}_q(\X)^*\mid (f)\ge -G\}\cup\{0\}.\]
It is in particular a vector space over $\mathbb{F}_q$ and its dimension is denoted by $\ell(G)$. Depending on the degree of $G$ it is possible to deduce some informations about the dimension of the space $L(G)$.
\begin{prop}\label{prop:preRR}
The following properties hold
\begin{itemize}
\item If $\deg G<0$, then $L(G)=\{0\}$;
\item $\ell(G)\ge \deg G-g+1$;
\item if $\deg G>2g-2$, then $\ell(G)=\deg G-g+1$.
\end{itemize}
\end{prop}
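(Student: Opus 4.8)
The plan is to deduce all three items from the Riemann--Roch theorem, which I would take as the fundamental input (consistent with the paper's policy of referring to \cite{S09,TVN07} for the underlying algebraic geometry). Recall that Riemann--Roch guarantees a canonical divisor $K$ on $\X$, with $\deg K = 2g-2$ and $\ell(K)=g$, such that for every divisor $G$ one has
\[\ell(G)-\ell(K-G)=\deg G-g+1.\]

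First I would treat the vanishing statement directly, without Riemann--Roch: suppose $\deg G<0$ but $L(G)\neq\{0\}$, and pick $f\in L(G)\setminus\{0\}$. By definition $(f)\ge -G$, so $(f)+G\ge 0$ is an effective divisor and hence has nonnegative degree; since every principal divisor has degree $0$, this forces $\deg G=\deg((f)+G)\ge 0$, a contradiction. Therefore $L(G)=\{0\}$.

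For the inequality $\ell(G)\ge \deg G-g+1$, I would simply observe that $\ell(K-G)\ge 0$, being the dimension of a vector space, so Riemann--Roch gives $\ell(G)=\deg G-g+1+\ell(K-G)\ge \deg G-g+1$. For the last item, if $\deg G>2g-2=\deg K$, then $\deg(K-G)=\deg K-\deg G<0$, so by the first item $L(K-G)=\{0\}$, i.e. $\ell(K-G)=0$; substituting into Riemann--Roch yields $\ell(G)=\deg G-g+1$.

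The only genuine obstacle is the Riemann--Roch theorem itself, whose proof (via differentials and Serre duality, or via adeles/repartitions) is substantial; everything after that is a one-line deduction. If one wanted the section to be fully self-contained, one would additionally need the two auxiliary facts $\deg K=2g-2$ and $\ell(K)=g$, but these are standard consequences of Riemann--Roch applied to $G=0$ and to $G=K$, together with the fact that $\ell(0)=1$ (the only rational functions regular everywhere on a smooth projective geometrically connected curve are the constants).
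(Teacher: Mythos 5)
Your proof is correct and is exactly the standard derivation from the Riemann--Roch theorem found in the references the paper cites (the paper itself deliberately omits the proof and points to \cite{S09, TVN07}). The direct argument for the vanishing of $L(G)$ when $\deg G<0$, the nonnegativity of $\ell(K-G)$ for the lower bound, and the degree comparison with $K$ for the equality case are precisely the expected steps, so there is nothing to add.
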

\begin{thm}[Clifford's Theorem]\label{thm:clifford}
For all divisors $A$ with $0\le \deg A\le 2g-2$ holds 
\[\ell(A)\le 1+\frac{1}{2}\deg A.\]
\end{thm}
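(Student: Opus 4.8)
The plan is to derive Clifford's bound from the Riemann--Roch theorem together with a ``superadditivity'' estimate for the dimensions $\ell$ of Riemann--Roch spaces. Write $K$ for a canonical divisor, so $\deg K = 2g-2$ and, by Riemann--Roch, $\ell(K)=g$. First I would dispose of the degenerate cases. If $\ell(A)=0$ there is nothing to prove, since $\deg A\ge 0$ already gives $1+\frac12\deg A\ge 1>0=\ell(A)$. If $\ell(K-A)=0$, then Riemann--Roch reads $\ell(A)=\deg A-g+1$, and from $\deg A\le 2g-2<2g$ one gets $\ell(A)=\deg A-g+1\le 1+\frac12\deg A$. Hence it remains to treat the case $\ell(A)\ge 1$ and $\ell(K-A)\ge 1$.

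In that case, pick nonzero $f\in L(A)$ and $h\in L(K-A)$; then $A':=A+(f)$ and $B':=(K-A)+(h)$ are effective, linearly equivalent to $A$ and to $K-A$ respectively, and $A'+B'\sim K$. Since $\ell$ depends only on the linear equivalence class, it suffices to bound $\ell(A')+\ell(B')$, so after renaming I may assume $A,B\ge 0$ with $A+B$ canonical. The crux is then the inequality
\[\ell(A)+\ell(B)\le \ell(A+B)+1 \qquad\text{for effective }A,B,\]
which, since $A+B\sim K$, gives $\ell(A)+\ell(B)\le g+1$. Feeding in Riemann--Roch in the form $\ell(A)-\ell(K-A)=\deg A-g+1$, i.e.\ $\ell(A)-\ell(B)=\deg A-g+1$, and adding the two relations yields $2\ell(A)\le (g+1)+(\deg A-g+1)=\deg A+2$, which is exactly the claimed bound.

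To prove the displayed inequality I would use the bilinear multiplication map $L(A)\times L(B)\longrightarrow L(A+B)$, $(u,v)\longmapsto uv$, which is well-defined because $(uv)=(u)+(v)\ge -A-B$. Writing $V\subseteq L(A+B)$ for the linear span of its image, it suffices to show $\dim V\ge\ell(A)+\ell(B)-1$. This is an instance of the ``linearised Cauchy--Davenport'' phenomenon: if $U_1,U_2$ are nonzero finite-dimensional $\mathbb{F}_q$-subspaces of a field in which $\mathbb{F}_q$ is algebraically closed — as is the case for $\mathbb{F}_q(\X)$, since $\X$ is geometrically connected — then $\dim(U_1U_2)\ge\dim U_1+\dim U_2-1$; applying this with $U_1=L(A)$ and $U_2=L(B)$ bounds $\dim V$. (Alternatively one argues by induction on $\deg A$: when $\ell(A)=1$ one has $L(A)=\mathbb{F}_q\,u_0$ and $u_0L(B)\subseteq L(A+B)$ already has dimension $\ell(B)=\ell(A)+\ell(B)-1$, and the inductive step lowers $\ell(A)$ by moving a suitable rational point from $A$ to $B$, after a harmless extension of the base field — which changes neither $\ell$ nor $g$ — to guarantee enough rational points.) I expect this superadditivity step to be the main obstacle: the multiplication map is trivial to write down, but extracting the sharp lower bound $\dim V\ge\ell(A)+\ell(B)-1$, rather than the useless $\dim V\ge\max(\ell(A),\ell(B))$, is precisely the content of Clifford's inequality and is the only genuinely nontrivial point; everything else is bookkeeping with Riemann--Roch.
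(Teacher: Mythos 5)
The paper does not give its own proof of Clifford's theorem; it states the result and refers the reader to \cite[\S 1.6]{S09}. Your overall architecture — dispose of the cases $\ell(A)=0$ and $\ell(K-A)=0$, then combine Riemann--Roch with the subadditivity $\ell(A)+\ell(B)\le \ell(A+B)+1$ for $\ell(A),\ell(B)\ge 1$ — is precisely the one in Stichtenoth (his Lemma~1.6.14, which the paper cites for Proposition~\ref{prop:ell}(\textit{ii})). Where you diverge is in how you establish that subadditivity. Stichtenoth's argument is an elementary function-field computation; you instead invoke the linear Cauchy--Davenport/Kneser theorem of Hou--Leung--Xiang applied to the product space $L(A)\,L(B)$. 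This is logically sound, and you correctly flag that the theorem requires $\mathbb{F}_q$ to be algebraically closed in $\mathbb{F}_q(\X)$, supplied here by geometric connectedness of $\X$. But it is a much heavier hammer: that linearised addition theorem is itself of a depth comparable to Clifford's inequality, so the route inverts the customary logical dependency and imports machinery the textbook proof is designed to do without. What it buys is that, once the black box is granted, the remaining bookkeeping is very short.

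Your parenthetical inductive sketch, by contrast, has a genuine gap. ``Moving a suitable rational point from $A$ to $B$'' means replacing $(A,B)$ by $(A-P,\,B+P)$ with $\ell(A-P)=\ell(A)-1$; for the induction to close you then also need $\ell(B+P)\ge\ell(B)+1$, and by Riemann--Roch this forces $\ell(K-B-P)=\ell(K-B)$, i.e.\ $P$ must be a base point of $|K-B|$. In the very case where you apply the lemma — $A+B\sim K$, hence $K-B\sim A$ — the base loci of $|K-B|$ and $|A|$ coincide, so ``$P$ is a base point of $|K-B|$'' and ``$P$ is not a base point of $|A|$'' are mutually exclusive; no $P$ can satisfy both, even after extending the base field. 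A working variant decreases $A+B$ instead of holding it fixed: pick a rational $P$ that is not a base point of $|A|$ and with $P\notin\supp(B)$; then $P$ is also not a base point of $|A+B|$, so $\ell(A+B-P)=\ell(A+B)-1$, and applying the induction hypothesis to the pair $(A-P,B)$ gives the desired bound.
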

\medskip
\noindent
The reader can find the proof of this result in \cite[\S$1.6$]{S09}. In the following sections, we will work consistently with Riemann Roch spaces and their dimension. First, we recall that the \textit{support} of a divisor $A=\sum_P v_P(A)P$ is the set $\supp(A)\eqdef\{P\mid v_P(A)\ne 0\}$. Given two divisors $A, B$ it is possible to define the minimum (and symmetrically the maximum) of $A$ and $B$ 
\[\min\{A, B\}\eqdef \sum_P\min\{v_P(A), v_P(B)\} P.\]
 We recall the following properties:
\begin{itemize}
\item $L(A)\cap L(B)=L(\min\{A, B\})$,
\item $L(A) + L(B)\subseteq L(\max\{A, B\})$.
\end{itemize}
Finally, we will need the following results which bound the dimension of the Riemann Roch space of a sum of divisors. 
\begin{prop}\label{prop:ell}
Let $A, B$ be two divisors with $B\ge 0$. Then 
\begin{itemize}
\item[(\textit{i})] $\ell(A -B)\ge \ell(A)-\deg B$
\item[(\textit{ii})] $\ell(A-B)\le \max\{0, \ell(A)-\ell(B)+1\}$
\end{itemize}
\end{prop}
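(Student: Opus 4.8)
The plan is to deduce both bounds from the exact sequences of Riemann--Roch spaces attached to the inclusion $L(A-B)\subseteq L(A)$ together with the basic dimension facts collected in Proposition \ref{prop:preRR}. Throughout, write $B=\sum_P b_P P$ with all $b_P\ge 0$, so that $\deg B=\sum_P b_P$.

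For part $(i)$, I would argue by induction on $\deg B$. The case $\deg B=0$ is trivial since then $B=0$ and the inequality is an equality. For the inductive step, write $B=B'+P$ where $P\in\supp(B)$ and $B'\ge 0$ has $\deg B'=\deg B-1$. The key observation is that $L(A-B')/L(A-B'-P)$ injects into a one-dimensional space: evaluating the "leading coefficient" at $P$ (more precisely, the residue map $f\mapsto (t_P^{v_P(A-B')}f)(P)$ for a local uniformizer $t_P$) has kernel exactly $L(A-B'-P)=L(A-B)$, so $\ell(A-B')-\ell(A-B)\le 1$. Combining with the inductive hypothesis $\ell(A-B')\ge \ell(A)-\deg B'=\ell(A)-\deg B+1$ gives $\ell(A-B)\ge \ell(A)-\deg B$, as wanted. (Alternatively one can invoke Riemann--Roch directly and compare index-of-speciality terms, but the telescoping argument keeps things elementary and self-contained.)

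For part $(ii)$, the bound is trivially true if $\ell(A-B)=0$, so assume $L(A-B)\neq\{0\}$ and fix $0\neq h\in L(A-B)$, i.e. $(h)\ge B-A$. Multiplication by $h$ gives an injective linear map $L(B)\hookrightarrow L(A)$, since for $g\in L(B)$ we have $(gh)=(g)+(h)\ge -B+(B-A)=-A$; hence $\ell(B)\le\ell(A)$, which already shows the right-hand side $\max\{0,\ell(A)-\ell(B)+1\}$ is well defined as a bound to aim for but is not yet the claim. To get the stronger statement I would instead consider the map $L(A-B)\otimes L(B)\to L(A)$, $(f,g)\mapsto fg$; choosing a basis $f_1,\dots,f_r$ of $L(A-B)$ and using that the products $f_i g$ for $g$ ranging over a basis of $L(B)$ span a subspace of $L(A)$ of dimension at least $\ell(B)+r-1$ (the "$-1$" coming from the fact that the map $L(A-B)\to L(A)/\langle \text{one fixed nonzero } f_1 g\text{-translate}\rangle$ drops rank by at most one — this is the classical Riemann inequality for products of linear series). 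Unwinding, $\ell(A)\ge \ell(A-B)+\ell(B)-1$, which rearranges to $(ii)$.

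The main obstacle I anticipate is making the "$-1$" in part $(ii)$ rigorous cleanly: the naive bound from a single element $h$ only yields $\ell(A-B)\le\ell(A)-\ell(B)+\ell(A-B)$, which is vacuous, so one genuinely needs the bilinear-map/base-point argument (essentially a form of the theorem that $\dim(U\cdot V)\ge \dim U+\dim V-1$ for subspaces $U,V$ of a function field with $U\cdot V$ their product span). I would isolate this as the crux and, if the paper wants a fully elementary treatment, prove it by picking $f_1\in L(A-B)$ nonzero and noting that $f\mapsto \overline{f\cdot(\text{basis of }L(B))}$ maps $L(A-B)$ into $L(A)/f_1 L(B)$ with kernel contained in the scalars times $f_1$ — giving $\ell(A-B)-1\le \ell(A)-\ell(B)$. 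Everything else is bookkeeping with the stated properties $L(A)\cap L(B)=L(\min\{A,B\})$ and the degree/dimension estimates of Proposition \ref{prop:preRR}.
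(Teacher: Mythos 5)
Part (\textit{i}) is correct and is essentially the proof of \cite[Lemma 1.4.8]{S09}, which is all the paper cites for this item; the only small caveat is that if $P\in\supp(B)$ is a closed point of degree $d>1$ then $L(A-B')/L(A-B'-P)$ has dimension at most $d$, not $1$, so the induction should step by $\deg P$ rather than by one.

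For part (\textit{ii}) your reduction --- dispose of $\ell(A-B)=0$ separately and otherwise aim for $\ell(A-B)+\ell(B)\le 1+\ell(A)$ --- is exactly what the paper does by citing \cite[Lemma 1.6.14]{S09}. But the ``fully elementary treatment'' you sketch for that crux has a genuine gap: the assertion that, for an arbitrary nonzero $f_1\in L(A-B)$ and a basis vector $g_1\in L(B)$, the map $L(A-B)\to L(A)/f_1L(B)$, $f\mapsto\overline{fg_1}$, has kernel contained in $\mathbb{F}_q f_1$ is false. On $\mathbb{P}^1$ take $A=4P_\infty$, $B=2P_\infty$, $f_1=g_1=1$: then $f_1L(B)=L(B)$ and the kernel is $L(A-B)\cap L(B)=L(2P_\infty)$, which is three-dimensional, not one-dimensional. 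To make your multiplication-map idea work one must choose $f_1$ and $g_1$ with extremal valuation at an auxiliary point $P$ outside the supports ($f_1$ of minimal $v_P$ in $L(A-B)$, $g_1$ of maximal $v_P$ in $L(B)$); then $fg_1\in f_1L(B)$ with $f\notin\mathbb{F}_q f_1$ would, after subtracting the leading multiple of $f_1$, produce a nonzero $g'\in L(B)$ with $v_P(g')>v_P(g_1)$, a contradiction. Equivalently, and more transparently, list the distinct $P$-valuations $a_1<\dots<a_r$ attained on $L(A-B)$ and $b_1<\dots<b_s$ on $L(B)$: the chain $a_1+b_1<a_1+b_2<\dots<a_1+b_s<a_2+b_s<\dots<a_r+b_s$ exhibits $r+s-1$ products in $L(A)$ with pairwise distinct valuations, hence linearly independent. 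You correctly flagged the ``$-1$'' as the main obstacle; the missing ingredient is this valuation-at-a-point device, without which the unadorned quotient by $f_1L(B)$ does not give the bound.
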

\medskip
\noindent
For the proof of (\textit{i}) see \cite[Lemma $1.4.8$]{S09}. For (\textit{ii}), one can observe that if $\ell(A-B)=0$ or $\ell(B)=0$, then the inequality is clearly true, while the proof for $\ell(A-B),  \ell(B)>0$ can be found in \cite[Lemma $1.6.14$]{S09}.

\medskip
\noindent
It is now possible to introduce algebraic geometry codes, whose notion relies indeed on the one of Riemann Roch space. Given a sequence of rational points $\mathcal{P}=(P_1, \dots, P_n)$ of a curve $\X$, the algebraic geometry code associated to $\X$, the divisor $G$ and $\mathcal{P}$ is defined as
\[\AGcode{\X}{\mathcal{P}}{G}=\{(f(P_1), \dots, f(P_n))\mid f\in L(G)\}.\] In the rest of the paper we will often use the notation $\ev_{\mathcal{P}}(f)$ to indicate the vector $(f(P_1), \dots, f(P_n))$. Given an algebraic geometry code, we will call the $P_i$'s, the \textit{evaluation points} of the code and we will denote by $D$ the divisor $P_1+\dots +P_n$. Furthermore, given the evaluation points $\mathcal{P}$ of an algebraic geometry code, we denote by $\omega$ a differential form such that $v_{P}(\omega)=-1$ and $\Res_P(\omega)=1$ for all $P\in \mathcal{P}$ and by $W$ the canonical divisor associated to $\omega$, that is $W=(\omega)$. It is known that, given such a $W$ for an algebraic geometry code $\AGcode{\X}{\mathcal{P}}{G}$
\begin{equation}\label{dualAG}
\AGcode{\X}{\mathcal{P}}{G}^{\perp}=C_{\Omega}(\X, \mathcal{P}, G)=\AGcode{\X}{\mathcal{P}}{W+D-G}.
\end{equation}
A proof of this result is given in \cite[Proposition $2.2.10$]{S09}. As it is known, it is possible to estimate the dimension and the minimum distance of an algebraic geometry code. We recall here the properties we will need in the following sections.
\begin{prop}
Let $\mathcal{C}=\AGcode{\X}{\mathcal{P}}{G}$, with $\deg G <n$. Then we have $\dim \mathcal{C}=\ell(G)$. In particular,
\begin{itemize}
\item $\dim \mathcal{C}\ge \deg G-g+1$ and $d(\mathcal{C})\ge n-\deg G$;
\item if $\deg G >2g-2$, then $\dim \mathcal{C}=\deg G-g+1$.
\end{itemize}
\end{prop}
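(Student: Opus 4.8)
The plan is to work with the evaluation map $\ev_{\mathcal{P}}\colon L(G)\to\mathbb{F}_q^n$, $f\mapsto\bigl(f(P_1),\dots,f(P_n)\bigr)$, whose image is by definition the code $\mathcal{C}=\AGcode{\X}{\mathcal{P}}{G}$. To obtain $\dim\mathcal{C}=\ell(G)$ I would show that this $\mathbb{F}_q$-linear map is injective. Its kernel consists precisely of the functions $f\in L(G)$ vanishing at every $P_i$, that is, of the functions lying in $L(G-D)$. Since $\deg(G-D)=\deg G-n<0$ by the standing hypothesis $\deg G<n$, the first item of Proposition \ref{prop:preRR} gives $L(G-D)=\{0\}$; hence $\ev_{\mathcal{P}}$ is injective and $\dim\mathcal{C}=\dim L(G)=\ell(G)$.

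The two dimension estimates then follow immediately: the second item of Proposition \ref{prop:preRR} gives $\dim\mathcal{C}=\ell(G)\ge\deg G-g+1$, and its third item upgrades this to the equality $\dim\mathcal{C}=\deg G-g+1$ whenever $\deg G>2g-2$.

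For the minimum distance I would take an arbitrary nonzero codeword $\cv=\ev_{\mathcal{P}}(f)$ and set $w=\w(\cv)$; then $f$ vanishes at the $n-w$ evaluation points lying outside $\supp(\cv)$, so, writing $D'$ for the sum of those points, $f$ is a nonzero element of $L(G-D')$. By the first item of Proposition \ref{prop:preRR} this forces $\deg(G-D')\ge 0$, i.e.\ $\deg G-(n-w)\ge 0$, whence $w\ge n-\deg G$. Minimising over all nonzero codewords yields $d(\mathcal{C})\ge n-\deg G$.

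I do not expect any genuine obstacle here: the whole argument rests on the elementary degree dichotomy of Proposition \ref{prop:preRR} together with the dictionary that $f$ vanishes on a set $S$ of evaluation points if and only if $f\in L\bigl(G-\sum_{P\in S}P\bigr)$. The only point worth a moment's care is to invoke the hypothesis $\deg G<n$ exactly where injectivity of $\ev_{\mathcal{P}}$ is needed; the distance bound itself requires no such hypothesis.
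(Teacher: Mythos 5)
The paper itself gives no proof of this proposition: it is recalled as a known result with pointers to \cite{S09} and \cite{TVN07}. Your argument is the standard textbook proof — injectivity of $\ev_{\mathcal{P}}$ via $\Ker(\ev_{\mathcal{P}})=L(G-D)=\{0\}$ since $\deg(G-D)<0$, the Riemann--Roch estimates from Proposition \ref{prop:preRR}, and the minimum-distance bound by applying the same degree dichotomy to $L(G-D')$ for the zero set $D'$ of a nonzero codeword — and it is correct in every step, including the (implicit but standard) use of $\supp(G)\cap\mathcal{P}=\emptyset$ to identify the kernel with $L(G-D)$. Nothing to add.
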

\medskip
\noindent
The quantity $n-\deg G$ takes the name of \textit{designed distance} of the code $\mathcal{C}$ and we denote it by $d^*$.

\subsection{Star product}
Given two vectors $\av=(a_1, \dots, a_n)$, $\bv=(b_1, \dots, b_n)\in\mathbb{F}_q^n$, the \textit{star product} (also called Schur product) of $\av$ and $\bv$ is defined as 
\[\av\ast\bv\eqdef (a_1b_1, \dots, a_nb_n).\]
We denote by $\av^i$ the power with respect to the star product of the vector $\av$, that is $\av^i\eqdef (a_1^i, \dots, a_n^i)$. One should be careful not to mix the notions of star product $\av\ast\bv$ and canonical inner product in $\mathbb{F}_q^n$, $\langle \av, \bv\rangle=\sum_{i=1}^n a_ib_i$. It is easy to prove that the following relation between the two operations holds:
\begin{equation}\nonumber
\langle \av\ast\bv, \cv\rangle=\langle\av, \bv\ast\cv\rangle.
\end{equation}
It is possible to generalise the notion of star product to subsets of $\mathbb{F}_q^n$ as well. 
\begin{defn}
Given $A, B\subseteq \mathbb{F}_q^n$, the star product of $A$ and $B$ is defined as
\[A\ast B\eqdef \Span_{\mathbb{F}_q}\{\av\ast\bv\mid \av\in A, \bv\in B\}.\]
Finally, given $i\in \mathbb{N}$, the power $A^i$ is defined by induction as $A^i\eqdef A\ast A^{i-1}$, where $A^1\eqdef A$.
\end{defn}
\subsection{Star product and algebraic geometry codes}
Riemann Roch spaces, and so algebraic geometry codes, behave well with respect to Schur product. In particular given two divisors $F$ and $G$, it is easy to see that $L(F)L(G)\subseteq L(F+G)$, hence
\[
    \AGcode{\X}{\mathcal{P}}{G} * \AGcode{\X}{\mathcal{P}}{G'}
    \subseteq \AGcode{\X}{\mathcal{P}}{G+G'}.
  \]
Under some further condition we can have the equality.
\begin{prop}[Star product of AG codes]\label{prop:star_prod_AGcodes}
  Let $\X$ be a curve of genus $g$, $\mathcal{P} = (P_1, \dots, P_n)$
  be a sequence of rational points of $\X$ and $G, G'$ be two divisors of
  $\X$ such that $\deg G \geq 2g$ and $\deg G' \geq 2g+1$. Then,
  \[
    \AGcode{\X}{\mathcal{P}}{G} * \AGcode{\X}{\mathcal{P}}{G'}
    = \AGcode{\X}{\mathcal{P}}{G+G'}.
  \]
\end{prop}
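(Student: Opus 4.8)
The inclusion $\AGcode{\X}{\mathcal{P}}{G} * \AGcode{\X}{\mathcal{P}}{G'} \subseteq \AGcode{\X}{\mathcal{P}}{G+G'}$ is already noted, so the work is the reverse inclusion. Fix an evaluation map $\ev_{\mathcal{P}}$ and note that, because $\deg G, \deg G' < n$ can be assumed on the relevant range (or handled separately), the evaluation maps $L(G)\to\AGcode{\X}{\mathcal{P}}{G}$ and $L(G+G')\to\AGcode{\X}{\mathcal{P}}{G+G'}$ are injective; hence it suffices to show that every $h\in L(G+G')$ can be written, after evaluation, as an $\mathbb{F}_q$-linear combination of products $\ev_{\mathcal{P}}(f)\ast\ev_{\mathcal{P}}(f')$ with $f\in L(G)$, $f'\in L(G')$. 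Equivalently, I want to produce enough functions in $L(G)\cdot L(G')$ to span $\ev_{\mathcal{P}}(L(G+G'))$, a space of dimension $\ell(G+G') = \deg G + \deg G' - g + 1$ since $\deg(G+G') \ge 4g+1 > 2g-2$.

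\textbf{Key steps.} First I would fix a function $f_0 \in L(G)$ with no zero among $P_1,\dots,P_n$ — such an $f_0$ exists because the evaluation code $\AGcode{\X}{\mathcal{P}}{G}$ has minimum distance $\ge n - \deg G > 0$, so a generic element of $L(G)$ vanishes at none of the $P_i$ (more carefully: the points of $\mathcal{P}$ at which \emph{some} fixed nonzero $f_0$ vanishes number at most $\deg G < n$, and one can also just take $f_0=1$ if $G\ge 0$, adjusting by the standard trick of replacing $G$ by a linearly equivalent divisor whose support avoids $\mathcal{P}$). Then $f_0 \cdot L(G') \subseteq L(G+G')$, and since $f_0$ is invertible at every $P_i$, the multiplication-by-$f_0$ map is an isomorphism $L(G') \to f_0 L(G') \subseteq L(G+G')$ that intertwines with componentwise multiplication by $\ev_{\mathcal{P}}(f_0)$. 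The real content is then: the subspace $L(G) \cdot L(G') \subseteq L(G+G')$ is actually all of $L(G+G')$. For this I would invoke the standard surjectivity result on multiplication maps of Riemann--Roch spaces — namely that for divisors with $\deg G \ge 2g$ and $\deg G' \ge 2g+1$ the multiplication map $L(G)\otimes L(G') \to L(G+G')$ is surjective (this is the classical statement behind projective normality / the theorem of Mumford, in the form usable for curves; see e.g. the function-field literature). Concretely one reduces to showing $\dim\big(L(G)\cdot L(G')\big) \ge \deg G + \deg G' - g + 1$, which one does by a dimension count: pick a basis of $L(G)$, and using that $\deg G' \ge 2g+1$ show the images $f\cdot L(G')$ for $f$ ranging over that basis jointly span the target, exploiting that $\ell(G' + (f)_{\le 0})$ type spaces have predictable dimension because their degree stays above $2g-2$.

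\textbf{Main obstacle.} The crux — and the only non-formal point — is the surjectivity of the multiplication map $L(G)\otimes L(G')\to L(G+G')$ under the hypotheses $\deg G\ge 2g$, $\deg G'\ge 2g+1$; everything else is bookkeeping with evaluation maps and the dimension formulas from Proposition~\ref{prop:preRR}. I expect the cleanest route is to cite the known AG-code base-point-free / projective-normality statement rather than reprove it, then transfer the equality of function spaces to equality of codes via the injectivity of $\ev_{\mathcal{P}}$ on $L(G+G')$ (valid since $\deg(G+G')$ may be assumed $< n$ in the regime where the proposition is applied, or else the statement is read purely about the image spaces). A secondary subtlety to state carefully is why injectivity of evaluation holds on all three spaces simultaneously: it follows from $\deg G,\deg G' ,\deg(G+G')< n$, which is the implicit running assumption (the codes in the paper always have $\deg < n$); if one does not want that assumption, the proposition should be read as an equality of the images in $\mathbb{F}_q^n$, and the proof is unchanged since it only ever manipulates evaluated vectors.
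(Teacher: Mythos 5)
Your approach—reduce everything to the surjectivity of the multiplication map $L(G)\otimes L(G')\to L(G+G')$ under the hypotheses $\deg G\ge 2g$, $\deg G'\ge 2g+1$, and cite the classical result (Mumford) rather than reprove it—is exactly what the paper does; its proof is a one-line citation of \cite[Theorem 6]{M70} via \cite[Corollary 9]{CMP17}. The tangents about injectivity of $\ev_{\mathcal{P}}$ and about choosing $f_0\in L(G)$ with no zeros among the $P_i$ are unnecessary: the statement is about equality of codes in $\mathbb{F}_q^n$, and once $L(G)\cdot L(G')$ spans $L(G+G')$, applying $\ev_{\mathcal{P}}$ to both sides immediately yields the reverse inclusion $\AGcode{\X}{\mathcal{P}}{G+G'}\subseteq \AGcode{\X}{\mathcal{P}}{G}\ast\AGcode{\X}{\mathcal{P}}{G'}$ without any injectivity hypothesis.
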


\begin{proof}
  This is a consequence of \cite[Theorem 6]{M70}.
  For instance, see \cite[Corollary 9]{CMP17}.
\end{proof}

\section{Ehrhard's algorithm}\label{sec:Ehrhard}
In this section, we recall the version of Porter's algorithm for unique decoding proposed by Ehrhard \cite{E93}. In particular we report the adaptation process which premits to push the decoding radius up to half the designed distance of the code with no genus penalty, which is
\[\frac{d^*-1}{2}\cdot\]
Let us consider a code $\AGcode{\X}{\mathcal{P}}{G}\subseteq\mathbb{F}_q^n$, where $g-1\le\deg(G)\le n$ and $\supp(G)\cap\mathcal{P}=\emptyset$. We recall that by Assumption \ref{ass0} the received vector is of the form $\yv=\cv+\error$, where $\cv=\ev_D(f_{\cv})$ with $f_{\cv}\in L(G)$ and $t=\w(e)$. In particular, since this is an algorithm for unique decoding, we suppose $t\le \frac{d^*-1}{2}$. Note that in Ehrhard's paper the used language is the one of $C_{\Omega}$ codes, while we translated everything into $C_L$ codes.

\subsection{Foundations and purpose of the algorithm}\label{subs:found1}
First, we would like to express all vectors in $\mathbb{F}_q^n$ as vectors of evaluations of certain functions. In order to do that, we introduce a divisor $G'$ such that $\supp(G')\cap\mathcal{P}=\emptyset$, $G'\ge G$ and $\ell(W+D-G')=0$, where $W$ has been defined in \S\ref{Notation}. We get then the inclusion
\[L(G)\subset L(G').\]
By using the hypothesis $\ell(W+D-G')=0$, one can prove that there exists a vector space $V$ such that $L(G)\subset V\subset L(G')$ and ${\ev_D}_{|V}: V\rightarrow \mathbb{F}_q^n$ is an isomorphism (see \cite{E92}). Hence, we get the following diagram:
\begin{figure}[h]
  \centering
  \begin{tikzpicture}
    \node (A){$L(G)\ \ $};
    \node (C)[node distance=2cm, right of=A]{$\ V\ $};
    \node (E)[node distance=2cm, right of=C]{$\ L(G')$};
    \node (F)[node distance=1.3cm, below of=A]{$C_L(D, G)$};
    \node (G)[node distance=1.3cm, below of=C]{$\ \mathbb{F}_q^n$};
    \draw[->,font=\scriptsize] (A) to node [right]{$\ev_D$} (F);
    \draw[->, font=\scriptsize] (C) to node[right]{$\ev_D$} node [above, rotate=90]{$\backsim$} (G);
    \draw[right hook->, font=\scriptsize] (A) to node [right]{ } (C);
    \draw[right hook->, font=\scriptsize] (C) to node [right]{ } (E);
    \draw[right hook->, font=\scriptsize] (F) to node [right]{ } (G);
  \end{tikzpicture}
  \label{fig:existing}
\end{figure}
\medskip
$ $\\
We can now see the received vector $\yv$ and the error vector $\error$ as vectors of the evaluation of two functions $f_{\error}, f_{\yv}\in L(G')$. We denote by $D_{\error}$ the divisor such that $0\le D_{\error}\le D$ and $P_i\in\supp(D_{\error})$ if and only if $i\in\supp(\error)$ (i.e. $e_i\ne 0$). The aim of Ehrhard's and many other decoding algorithms for algebraic geometry codes is to introduce an additional divisor $F$ with $t+2g\le\deg(F)<n$ and try to compute the space
\begin{equation}\label{loc_space}
L(F-D_{\error}).
\end{equation} 
Indeed the space $L(F-D_{\error})$ is composed by all functions in $L(F)$ locating in some way the error positions. In particular, if $\supp(F)\cap \supp(D_{\error})=\emptyset$, then $L(F-D_{\error})$ is composed by all functions in $L(F)$ which vanish at $\{P_i\mid i\in \supp(\error)\}$. However, we will see soon that this last hypothesis on the support of $F$ is not necessary to the algorithm and that, by adding a simple assumption on the degree of $F$, the very knowledge of an arbitrary nonzero $f\in L(F-D_{\error})$ makes possible to recover $f_{\error}$. First let us consider $\Lambda\in L(F-D_{\error})$. We have
\[\Lambda f_{\yv}=\Lambda f_{\cv}+\Lambda f_{\error},\]
where $\Lambda f_{\cv}\in L(F+G)$ and $\Lambda f_{\error}\in L(F+G'-D)$. One can note that, if it is possible to isolate the second part, that is $\Lambda f_{\error}$, then by dividing by $\Lambda$ we can recover $f_{\error}$. To do so, we want to add an assumption in order to have uniqueness of the decomposition of $\Lambda f_{\yv}$. Hence, we now introduce the following map
\[\delta_{\yv}:\map{L(F)}{L(F+G')}
  {\Lambda}{\Lambda f_{\yv}.}\]
Let us analyse the set $L(F+G')$. One can note that $L(F+G)$, $L(F+G'-D)\subseteq L(F+G')$. Moreover, since $G'\ge G$ and $\supp(G')\cap\mathcal{P}=\emptyset$, we have $L(F+G)\cap L(F+G'-D)=L(F+G-D)$. The assumption we need is then the following one.
\begin{assumption}\label{NB} We assume $\deg(G+F)<n$.
\end{assumption}
\medskip
\noindent
Now, thanks to Assumption \ref{NB}, we get $L(F+G-D)=\{0\}$ and, for a certain vector space $Z_1\subset L(F+G')$,
\begin{equation}\label{dec1}
L(F+G')=L(F+G)\oplus L(F+G'-D)\oplus Z_1.
\end{equation}
\begin{thm}\label{thm_princ}
The very knowledge of an arbitrary $\Lambda\in L(F-D_{\error})\setminus \{0\}$ makes possible to recover $f_{\error}$.
\end{thm}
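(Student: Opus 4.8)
The plan is to turn the knowledge of one nonzero $\Lambda\in L(F-D_{\error})$ into an extraction by linear algebra followed by a single division in $\mathbb{F}_q(\X)$.

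First I would locate the two pieces of $\delta_{\yv}(\Lambda)=\Lambda f_{\yv}$. Since $\Lambda\in L(F-D_{\error})\subseteq L(F)$ and $f_{\cv}\in L(G)$, we have $(\Lambda f_{\cv})=(\Lambda)+(f_{\cv})\ge-(F+G)$, so $\Lambda f_{\cv}\in L(F+G)$. For the error part, the crucial observation is that $f_{\error}\in V\subseteq L(G')$ vanishes at every $P_i$ with $i\notin\supp(\error)$, hence $f_{\error}\in L(G'-D+D_{\error})$: at such $P_i$ one gains a zero, at the remaining $P_i$ nothing is lost because $-D+D_{\error}$ has local coefficient $0$ there, and away from $\mathcal{P}$ the divisor $-D+D_{\error}$ is trivial. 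Combined with $(\Lambda)\ge-(F-D_{\error})$ this gives $(\Lambda f_{\error})\ge-(F+G'-D)$, i.e. $\Lambda f_{\error}\in L(F+G'-D)$. Finally $\Lambda f_{\yv}\in L(F)L(G')\subseteq L(F+G')$.

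Next I would use the decomposition \eqref{dec1}. By Assumption \ref{ass0} we have $\Lambda f_{\yv}=\Lambda f_{\cv}+\Lambda f_{\error}$, which by the previous step expresses $\Lambda f_{\yv}$ as the sum of an element of $L(F+G)$, an element of $L(F+G'-D)$, and $0\in Z_1$. Because \eqref{dec1} is a direct sum --- this is where Assumption \ref{NB} is used, via $L(F+G-D)=\{0\}$ --- these three components are uniquely determined by $\Lambda f_{\yv}$; writing $\pi$ for the projection of $L(F+G')$ onto $L(F+G'-D)$ along $L(F+G)\oplus Z_1$, we obtain $\pi(\Lambda f_{\yv})=\Lambda f_{\error}$. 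In practice one fixes bases of $L(F+G)$, $L(F+G'-D)$ and $Z_1$, forms the induced basis of $L(F+G')$, expands $\Lambda f_{\yv}$ in it (this only requires $\Lambda$ and the known function $f_{\yv}$), and collects the coordinates belonging to $L(F+G'-D)$, which assemble into $\Lambda f_{\error}$.

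Finally, since $\Lambda\neq0$ in $\mathbb{F}_q(\X)$, division yields $f_{\error}=\pi(\delta_{\yv}(\Lambda))/\Lambda$, which recovers $f_{\error}$, and therefore also $\error=\ev_D(f_{\error})$ and $\cv=\yv-\error$. I expect no genuine obstacle here: essentially all of the content is in the first step, and above all in the inclusion $f_{\error}\in L(G'-D+D_{\error})$, which is exactly what pushes $\Lambda f_{\error}$ into the middle summand of \eqref{dec1}. Granting that, together with the directness of \eqref{dec1}, the statement is immediate.
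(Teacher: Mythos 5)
Your proof is correct and follows essentially the same route as the paper: identify $\Lambda f_{\cv}\in L(F+G)$ and $\Lambda f_{\error}\in L(F+G'-D)$, use the directness of the decomposition \eqref{dec1} (enforced by Assumption \ref{NB}) to conclude $\pi(\Lambda f_{\yv})=\Lambda f_{\error}$, then divide by $\Lambda\ne 0$. The only difference is that you spell out the inclusion $f_{\error}\in L(G'-D+D_{\error})$ and the resulting memberships in full, where the paper simply refers back to the discussion in \S\ref{subs:found1}.
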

\begin{proof}
Let us denote by $\pi$ the projection $L(F+G')\rightarrow L(F+G'-D)$ with respect to the decomposition in \eqref{dec1}. As said before, for any $\Lambda\in L(F-D_{\error})$ we have $\Lambda f_{\cv}\in L(F+G)$ and $\Lambda f_{\error}\in L(F+G'-D)$, \\
that is
\begin{equation}\label{propL(F-De)}
\Lambda f_{\yv}\in L(F+G)\oplus L(F+G'-D).
\end{equation}
In particular, since the decomposition is unique, the equality $\Lambda f_{\error}=\pi(\Lambda f_{\yv})$ holds. Therefore, if $\Lambda\ne 0$, we can easily recover $f_{\error}=\frac{\pi(\Lambda f_{\yv})}{\Lambda}\cdot$
\end{proof}
\begin{rem}
One can note that Theorem \ref{thm_princ} holds whenever $\Lambda$ belongs to a space different from $\{0\}$ of the form 
\[L(F-D_{\error'}),\]
where $\supp(\error')\supseteq\supp(\error)$. However, in order to have $L(F-D_{\error'})\ne\{0\}$, we need the support of $\error'$ to be not too large. Indeed, let us consider the designed distance $d^*=n-\deg G$ of the code and suppose $\w(\error')\ge d^*$. By Assumption \ref{NB},  we have 
\[\deg (F - D_{\error'})=\deg F - \w(\error')\le \deg F-n+\deg G<0.\]
Hence, by Proposition \ref{prop:preRR}, $L(F-D_{\error'})=\{0\}$. 
\end{rem}

\subsection{The algorithm}
The problem now is to find a way to compute $L(F-D_{\error})$, without knowing the support of the error vector. In Ehrhard's paper for decoding up to half the designed distance \cite{E93}, the idea is to compute the space
\begin{equation}\label{S(F)_iniz}
S(F)\eqdef \{f\in L(F)\mid \delta_{\yv}(f)\in L(F+G)\oplus L(F+G'-D)\},
\end{equation}
which fulfills the inclusion $L(F-D_{\error})\subseteq S(F)$ (see \eqref{propL(F-De)}), and adapt the divisor $F$ to have the equality. The main result which makes that possible is the following:
\begin{prop}\label{prop:chiave}
Assume $L(F-D_{\error})\ne \{0\}$ and $\deg(F)\le d^{\ast}-g-1$. Then one and only one of the following statements holds:
\begin{itemize}
\item $S(F)=L(F-D_{\error})$;
\item There exists a rational point $P\in\supp(D)$ with $\dim S(F-P)\le \dim S(F)-2$.
\end{itemize}
\end{prop}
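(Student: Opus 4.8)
\emph{Proof plan.} Throughout write $f_{\cv},f_{\yv},f_{\error}\in L(G')$ for the functions introduced in §\ref{subs:found1}, and recall that $L(F-D_{\error})\subseteq S(F)$ always, by \eqref{propL(F-De)}. The two alternatives are mutually exclusive: if $S(F)=L(F-D_{\error})$, then for any $P\in\supp(D)$ the same computation as in \eqref{propL(F-De)} (with $F$ replaced by $F-P$) gives $L(F-P-D_{\error})\subseteq S(F-P)$, hence by Proposition \ref{prop:ell}(\textit{i}), $\dim S(F-P)\ge\ell(F-P-D_{\error})\ge\ell(F-D_{\error})-1=\dim S(F)-1$, so the second alternative fails. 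It remains to show that when $L(F-D_{\error})\subsetneq S(F)$ such a point $P$ exists.

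I would first package the gap $S(F)/L(F-D_{\error})$ into a linear map. For $f\in S(F)$ let $\delta_{\yv}(f)=a_f+b_f$ be its unique decomposition along \eqref{dec1} (uniqueness, and $L(F+G)\cap L(F+G'-D)=L(F+G-D)=\{0\}$, come from Assumption \ref{NB}), and put $\phi(f)\eqdef a_f-f f_{\cv}=f f_{\error}-b_f$. Since $f f_{\cv}\in L(F+G)$ and, as $f_{\error}\in L\big(G'-(D-D_{\error})\big)$, also $f f_{\error}\in L\big(F+G'-D+D_{\error}\big)$, we get $\phi(f)\in L(F+G)\cap L\big(F+G'-D+D_{\error}\big)=L\big(F+G-(D-D_{\error})\big)$, and $\ev_D(\phi(f))=\ev_D(f)\ast\error$. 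Moreover $\ker\phi=L(F-D_{\error})$: the inclusion $\supseteq$ is the computation in the proof of Theorem \ref{thm_princ}, while $\phi(f)=0$ forces $f f_{\error}\in L(F+G'-D)$, which at every error point $P_j$ reads $v_{P_j}(f)\ge 1-v_{P_j}(F)$ (as $v_{P_j}(f_{\error})=0$), i.e. $f\in L(F-D_{\error})$. Hence $\dim S(F)=\ell(F-D_{\error})+\dim\iim\phi$ and, by assumption, $\iim\phi\ne\{0\}$.

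Next I would describe $S(F-P)$ for an evaluation point $P\notin\supp F$ (the other points are handled the same way): using $S(F-P)\subseteq S(F)$ and the uniqueness of the decomposition, $f\in S(F-P)$ iff $f\in L(F-P)$ \emph{and} $b_f\in L(F+G'-D-P)$; since every $b_f$ already vanishes at $P$, these are exactly the two linear conditions $f(P)=0$ and $\lambda_P(f)=0$ on $S(F)$, where $\lambda_P(f)$ is the leading coefficient of $b_f$ at $P$. Thus $\dim S(F)-\dim S(F-P)=\rk(\ev_P,\lambda_P)\in\{0,1,2\}$, and it suffices to find $P$ with this rank equal to $2$. Suppose not, i.e. the rank is $\le 1$ at every $P\in\supp(D)$. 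The key input is that $|F-D_{\error}|$ has few base points among the evaluation points: if $B\ge 0$ is the divisor of these, then $L(F-D_{\error}-B)=L(F-D_{\error})\ne\{0\}$, so $\ell(F-D_{\error})=\ell(F-D_{\error}-B)\le\ell(F-D_{\error})-\ell(B)+1$ by Proposition \ref{prop:ell}(\textit{ii}), whence $\ell(B)\le 1$ and $\deg B\le g$ by Proposition \ref{prop:preRR} (one may equivalently invoke Clifford's theorem, Theorem \ref{thm:clifford}). Hence at least $(n-t)-g$ non-error points $P_i$ are not base points, so $\ev_{P_i}$ is nonzero on $L(F-D_{\error})\subseteq S(F)$; rank $\le 1$ forces $\ker\ev_{P_i}\subseteq\ker\lambda_{P_i}$, i.e. $S(F-P_i)=\ker\ev_{P_i}$; and since $\ker\phi\cap\ker\ev_{P_i}=L(F-P_i-D_{\error})$ and $v_{P_i}(\phi(f))\ge 2$ for every $f\in S(F-P_i)$ (because $v_{P_i}(ff_{\error})\ge 2$ and $v_{P_i}(b_f)\ge 2$), a short dimension count gives $\iim\phi=\phi(\ker\ev_{P_i})\subseteq L\big(F+G-(D-D_{\error})-P_i\big)$.

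Intersecting over these $(n-t)-g$ points yields $\iim\phi\subseteq L(A)$ with $A=F+G-(D-D_{\error})-\sum_i P_i$ and $\deg A\le\deg F+\deg G-2(n-t)+g$. Substituting the hypotheses $\deg F\le d^{*}-g-1=n-\deg G-g-1$ and $t\le\tfrac{d^{*}-1}{2}=\tfrac{n-\deg G-1}{2}$ gives $\deg A\le-\deg G-2<0$, so $L(A)=\{0\}$ by Proposition \ref{prop:preRR}, contradicting $\iim\phi\ne\{0\}$. Hence some $P\in\supp(D)$ satisfies $\dim S(F-P)\le\dim S(F)-2$. I expect the main obstacle to be precisely this closing argument: isolating the two linear functionals carving $S(F-P)$ out of $S(F)$, bounding the base locus of $|F-D_{\error}|$ by $g$ (where Proposition \ref{prop:ell}(\textit{ii})/Clifford enters), and checking that the two numerical hypotheses are exactly tight enough to force $\deg A<0$; everything else is routine bookkeeping with Riemann--Roch dimensions.
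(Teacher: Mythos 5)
Your map $\phi(f)=a_f-ff_{\cv}=ff_{\error}-b_f$, with $\ker\phi=L(F-D_{\error})$ and $\iim\phi\subseteq L(F+G-(D-D_{\error}))$, is exactly the exact sequence of Proposition~\ref{prop:exact_seq} (your $\phi$ is the paper's $\Phi$), and the observation that $S(F-P)$ is cut out of $S(F)$ by the two functionals $\ev_P,\lambda_P$ is the right way to read off $\dim S(F)-\dim S(F-P)\le 2$. Where you genuinely diverge from the paper is the closing count: following~\cite{E93}, the paper (via Lemmas~\ref{lem1} and~\ref{lem2}, which are spelled out for the proof of the generalisation Proposition~\ref{prop:chiave2}) works over the $t$ \emph{error} positions, showing at most $\deg(F+D_{\error})-d^*$ of them fail the first strict inclusion $S(F)\cap L(F-P)\subsetneq S(F)$ and at most $g$ fail the second $S(F-P)\subsetneq S(F)\cap L(F-P)$, leaving $d^*-\deg F-g\ge 1$ good error points. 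You instead work over the $n-t$ \emph{non-error} positions, discard at most $g$ base points of $|F-D_{\error}|$, and show that if the rank is $\le 1$ at all of them then $\iim\phi\subseteq L(A)$ for a divisor $A$ of small degree.

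The gap is in the final numerics. You need $\deg A<0$, and your bound gives $\deg A\le\deg F+\deg G-2(n-t)+g$. Under only the two hypotheses of the proposition---$L(F-D_{\error})\ne\{0\}$ and $\deg F\le d^*-g-1$---this is $\le 2t-n-1$, which need not be negative ($t$ can be as large as $d^*-g-1=n-\deg G-g-1$, so $\deg A$ can be as large as $n-2\deg G-2g-3\ge 0$ for low-rate codes). To force $\deg A<0$ you explicitly substitute $t\le\frac{d^*-1}{2}$, but this is \emph{not} a hypothesis of Proposition~\ref{prop:chiave}; it is a standing assumption of~\S\ref{sec:Ehrhard}, and it is deliberately dropped in the identically-hypothesised Proposition~\ref{prop:chiave2}, whose entire purpose in \S\ref{sec:l2} is the regime $t>\frac{d^*-1}{2}$. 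Ehrhard's error-position count closes without any extra constraint on $t$ because, once $S(F)\ne L(F-D_{\error})$, the exact sequence forces $\deg(F+D_{\error})-d^*\ge 0$, which is all Lemma~\ref{lem1} needs. Your dual count runs out of non-error points to subtract precisely when $t$ is large, so as written it establishes a strictly weaker statement. (A secondary point: the adaptation step $F_j=F_{j-1}-P$ puts previous evaluation points into $\supp F_j$ with negative coefficient, so $\ev_P$ is no longer the correct first functional there and the base-point estimate has to be restated; the parenthetical ``the other points are handled the same way'' conceals a genuine rewording that the error-position route avoids, since Lemmas~\ref{lem1} and~\ref{lem2} carry no condition on $\supp F$.)
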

\medskip
\noindent
For the proof, see \cite{E93}. We emphasise that there are no hypotheses on the support of $F$ and that in particular the result remains true if $\supp(F)\cap\mathcal{P}\ne \emptyset$. Proposition \ref{prop:chiave} tells us that either we already have $S(F)=L(F-D_{\error})$, or we can construct a new divisor $F-P$ for some $P\in\supp(D)$, such that $\dim S(F-P)$ decreases quite fast with respect to $\dim S(F)$. Furthermore, we have $L(F-P-D_{\error})\subseteq L(F-D_{\error})$, where in particular
\begin{equation}\label{dis:ell}
\ell(F-D_{\error})-1\le \ell(F-P-D_{\error})\le \ell(F-D_{\error}).
\end{equation}
Let us denote by $\{P_{i_m}\}_{m\ge 1}$ the sequence of found points in the support of $D$ such that for any $m\ge 0$ 
\[\dim S(F-\sum_{j=1}^m P_{i_j}-P_{i_{m+1}})\le \dim S(F-\sum_{j=1}^m P_{i_j})-2,\]
and by $F_{m+1}$ the divisor $F_m-P_{i_{m+1}}$, where $F_0\eqdef F$. Hence, since the sequence 
\[\dim S(F)\ge\dim S(F_1) \ge\dim S(F_2) \ge\cdots\] decreases faster than the sequence 
\[\ell(F-D_{\error})\ge \ell(F_1-D_{\error})\ge \ell(F_2-D_{\error})\ge\cdots\]
and $L(F_m-D_{\error})\subseteq S(F_m)$ for any $m$, there will be an equality for some $m$. However, we need to have enough elements $F_m$ in the sequence, in order to have the equality $L(F_m-D_{\error})=S(F_m)$ for one of them. Therefore, the two hypotheses of Proposition \ref{prop:chiave} $L(F_m-D_{\error})\ne \{0\}$ and $\deg(F_m)\le d^*-g-1$ need to be fulfilled for several $F_m$'s, in order to build a long enough sequence. The result in Theorem \ref{thm:halfmd} will emphasise the role of the hypothesis $t\le \frac{d^*-1}{2}$ in this problem. In order to prove this theorem, we will need the following proposition and corollary.
\begin{prop}\label{prop:exact_seq}
Let $\pi$ be the projection $L(F+G')\rightarrow L(F+G'-D)$ with respect to the decomposition in \eqref{dec1}. There is an exact sequence of vector spaces
\begin{figure}[h]
  \centering
  \begin{tikzpicture}
    \node (A){$0$};
    \node (B)[node distance=2cm, right of=A]{$L(F-D_{\error})$};
    \node (C)[node distance=2.7cm, right of=B]{$S(F)$};
    \node (D)[node distance=3cm, right of=C]{$L(G+F-D+D_{\error})$};
    \draw[->,font=\scriptsize] (A) to node [above]{$ $} (B);
    \draw[->,font=\scriptsize] (B) to node [above]{$i$} (C);
    \draw[->,font=\scriptsize] (C) to node [above]{$\Phi$} (D);
  \end{tikzpicture}
  \label{fig:existing2}
\end{figure} 
\medskip
\\
where for any $\Gamma\in S(F)$, $\Phi(\Gamma)=\Gamma f_{\error}-\pi(\Gamma f_{\yv})$.
\end{prop}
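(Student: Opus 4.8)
The plan is to verify directly that the three claims defining an exact sequence hold: (a) the map $\Phi$ is well defined and $\mathbb{F}_q$-linear on $S(F)$; (b) the inclusion $i:L(F-D_{\error})\hookrightarrow S(F)$ is injective with image contained in $\Ker\Phi$; (c) conversely $\Ker\Phi\subseteq L(F-D_{\error})$. Linearity of $\Phi$ is immediate since $\Gamma\mapsto\Gamma f_{\error}$ and $\Gamma\mapsto\pi(\Gamma f_{\yv})$ are both linear in $\Gamma$ (the latter because $\delta_{\yv}$ and the projection $\pi$ are linear). The injectivity of $i$ is trivial. So the heart of the argument is to check the well-definedness of $\Phi$ and the two-sided characterization of its kernel.

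First I would check well-definedness, i.e.\ that for $\Gamma\in S(F)$ we really have $\Phi(\Gamma)=\Gamma f_{\error}-\pi(\Gamma f_{\yv})\in L(G+F-D+D_{\error})$. Write $\Gamma f_{\yv}=\Gamma f_{\cv}+\Gamma f_{\error}$. By definition of $S(F)$, $\delta_{\yv}(\Gamma)=\Gamma f_{\yv}$ lies in $L(F+G)\oplus L(F+G'-D)$, so under the decomposition \eqref{dec1} its $Z_1$-component vanishes and $\pi(\Gamma f_{\yv})$ is the $L(F+G'-D)$-component of $\Gamma f_{\yv}$. Now $\Gamma\in L(F)$ and $f_{\error}\in L(G')$, but more precisely $f_{\error}$ has poles only where the error is, bounded by $G'$, and vanishes off the support of $D_{\error}$ is not quite it — rather, $\ev_D(f_{\error})=\error$ forces, via the isomorphism ${\ev_D}_{|V}$ and the definition of $D_{\error}$, that $f_{\error}\in L(G'-(D-D_{\error}))$, hence $\Gamma f_{\error}\in L(F+G'-D+D_{\error})$. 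Writing $\Gamma f_{\yv}=\Gamma f_{\cv}+\Gamma f_{\error}$ with $\Gamma f_{\cv}\in L(F+G)$, and comparing with the decomposition $\Gamma f_{\yv}=a+b$ with $a\in L(F+G)$, $b=\pi(\Gamma f_{\yv})\in L(F+G'-D)$, subtraction gives $\Gamma f_{\error}-\pi(\Gamma f_{\yv})=a-\Gamma f_{\cv}\in L(F+G)$; combined with $\Gamma f_{\error}\in L(F+G'-D+D_{\error})$ and $\pi(\Gamma f_{\yv})\in L(F+G'-D)\subseteq L(F+G'-D+D_{\error})$, we get $\Phi(\Gamma)\in L(F+G)\cap L(F+G'-D+D_{\error})=L(\min\{F+G,\,F+G'-D+D_{\error}\})=L(F+G-D+D_{\error})$, using $G\le G'$ and that $G'$ and $D$ have disjoint supports so the minimum at each evaluation point is $v_P(G)-v_P(D)+v_P(D_{\error})$ and elsewhere is $v_P(G)$. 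That is exactly $L(G+F-D+D_{\error})$.

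Next, the kernel. If $\Gamma\in L(F-D_{\error})$, then by \eqref{propL(F-De)} $\Gamma f_{\yv}\in L(F+G)\oplus L(F+G'-D)$ so $\Gamma\in S(F)$, and moreover by the uniqueness of the decomposition (Theorem \ref{thm_princ}) $\pi(\Gamma f_{\yv})=\Gamma f_{\error}$, whence $\Phi(\Gamma)=0$; so $\iim(i)\subseteq\Ker\Phi$. Conversely, suppose $\Gamma\in S(F)$ with $\Phi(\Gamma)=0$, i.e.\ $\Gamma f_{\error}=\pi(\Gamma f_{\yv})\in L(F+G'-D)$. Since $\ev_D(\Gamma f_{\error})=\ev_D(\Gamma)\ast\error$, and $\Gamma f_{\error}\in L(F+G'-D)$ means this evaluation vector is zero, we get $\ev_D(\Gamma)\ast\error=\zerov$, so $\Gamma(P_i)=0$ for every $i\in\supp(\error)$; equivalently $\Gamma\in L(F-D_{\error})$ provided $\supp(F)\cap\supp(D_{\error})=\emptyset$, and in general the cleaner route is: $\Gamma f_{\error}\in L(F+G'-D)$ together with $f_{\error}\in L(G'-(D-D_{\error}))$ and the fact that $\ev_D$ is injective on $V\supseteq L(F+G'-D)$-related spaces forces, after dividing, $\Gamma\in L(F-D_{\error})$. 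This gives $\Ker\Phi\subseteq\iim(i)$ and closes exactness.

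The main obstacle I anticipate is the bookkeeping in the well-definedness step, specifically pinning down the exact divisor bound satisfied by $f_{\error}$ (that $f_{\error}\in L(G'-D+D_{\error})$, not merely $L(G')$) and then correctly computing the minimum of divisors $\min\{F+G,\ F+G'-D+D_{\error}\}=F+G-D+D_{\error}$; this uses $G\le G'$ and $\supp(G')\cap\mathcal{P}=\emptyset$ at each place of $D$, and one must be careful that off $\supp(D)$ the minimum is $F+G$ as claimed. The kernel computations, by contrast, are largely a restatement of Theorem \ref{thm_princ} and its proof together with the injectivity of $\ev_D$ on the relevant spaces, so I expect them to be short.
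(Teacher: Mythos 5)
Your well-definedness argument is correct: you rightly pin down $f_{\error}\in L(G'-D+D_{\error})$ (from $\ev_{\mathcal{P}}(f_{\error})=\error$ and $\supp(G')\cap\mathcal{P}=\emptyset$), observe that $\Phi(\Gamma)$ lies simultaneously in $L(F+G)$ and in $L(F+G'-D+D_{\error})$, and correctly compute $\min\{F+G,\,F+G'-D+D_{\error}\}=F+G-D+D_{\error}$. The inclusion $\iim(i)\subseteq\Ker\Phi$ is also fine, being a restatement of the uniqueness in Theorem~\ref{thm_princ}.

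The gap is in the reverse inclusion $\Ker\Phi\subseteq L(F-D_{\error})$. Your first argument via $\ev_D(\Gamma f_{\error})=\ev_D(\Gamma)\ast\error=\zerov$ only makes sense when $\Gamma$ has no pole at the evaluation points, i.e.\ when $\supp(F)\cap\mathcal{P}=\emptyset$; you acknowledge this, but the proposition carries no hypothesis on $\supp(F)$ and is in fact applied throughout the adaptation process to divisors $F_m=F-\sum P_{i_j}$ whose support meets $\mathcal{P}$, so the general case is the one that matters. Your ``cleaner route'' for this case --- ``$\ev_D$ is injective on $V\supseteq L(F+G'-D)$-related spaces forces, after dividing, $\Gamma\in L(F-D_{\error})$'' --- is not a coherent argument: $L(F+G'-D)$ is not a subspace of $V$, and injectivity of $\ev_D$ on $V$ says nothing directly about the valuations of $\Gamma$. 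What is actually needed is a local valuation argument: for any $P\in\supp(D_{\error})$ one has $v_P(f_{\error})=0$ (since $\supp(G')\cap\mathcal{P}=\emptyset$ makes $f_{\error}$ regular at $P$ and $f_{\error}(P)=e_i\ne 0$); then $\Gamma f_{\error}\in L(F+G'-D)$ with $v_P(G')=0$ and $v_P(D)=1$ gives $v_P(\Gamma f_{\error})\ge -v_P(F)+1$, hence $v_P(\Gamma)=v_P(\Gamma f_{\error})-v_P(f_{\error})\ge -v_P(F)+1$, which is exactly $\Gamma\in L(F-D_{\error})$. This is the argument the paper carries out in Appendix~\ref{sec:appendix} in its proof of the function version of \cite[Proposition~1]{E92}, and it is what you need here to close exactness in the general case.
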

\begin{proof}
See \cite{E93}.
\end{proof}
\begin{cor}\label{cor:dim}
We have $\ell(F-D_{\error})\le\dim S(F)\le \ell(F-D_{\error})+\ell(G+F-D+D_{\error})$. 
\end{cor}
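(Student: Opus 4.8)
The plan is to read off both inequalities directly from the exact sequence in Proposition~\ref{prop:exact_seq}, using only the rank--nullity theorem. Recall the sequence
\[
0 \longrightarrow L(F-D_{\error}) \xrightarrow{\ i\ } S(F) \xrightarrow{\ \Phi\ } L(G+F-D+D_{\error}),
\]
where $i$ is the inclusion and $\Phi(\Gamma)=\Gamma f_{\error}-\pi(\Gamma f_{\yv})$. Exactness at $L(F-D_{\error})$ means $i$ is injective, and exactness at $S(F)$ means $\Ker\Phi=\iim i=L(F-D_{\error})$ (identifying $L(F-D_{\error})$ with its image under $i$).

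First I would apply rank--nullity to the linear map $\Phi\colon S(F)\to L(G+F-D+D_{\error})$. This gives
\[
\dim S(F)=\dim\Ker\Phi+\dim\iim\Phi=\ell(F-D_{\error})+\dim\iim\Phi.
\]
Since $\dim\iim\Phi\ge 0$, we immediately get the lower bound $\dim S(F)\ge\ell(F-D_{\error})$. Since $\iim\Phi$ is a subspace of $L(G+F-D+D_{\error})$, we have $\dim\iim\Phi\le\ell(G+F-D+D_{\error})$, which yields the upper bound
\[
\dim S(F)\le\ell(F-D_{\error})+\ell(G+F-D+D_{\error}).
\]
Combining the two gives exactly the claimed chain of inequalities.

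There is essentially no obstacle here: the corollary is a formal consequence of the existence of the exact sequence, and the only facts used are that $i$ is injective (so its image has dimension $\ell(F-D_{\error})$) and that $\iim\Phi$ sits inside the stated Riemann--Roch space. The one point worth stating carefully is the identification $\dim\Ker\Phi=\ell(F-D_{\error})$, which is precisely exactness at $S(F)$ together with injectivity of $i$; everything else is the rank--nullity theorem applied to a map between finite-dimensional $\mathbb{F}_q$-vector spaces (all the spaces involved are finite-dimensional since they are Riemann--Roch spaces). Hence the proof is a two-line argument once Proposition~\ref{prop:exact_seq} is in hand.
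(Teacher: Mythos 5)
Your proof is correct and is essentially the paper's intended argument: the paper states the corollary immediately after Proposition~\ref{prop:exact_seq} with no separate proof, precisely because it follows by the rank--nullity reading of the exact sequence that you spell out.
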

\begin{thm}\label{thm:halfmd}
Let $\X$ be a curve of genus $g$ and $C=\AGcode{\X}{\mathcal{P}}{G}$ an algebraic geometry code on $\X$ with designed distance $d^*\ge 6g$. Let $F$ be any divisor of degree $\deg F=t+2g$. Then Algorithm 1 corrects every vector $\yv=\cv+\error$ with $t=\w(\error)\le\frac{d^*-1}{2}\cdot$
\end{thm}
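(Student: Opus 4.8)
The plan is to show that when $t \le \frac{d^*-1}{2}$ and $d^* \ge 6g$, the adaptation process built from Proposition \ref{prop:chiave} is guaranteed to terminate successfully, i.e. after finitely many steps it produces a divisor $F_m$ with $S(F_m) = L(F_m - D_{\error}) \ne \{0\}$, from which $f_{\error}$ is recovered via Theorem \ref{thm_princ}. The whole argument is a counting comparison between two monotone sequences of dimensions, so I would first set up the relevant numerical bounds and then check that the hypotheses of Proposition \ref{prop:chiave} survive long enough along the sequence.

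\textbf{Step 1: The starting divisor is admissible.} With $\deg F = t + 2g$, I must check the two standing hypotheses needed to run the machinery: Assumption \ref{NB}, $\deg(G+F) < n$, and $\deg F \le d^* - g - 1$ (the hypothesis of Proposition \ref{prop:chiave}), plus $L(F - D_{\error}) \ne \{0\}$. Since $\w(\error) = t$, we have $\deg(F - D_{\error}) = t + 2g - t = 2g > 2g-2$, so by Proposition \ref{prop:preRR} $\ell(F - D_{\error}) = \deg(F-D_{\error}) - g + 1 = g+1 \ge 1$, giving $L(F - D_{\error}) \ne \{0\}$. For Assumption \ref{NB}: $\deg(G+F) = \deg G + t + 2g = (n - d^*) + t + 2g \le n - d^* + \frac{d^*-1}{2} + 2g = n - \frac{d^*+1}{2} + 2g$, which is $< n$ precisely when $d^* > 4g - 1$; this holds since $d^* \ge 6g$. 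For $\deg F \le d^* - g - 1$: $\deg F = t + 2g \le \frac{d^*-1}{2} + 2g$, and $\frac{d^*-1}{2} + 2g \le d^* - g - 1$ iff $6g + 1 \le d^*$, again guaranteed (with room to spare).

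\textbf{Step 2: Bound the length of the sequence.} Suppose we have built $F_0 = F, F_1, \dots, F_m$ with each step dropping $\dim S$ by at least $2$. Since each $F_{j+1} = F_j - P_{i_{j+1}}$ with $P_{i_{j+1}} \in \supp(D)$, and by \eqref{dis:ell} the quantity $\ell(F_j - D_{\error})$ drops by at most $1$ per step, while $\dim S(F_j)$ drops by at least $2$ per step, after $m$ steps we have $\dim S(F_m) \le \dim S(F_0) - 2m$ and $\ell(F_m - D_{\error}) \ge \ell(F_0 - D_{\error}) - m = (g+1) - m$. Using Corollary \ref{cor:dim} at the top, $\dim S(F_0) \le \ell(F_0 - D_{\error}) + \ell(G + F_0 - D + D_{\error})$; here $\deg(G + F_0 - D + D_{\error}) = (n - d^*) + (t + 2g) - n + t = 2t + 2g - d^* \le (d^*-1) + 2g - d^* = 2g - 1$, so by Proposition \ref{prop:preRR} (item 2, or directly if the degree is negative) one bounds $\ell(G + F_0 - D + D_{\error})$ — if the degree is negative it is $0$; in general $\ell(\cdot) \le \deg(\cdot) + 1 \le 2g$ is too weak, so instead I would invoke Clifford's Theorem \ref{thm:clifford} to get $\ell(G+F_0-D+D_{\error}) \le 1 + g$ (when $0 \le 2t+2g-d^* \le 2g-2$) — hence $\dim S(F_0) \le (g+1) + (g+1) = 2g+2$. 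So as long as $2m < \dim S(F_0) \le 2g+2$, i.e. $m \le g$, the alternative of Proposition \ref{prop:chiave} can keep firing; but after at most $g$ steps we would reach $\dim S(F_m) < 2$ while $\ell(F_m - D_{\error})$ is still $\ge 1$ — contradicting $\ell(F_m - D_{\error}) \le \dim S(F_m)$ from Corollary \ref{cor:dim} unless we have already hit the first alternative $S(F_j) = L(F_j - D_{\error})$ at some $j \le g$.

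\textbf{Step 3: The intermediate divisors stay admissible.} The previous step only works if Proposition \ref{prop:chiave} is applicable to every $F_j$ along the way, which requires $L(F_j - D_{\error}) \ne \{0\}$ and $\deg F_j \le d^* - g - 1$ for all $j$ up to the termination index ($\le g$). Since $\deg F_j = \deg F - j = t + 2g - j$ decreases, the inequality $\deg F_j \le d^* - g - 1$ only gets easier, so it holds throughout by Step 1. For $L(F_j - D_{\error}) \ne \{0\}$: $\deg(F_j - D_{\error}) = t + 2g - j - t = 2g - j \ge 2g - g = g \ge 0$ for $j \le g$; when $\deg(F_j - D_{\error}) \ge 2g-1$... more carefully, $\ell(F_j - D_{\error}) \ge \deg(F_j - D_{\error}) - g + 1 = (2g - j) - g + 1 = g - j + 1 \ge 1$ for $j \le g$. (Here I use Proposition \ref{prop:preRR}, item 2, the Riemann inequality, which needs no degree hypothesis.) I would also note Assumption \ref{NB} is inherited since $\deg(G + F_j) \le \deg(G + F_0) < n$. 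Thus every $F_j$ for $j = 0, \dots, g$ satisfies all hypotheses, so the dichotomy of Proposition \ref{prop:chiave} applies at each step and, by the counting argument of Step 2, the first alternative $S(F_j) = L(F_j - D_{\error})$ must occur for some $j \le g$.

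\textbf{Step 4: Conclusion.} At that index $j$ we have $S(F_j) = L(F_j - D_{\error})$, which is nonzero by Step 3, so the algorithm picks any $\Lambda \in S(F_j) \setminus \{0\} = L(F_j - D_{\error}) \setminus \{0\}$ and applies Theorem \ref{thm_princ} (with $F$ there replaced by $F_j$; note $\deg F_j \ge 2g$ — indeed $\deg F_j = t + 2g - j \ge 2g$ as $t \ge j$? this needs $t \ge g$, which may fail for small $t$, so I would instead check the hypotheses of Theorem \ref{thm_princ} directly, namely $t + 2g \le \deg F_j < n$ rephrased as whatever that theorem actually needs — re-examining, Theorem \ref{thm_princ} only used Assumption \ref{NB} and the decomposition \eqref{dec1}, both valid for $F_j$) to reconstruct $f_{\error}$, hence $\error$, hence $\cv = \yv - \error$.

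\textbf{Main obstacle.} The delicate point is Step 2: getting a sharp enough upper bound on $\dim S(F_0)$ via Corollary \ref{cor:dim} and controlling $\ell(G + F_0 - D + D_{\error})$ — this is exactly where the hypothesis $t \le \frac{d^*-1}{2}$ enters, forcing $\deg(G + F_0 - D + D_{\error}) = 2t + 2g - d^* \le 2g - 1$ so that the "error-counting" term is small (bounded by $g+$const via Clifford or by $0$ when the degree is negative), while the hypothesis $d^* \ge 6g$ is what simultaneously keeps $\deg F_0 \le d^* - g - 1$ and $\deg(G+F_0) < n$. Balancing these so the two sequences provably cross before the admissibility hypotheses expire is the crux; everything else is bookkeeping with Proposition \ref{prop:preRR} and Corollary \ref{cor:dim}.
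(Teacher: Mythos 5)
Your plan follows the paper's own route: use Corollary~\ref{cor:dim} together with the degree computation $\deg(G + F_0 - D + D_{\error}) = 2t + 2g - d^* \le 2g - 1$ to bound the initial gap $\Delta_0 = \dim S(F_0) - \ell(F_0 - D_{\error})$, observe that each firing of the second alternative of Proposition~\ref{prop:chiave} makes $\Delta_m$ drop by at least $1$, and check that the admissibility hypotheses survive for $m \le g$. However, there is an arithmetic slip in Step~2 that, taken at face value, leaves the counting one step short of closing.

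You assert that Clifford's Theorem gives $\ell(G + F_0 - D + D_{\error}) \le 1 + g$; but Clifford says $\ell(A) \le 1 + \tfrac{1}{2}\deg A$, so $\deg A \le 2g-2$ yields $\ell(A) \le g$, not $g+1$ (and the remaining case $\deg A = 2g-1$ gives exactly $g$ by Riemann--Roch, which you do not separately address). Consequently the correct bound is $\dim S(F_0) \le 2g+1$, i.e.\ $\Delta_0 \le g$, whereas your $\dim S(F_0) \le 2g+2$ only gives $\Delta_0 \le g+1$. That extra unit matters: with $\Delta_0 = g+1$ and $\Delta_m$ decreasing by exactly one per step you would still have $\Delta_g = 1 > 0$, so you would be forced to take a $(g{+}1)$-st step, at which point the Riemann bound $\ell(F_{g+1} - D_{\error}) \ge g+1-(g+1) = 0$ no longer guarantees a nonzero space and Proposition~\ref{prop:chiave} may not apply. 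Your sentence ``after at most $g$ steps we would reach $\dim S(F_m) < 2$'' does not follow from $\dim S(F_0) \le 2g+2$; it is precisely the sharp bound $\Delta_0 \le g$ that makes the process terminate within $g$ steps, where $\ell(F_m - D_{\error}) \ge g+1-m \ge 1$ still holds. (The phrasing ``contradicting \dots unless we have already hit the first alternative'' also conflates the successful termination $\ell(F_m-D_{\error}) = \dim S(F_m)$ with a contradiction; there is no contradiction, only the desired equality.) A similar but less consequential imprecision appears in Step~1: $\tfrac{d^*-1}{2} + 2g \le d^* - g - 1$ requires $d^* \ge 6g+1$, not merely $d^* \ge 6g$; the paper closes this by using that $t$ is an integer, so that $t \le \tfrac{d^*-1}{2}$ together with $d^* \ge 6g$ already forces $t \le d^* - 3g - 1$.
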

\begin{proof}
The proof can be found in \cite{E93}, but we report it here, since a generalisation of this result will be presented in the next section. Let $F$ be a divisor with $\deg F=t+2g$, where $t=\w(\error)\le \frac{d^*-1}{2}$. We denote by $F_1=F, F_2, F_3\dots$ the sequence of divisors constructed by applying Proposition \ref{prop:chiave}. First, let us prove that this sequence exists, that is, for any $i$ smaller than a certain bound, the hypotheses of Proposition \ref{prop:chiave} hold for $F_i$. One can observe that since by hypothesis $t\le \frac{d^*-1}{2}$ and $d^*\ge 6g$, then 
\begin{equation}\nonumber
d^*-t \ge d^*-\frac{d^*-1}{2}\ge \frac{6g+1}{2}=3g+\frac{1}{2}\cdot
\end{equation}
In particular $t\le d^*-3g-1$. Therefore, for any $m$ we have
\[\deg F_m=2g+t-m\le 2g+t\le d^*-g-1.\]
We now prove that $L(F_m-D_{\error})\ne \{0\}$ for any $m\le g$. By Riemann-Roch theorem we get
\begin{equation}\label{dim_ge_1}
\ell(F_m-D_{\error})\ge t+2g -m-t-g+1\ge 1.
\end{equation}
Therefore the hypotheses of Proposition \ref{prop:chiave} are fulfilled for at least $F_0, \dots, F_g$, which means that we can actually construct this sequence of divisors. We define
\begin{equation}\label{def:delta}
\Delta_m\eqdef \dim S(F_m)-\ell(F_m-D_{\error}).
\end{equation} 
We now show that $\Delta_0\le g$ and that the sequence $\{\Delta_m\}_m$ is strictly decreasing. By Corollary~\ref{cor:dim} we get
\begin{equation}\label{cons_cor}
\Delta_0=\dim S(F)-\ell(F-D_{\error})\le \ell(G+F-D+D_{\error}).
\end{equation}
Since $t\le\frac{d^*-1}{2}$ and $d^*=n-\deg G$, we have in particular
\begin{equation}\label{eq:use_dec_rad}
\deg(G+F-D+D_{\error})=n-d^*+t+2g-n+t=2t-d^*+2g\le 2g-1.
\end{equation}
Now we claim that $\ell(G+F-D+D_{\error})\le g$. If $\deg(G+F-D+D_{\error})=2g-1>2g-2$, we have by Riemann Roch theorem
\[\ell(G+F-D+D_{\error})=\deg(G+F-D+D_{\error})-g+1=2g-1 -g+1=g.\]
Otherwise, if $\deg(G+F-D+D_{\error})\le 2g-2$, by Clifford's Theorem (see Theorem \ref{thm:clifford}), we get
\[\ell(G+F-D+D_{\error})\le 1+\frac{1}{2}\deg(G+F-D+D_{\error})\le g.\]
Finally, we prove that the sequence of the $\Delta_m$'s is stricly decreasing. By using \eqref{dis:ell} and the definition of the $P_{i_m}$'s, we have
\begin{eqnarray*}
\Delta_{m+1} & = & \dim S(F_m-P_{i_{m+1}})-\ell(F_m-P_{i_{m+1}}-D_{\error})\\
			 & \le & \dim S(F_m)-2 - \ell(F_m-D_{\error})+1\\
			 & = & \Delta_m -1.
\end{eqnarray*} 
For any $m\le g$, if $\Delta_m=0$ then the algorithm stops. Otherwise the algorithm constructs $F_{m+1}$ and gets $\Delta_{m+1}\le \Delta_m-1$. In this way, we can contruct for sure at least $g+1$ divisors $F_0, \dots, F_{g}$ and it is enough. Indeed, since $\Delta_0\le g$  and the sequence is strictly decreasing, we will get $\Delta_m=0$ for some $m\le g$. 
\end{proof}
\begin{rem}\label{rem:9293}
In this remark we want to point out the reason why the process of adaptation of the divisor $F$ becomes necessary to reach the decoding radius $\frac{d^*-1}{2}$. In \cite{E92}, the only space $S(F)$ is computed and there is no adaptation process to have $S(F)=L(F-D_{\error})$. Indeed the strategy is rather to find a condition for this equality to hold from the start. This condition turns out to be a bound on the degree of $F$, $\deg F< d^*-t$ (see Proposition 1 at the end of the remark). The price of this bound though, is a limitation on the decoding radius. Indeed, in \cite{E92}, the lower bound for $\deg F$ is $t+g$, which together with the hypothesis in Proposition 1 of \cite{E92}, gives
\[t+g\le \deg F< d^*-t,\] 
that is, we have the decoding radius $t\le\frac{d^*-1-g}{2}$. Once this decoding radius is exceeded, we have $\deg F\ge d^*-t$, the equality between $S(F)$ and $L(F-D_{\error})$ could no longer hold and the process of adaptation of the divisor $F$ becomes necessary to ensure it. Thanks to this process, we have then the improvement of the decoding radius from $\frac{d^*-1-g}{2}$ (\cite{E92}) up to $\frac{d^*-1}{2}$ (\cite{E93}).
\begin{namedthm*}{Proposition 1 \cite{E92}}[Function version]
If $\deg F+t<d^*$, then $L(F-D_{\error})=S(F)$.
\end{namedthm*}
\noindent
 The reader can find the proof in Appendix \ref{sec:appendix}.
\end{rem}
\begin{rem}\label{rem:motiv}
Observe that once $t>\frac{d^*-1}{2}$, we may have $\Delta_0>g$ (see \eqref{eq:use_dec_rad}). In particular, we could need more than $g$ steps to have the equality $S(F_j)=L(F_j-D_{\error})$ for some $j$. Though, if $j>g$, we may have $L(F_j-D_{\error})=\{0\}$ (see \eqref{dim_ge_1}), while for the algorithm to work, we need $L(F_j-D_{\error})\ne\{0\}$ (see Theorem \ref{thm_princ}).
\end{rem}
\begin{algorithm}\label{AlgoBW}
\caption{Ehrhard algorithm - unique decoding}
\textbf{Inputs:} $f_{\yv}=f_{\cv}+f_{\error}\in \mathbb{F}_q^n$ where $\cv\in C$ and $\w(\error)\le\frac{d^*-1}{2}$, $t=\w(\error)$.\\
\textbf{Output:} $f_{\error}\in L(G')$ such that
$\ev_{\mathcal{P}}(f_{\error})=\error$.
\begin{algorithmic}[1]
 \State{Choose $F$ with $\supp(F)\cap\mathcal{P}=\emptyset$ and $\deg F=t+2g$;}
\State{$j\gets 0$ and $F_0\gets F$;}
\State{Look for a point $P\in\{P_1,\dots, P_n\}$ such that $\dim(S(F_j-P))\le\dim(S(F_j))-2$;}
\If{such a point $P$ exists} \State{$F_{j+1}\gets F_j-P$;} 
\State{$j\gets j+1$;} \State{go to Step 3;}
\Else { compute $f_{\error}=\frac{\pi(\Lambda f_{\yv})}{\Lambda}$ for some $\Lambda\in S(F_j)$;}
\EndIf 
  \State \Return $f_{\error}$;
\end{algorithmic}
\end{algorithm}
\begin{rem}
In the process of adaptation of the divisor $F$, the points $P_{m+1}\in\supp(D)$ such that
\[\dim S(F_m-P_{m+1})\le \dim S(F_m)-2,\]
does not belong necessarely to $\supp(D_{\error})$. The tests we made on the generalised algorithm give some evidences of this fact in \S\ref{sec:test}.
\end{rem}
\section{Generalisation of the algorithm to correct more errors ($\ell=2$)}\label{sec:l2}
We want now to solve the decoding problem for $\yv=\cv+\error$ with $\w(\error)>\frac{d^*-1}{2}$. In particular we know that any decoding algorithm for Reed-Solomon codes, decoding beyond half the minimum distance of the code, once generalised to algebraic geometry codes, presents a penalty given by the genus of the curve. For instance, the decoding radius of Sudan algorithm ($\ell=2$) for algebraic geometry codes is 
\begin{equation}\label{dec_rad_Sud}
t_{Sud}=\frac{2n-3\deg G-2}{3}-\frac{2}{3}g.
\end{equation}
(see Appendix \ref{appendixSudan}). We would like to generalise Ehrhard algorithm in order to correct the same amount of errors, without the term in $g$. 
\subsection{Foundation of the algorithm}
The purpose of the algorithm stays the same, that is to introduce a certain divisor $F$ and compute the space $L(F-D_{\error})$. What changes is that we want to be able to find this space even for an amount of errors which is larger than half the designed distance. From Remark \ref{rem:motiv}, we know that in this situation, given $F$ with $\deg F=t+2g$, the gap between $S(F)$ and $L(F-D_{\error})$ could be too large with respect to $g$. This gap decreases by one at each step, but we would like to fill it in $g$ steps. The idea then, is to work with a different space $S(F)$ such that the gap decreases by $\ell$ at each step instead, for a certain parameter $\ell$. In order to do so, we now generalise the foundations of the algorithm. Let us consider $\AGcode{\X}{\mathcal{P}}{G}$ with $G$ as in the previous section and the codes 
\begin{equation}\label{def:AB}
A=\AGcode{\X}{\mathcal{P}}{F}\ \ \ B=\AGcode{\X}{\mathcal{P}}{W+D-G-F}.
\end{equation}
The idea of the generalisation, is based on the following remark.
\begin{rem}\label{rem:S(F)=M}
One can prove that $\ev_{\mathcal{P}}(S(F))\subseteq K_{\yv}$, where $K_{\yv}$ is the set computed in the \textit{error correcting pairs algorithm} (see \cite{P92}), that is
\[K_{\yv}\eqdef\{\av\in A\mid \langle \av\ast\yv, \bv\rangle=0\ \ \forall \bv\in B\}.\]
In \cite{P92}, this set is $\Ker(E_{\wv})$, for a specific linear application $E_{\wv}$ and in \cite[\S $3$]{E92} it is proved that the equality $\ev_D(S(F))=K_{\yv}$ holds whenever $\supp(F)\cap \mathcal{P}=\emptyset$. 
\end{rem}
\noindent
Let us rename the spaces $S(F)$ and $K_{\yv}$ respectively by $S_1(F)$ and $K_{\yv}^{(1)}$. We know that in the generalisation of the error correcting pairs algorithm to correct more errors, that is the \textit{power error locating pairs} (see \cite{CP20}), the intersection of several spaces $K_{\yv}=\cap_{i=1}^\ell K_{\yv}^{(i)}$ is computed rather than the only $K_{\yv}^{(1)}$. Let us consider $\ell=2$ for the moment (we present the general case $\ell\ge 2$ in \S\ref{sec:l>2}). We have
\begin{eqnarray}
K_{\yv}^{(1)}&\eqdef &\{\av\in A\mid \langle \av\ast\yv, \bv\rangle=0\ \ \forall \bv\in B\},\label{M1}\\
K_{\yv}^{(2)}&\eqdef &\{\av\in A\mid \langle \av\ast\yv^2, \bv\rangle=0\ \ \forall \bv\in (B^{\perp}\ast C)^\perp \}. \label{M2}
\end{eqnarray}
As said in Remark \ref{rem:S(F)=M}, we know that, whenever $\supp(F)\cap \mathcal{P}=\emptyset$, $K_{\yv}^{(1)}$ corresponds to $S_1(F)$ by evaluating in the points $P_1, \dots, P_n$ . Let us find a space $S_2(F)$ which reformulates in the same spirit $K_{\yv}^{(2)}$. 
First, let us consider the vector $\yv^2$. We have seen in \S\ref{sec:Ehrhard} that there exists $f_{\yv}\in L(G')$ such that $\ev_{\mathcal{P}}(f_{\yv})=\yv$. In particular, we get $f_{\yv}^2\in L(2G')$ and $\ev_{\mathcal{P}}(f_{\yv}^2)=\yv^2$. We now denote $f_{\yv}^2$ by $f_{\yv^2}$ and define the following map
\[\delta_{\yv^2}:\map{L(F)}{L(F+2G')}
  {\Lambda}{\Lambda f_{\yv^2}.}\]
One can easily prove that this map is well-defined. We would like to see the space $L(F+2G')$ as the direct sum of some particular subspaces (as for $L(F+G')$ in \S\ref{subs:found1}). As in \S\ref{subs:found1}, we have that both the spaces $L(F+2G)$ and $L(F+2G'-D)$ are included in $L(F+2G')$ and it holds
\[L(F+2G)\cap L(F+2G'-D)=L(F+2G-D).\]
\begin{assumption}\label{NB2}
We assume that $\deg(F+2G)<n$. 
\end{assumption}
\medskip
\noindent
Under Assumption \ref{NB2}, we get $\deg(F+2G-D)<0$, hence $L(F+2G-D)=\{0\}$ and there exists a subspace $Z_2$ of $L(F+2G')$ such that 
\begin{equation}\label{dec2}
L(F+2G')=L(F+2G)\oplus L(F+2G'-D)\oplus Z_2.
\end{equation}
\begin{rem}\label{rem:PowDec}
The idea of decoding at the same time several powers of the same vector $\yv$, follows the one of Sidorenko, Schmidt and Bossert in the so called \textit{power decoding algorithm} for Reed--Solomon codes \cite{SSB10}, inspired in turn by a decoding algorithm of interleaved Reed--Solomon codes. Observe that we are applying here the same procedure. Indeed we are now considering two decoding problems, that is, one with received vector $\yv$ and code $\AGcode{\X}{\mathcal{P}}{G}$ and one with received vector $\yv^2$ and code $\AGcode{\X}{\mathcal{P}}{2G}$. That is why the construction we have just made for $\yv^2$ is equivalent to that for $\yv$ (\S\ref{subs:found1}) but with the divisor $2G'$, the code $\AGcode{\X}{\mathcal{P}}{2G}$ and the received vector $\yv^2$, instead of respectively $G'$, $\AGcode{\X}{\mathcal{P}}{G}$ and $\yv$. From the point of view of applying the algorithm to two received vectors, Assumption \ref{NB2} comes as a natural request to correct the received vector $\yv^2$ as it plays the role of Assumption \ref{NB} with $\yv$. Furthermore this assumption makes easier to compute the decoding radius of our algorithm (see Lemma \ref{lem:dimZ} and Theorem \ref{thm:dec_rad}). Though, since the two decoding problems are related, $\yv^2$ being the square of $\yv$, Assumption \ref{NB2} is not as important as it seems. Indeed, given $\Lambda\in L(F-D_{\error})$, we do not really need the space $L(F+2G')$ to split as in \eqref{dec2} to recover $f_{\error}$, since we already know how to do that by Assumption \ref{NB} together with Theorem \ref{thm_princ}. We want to point out then that Assumption \ref{NB2} is not a necessary condition for the algorithm to work, as shown by our tests in \S\ref{sec:test}.
\end{rem}
\begin{rem}
Note that, since $\deg(G)> 0$, Assumption \ref{NB2} implies Assumption \ref{NB}. 
\end{rem}
\medskip
\noindent
It is actually possible to compute the dimension of the spaces $Z_1$ and $Z_2$.
\begin{lem}\label{lem:dimZ}
Given $Z_1$ as in \eqref{dec1} and $Z_2$ as in \eqref{dec2}, then
\begin{align*}
\dim Z_1&=\deg(D- F - G) +g-1\\
\dim Z_2&=\deg(D- F - 2G) +g-1
\end{align*}
\end{lem}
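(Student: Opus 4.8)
The plan is to turn each $\dim Z_i$ into an alternating sum of Riemann--Roch dimensions via the direct sum decompositions \eqref{dec1} and \eqref{dec2}, and then to evaluate every term with Proposition \ref{prop:preRR}. Reading off the decompositions, one has
\[\dim Z_i = \ell(F+iG') - \ell(F+iG) - \ell(F+iG'-D), \qquad i=1,2,\]
so the two statements are formally identical after replacing $G,G'$ by $2G,2G'$, and I would run the argument once.

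The step that makes everything work is a lower bound on $\deg G'$ extracted from the defining property $\ell(W+D-G')=0$. If one had $\deg G' \le n+g-2$, then $\deg(W+D-G') = (2g-2)+n-\deg G' \ge g$, and the second item of Proposition \ref{prop:preRR} would force $\ell(W+D-G') \ge \deg(W+D-G')-g+1 \ge 1$, contradicting the choice of $G'$; hence $\deg G' \ge n+g-1$. Combining this with $\deg F = t+2g$, $\deg D = n$ and the standing bound $\deg G \ge g-1$, every one of the divisors $F+iG'$, $F+iG$ and $F+iG'-D$ (for $i=1,2$) has degree strictly larger than $2g-2$; the only case that is not immediate is $F+iG'-D$, for which $\deg(F+iG'-D) = \deg F + i\deg G' - n \ge t+3g-1 > 2g-2$. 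By the third item of Proposition \ref{prop:preRR}, the dimension of the Riemann--Roch space of each of these divisors therefore equals its degree minus $g-1$.

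Substituting these values, the contributions of $\deg F$ and of $\deg G'$ cancel in the alternating sums, leaving
\[\dim Z_1 = n - \deg F - \deg G + g - 1 = \deg(D-F-G)+g-1\]
and $\dim Z_2 = n - \deg F - 2\deg G + g - 1 = \deg(D-F-2G)+g-1$, as claimed. I do not expect a genuine obstacle here: the whole content is the observation that $\ell(W+D-G')=0$ pushes $\deg G'$ up to at least $n+g-1$, which is precisely what places $F+iG'-D$ in the range where Riemann--Roch is exact; the remaining computation is bookkeeping with degrees and does not reuse Assumptions \ref{NB} or \ref{NB2} beyond what is already built into \eqref{dec1} and \eqref{dec2}.
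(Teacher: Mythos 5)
Your proof is correct and follows the same route as the paper's: derive $\deg G' \ge n+g-1$ from $\ell(W+D-G')=0$ via Riemann--Roch, check that every divisor appearing in \eqref{dec1} and \eqref{dec2} has degree $> 2g-2$ so that Riemann--Roch is exact, and then read $\dim Z_i$ off the direct sum as the alternating sum $\ell(F+iG')-\ell(F+iG)-\ell(F+iG'-D)$. Running the computation once and substituting $i=1,2$ is a small notational economy over the paper, which states $\Omega(\cdot)=\{0\}$ for the same divisors before specializing.
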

\begin{proof}
First, we show that $\Omega(F+iG)=\Omega(F+iG'-D)=\{0\}$ for $i=1,2$. Since we took $G'$ such that $\ell(W+D-G')=0$, by Riemann-Roch theorem we have  
\[0=\ell(W+D-G')\ge 2g+n-\deg G'-g+1,\]
that is $\deg G'\ge n+g-1$. Thus, since $\deg F\ge t+g$, we get 
\begin{equation}\label{forOmega0}
\deg(F+2G'-D)>\deg(F+G'-D)>2g-2
\end{equation}
and in particular $\Omega(F+G'-D)=\Omega(F+2G'-D)=\{0\}$. Furthermore, \[\deg(F+2G)>\deg(F+G)\ge t+g+g-1>2g-2,\] 
hence $\Omega(F+2G)=\Omega(F+G)=\{0\}$. Therefore we can compute from \eqref{dec1}
\begin{align*}
\dim Z_1&=\ell(F+G')-\ell(F+G)-\ell(F+G'-D)\\
		&=\deg F+\deg G'-g+1-\deg F - \deg G+g-1 - \deg F - \deg G' +n +g-1\\
		&= \deg(D-F-G)+g-1.
\end{align*}
In the same way from \eqref{dec2} we get $\dim Z_2=\deg(D-F-2G)+g-1$.
\end{proof}

\subsection{The algorithm}
We can now define the space
\begin{equation}\label{def:S2}
S_2(F)\eqdef\{f\in L(F)\mid \delta_{\yv^2}(f)\in L(F+2G)\oplus L(F+2G'-D)\}.
\end{equation}
As for $S_1(F)$, we have that $\ev_{\mathcal{P}}(S(F))\subseteq K_{\yv}^{(2)}$ and under further conditions on the support of the divisor $F$ and $\deg G$, we have the equality.
\begin{thm}\label{thm:S_2=M_2}
Given $S_2(F)$ as in \eqref{def:S2} and $K_{\yv}^{(2)}$ as in \eqref{M2}, if $\supp(F)\cap\mathcal{P}=\emptyset$ and $\deg G\ge 2g$, we have 
\[\ev_{\mathcal{P}}(S_2(F))=K_{\yv}^{(2)}.\]
\end{thm}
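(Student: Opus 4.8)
The plan is to establish the two inclusions $\ev_{\mathcal{P}}(S_2(F))\subseteq K_{\yv}^{(2)}$ and $K_{\yv}^{(2)}\subseteq\ev_{\mathcal{P}}(S_2(F))$ separately, closely paralleling the argument for $\ell=1$ found in \cite[\S3]{E92}, but with the data $(2G',\,\AGcode{\X}{\mathcal{P}}{2G},\,\yv^2)$ in place of $(G',\,\AGcode{\X}{\mathcal{P}}{G},\,\yv)$, as suggested by Remark \ref{rem:PowDec}. The point common to both inclusions is a dictionary between the orthogonality conditions defining $K_{\yv}^{(2)}$ and membership of $\delta_{\yv^2}(f)$ in the direct summand $L(F+2G)\oplus L(F+2G'-D)$ of $L(F+2G')$. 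First I would note that since $\supp(F)\cap\mathcal{P}=\emptyset$, the map ${\ev_{\mathcal{P}}}_{|L(F)}$ is injective (its kernel is $L(F-D)$, which is $\{0\}$ by Assumption \ref{NB2} since $\deg(F-D)<\deg(F+2G-D)<0$), and $\ev_{\mathcal{P}}(L(F))=\AGcode{\X}{\mathcal{P}}{F}=A$; so it suffices to show that for $f\in L(F)$ with $\av=\ev_{\mathcal{P}}(f)$, one has $f\in S_2(F)$ if and only if $\av\in K_{\yv}^{(2)}$.

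The key computation is the following. Using the pairing identity $\langle\av\ast\yv^2,\bv\rangle=\langle\ev_{\mathcal{P}}(f)\ast\ev_{\mathcal{P}}(f_{\yv^2}),\bv\rangle$ and the trace/residue description of the inner product, for any $h\in L(W+D-(F+2G'))$ with $\bv'=\ev_{\mathcal{P}}(h)$ one gets $\langle\ev_{\mathcal{P}}(\Lambda f_{\yv^2}),\bv'\rangle=\sum_{P\in\mathcal{P}}\Res_P(\Lambda f_{\yv^2}h\,\omega)$; since $\Lambda f_{\yv^2}h\,\omega$ is a differential whose only possible poles are at $\mathcal{P}$ (as $(\Lambda f_{\yv^2}h)\ge -(F+2G')-(W+D-F-2G')+(-W)$... here one must track the divisor carefully), the residue theorem forces the sum of residues to vanish. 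This is exactly the statement that $\ev_{\mathcal{P}}(L(F+2G'))$ is orthogonal to $\ev_{\mathcal{P}}(L(W+D-F-2G'))$, i.e. that $\delta_{\yv^2}(f)$ automatically lands in $\AGcode{\X}{\mathcal{P}}{W+D-F-2G'}^\perp=\AGcode{\X}{\mathcal{P}}{F+2G'}$. The real content is to translate the two \emph{separate} conditions ``$\langle\av\ast\yv^2,\bv\rangle=0$ for all $\bv\in(B^\perp\ast C)^\perp$'' into ``the $L(F+2G'-D)$-component of $\delta_{\yv^2}(f)$ with respect to \eqref{dec2} is what it should be, and the $Z_2$-component vanishes''. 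Concretely, I would identify $(B^\perp\ast C)^\perp$ with the evaluation of a Riemann--Roch space: since $B=\AGcode{\X}{\mathcal{P}}{W+D-G-F}$ we have $B^\perp=\AGcode{\X}{\mathcal{P}}{F+G}$, and by Proposition \ref{prop:star_prod_AGcodes} (this is where $\deg G\ge 2g$ is used, together with $\deg(F+G)\ge 2g+1$, which follows from $\deg F\ge t+g\ge g+1$ and $\deg G\ge 2g$) we obtain $B^\perp\ast C=\AGcode{\X}{\mathcal{P}}{F+G}\ast\AGcode{\X}{\mathcal{P}}{G}=\AGcode{\X}{\mathcal{P}}{F+2G}$, hence $(B^\perp\ast C)^\perp=\AGcode{\X}{\mathcal{P}}{W+D-F-2G}$. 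So $\av\in K_{\yv}^{(2)}$ says precisely that $\ev_{\mathcal{P}}(\Lambda f_{\yv^2})$ is orthogonal to $\ev_{\mathcal{P}}(L(W+D-F-2G))$, i.e. that $\delta_{\yv^2}(f)\in\AGcode{\X}{\mathcal{P}}{W+D-F-2G}^\perp=\AGcode{\X}{\mathcal{P}}{F+2G}$ as a \emph{vector}.

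From here the two inclusions run as follows. For ``$\subseteq$'': if $f\in S_2(F)$ then $\Lambda f_{\yv^2}=u+v$ with $u\in L(F+2G)$, $v\in L(F+2G'-D)$; evaluating, $\ev_{\mathcal{P}}(\Lambda f_{\yv^2})=\ev_{\mathcal{P}}(u)+\ev_{\mathcal{P}}(v)$ where $\ev_{\mathcal{P}}(v)\in\AGcode{\X}{\mathcal{P}}{F+2G'-D}$; but $\deg(F+2G'-D)<\deg(W+D-(W+D-F-2G))$ forces $\ev_{\mathcal{P}}(v)=0$ (here is the role of $\deg(F+2G)<n$, i.e.\ Assumption \ref{NB2}, which makes $L(F+2G'-D)$ evaluate into the dual of $B^\perp\ast C$), so $\ev_{\mathcal{P}}(\Lambda f_{\yv^2})=\ev_{\mathcal{P}}(u)\in\AGcode{\X}{\mathcal{P}}{F+2G}$, giving $\av\in K_{\yv}^{(2)}$. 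For ``$\supseteq$'': if $\av\in K_{\yv}^{(2)}$, then $\ev_{\mathcal{P}}(\Lambda f_{\yv^2})\in\AGcode{\X}{\mathcal{P}}{F+2G}=\ev_{\mathcal{P}}(L(F+2G))$ by Proposition \ref{prop:star_prod_AGcodes} again (note $\deg(F+2G)>2g-2$, so this code has a well-understood dimension), hence there is $u\in L(F+2G)$ with $\ev_{\mathcal{P}}(\Lambda f_{\yv^2}-u)=0$; since $\supp(F)\cap\mathcal{P}=\emptyset$ and $\deg(G')$, $\deg(G'-D+\text{stuff})$ are large, $\Lambda f_{\yv^2}-u$ is a function in $L(F+2G')$ vanishing on all of $\mathcal{P}$, i.e.\ in $L(F+2G'-D)$, so $\delta_{\yv^2}(f)=u+(\Lambda f_{\yv^2}-u)\in L(F+2G)\oplus L(F+2G'-D)$ and $f\in S_2(F)$.

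The main obstacle I anticipate is the bookkeeping in the second inclusion: one must know that a function in $L(F+2G')$ whose evaluation on $\mathcal{P}$ vanishes actually lies in $L(F+2G'-D)$ — which is automatic — but also that the surjectivity $\ev_{\mathcal{P}}(L(F+2G))=\AGcode{\X}{\mathcal{P}}{F+2G}$ combined with the Schur-product equality $\AGcode{\X}{\mathcal{P}}{F+G}\ast\AGcode{\X}{\mathcal{P}}{G}=\AGcode{\X}{\mathcal{P}}{F+2G}$ holds, and this is exactly where the hypotheses $\deg G\ge 2g$ and $\supp(F)\cap\mathcal{P}=\emptyset$ enter; getting the degree inequalities for Proposition \ref{prop:star_prod_AGcodes} to apply (we need one factor of degree $\ge 2g$ and the other $\ge 2g+1$, and we must check $\deg(F+G)\ge 2g+1$ from $\deg F\ge t+g$) is the delicate point. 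I would isolate these degree verifications as a short preliminary lemma so the main argument reads cleanly, and I would double-check the subtle asymmetry that $\ell=1$ needed only $\supp(F)\cap\mathcal{P}=\emptyset$ whereas $\ell=2$ additionally needs $\deg G\ge 2g$ precisely because the square $\yv^2$ lives in a code built from $2G$, where the Schur-product identity requires the larger degree bound.
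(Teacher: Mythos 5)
Your route is sound and genuinely different from the paper's. The paper proves the theorem via Proposition \ref{prop:for_nothing_new}: one first shows that the residue pairing
$\varphi : L(F+2G')\to\Omega(F+2G-D)^{\vee}$
restricted to $Z_2$ is an isomorphism (surjectivity of $\varphi$, the summands $L(F+2G)\oplus L(F+2G'-D)$ sit inside $\Ker\varphi$, dimensions match), then converts the condition $\pi_{Z_2}\circ\delta_{\yv^2}(\Gamma)=0$ into a residue sum $\sum_i\Gamma(P_i)\Res_{P_i}(\omega)\,y_i^2=0$, and finally identifies $\Res_{\mathcal P}(\Omega(F+2G-D))$ with $(B^\perp\ast C)^\perp$ by Remark \ref{remannex}. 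You bypass the construction of $\varphi$ entirely: you first pin down $(B^\perp\ast C)^\perp=\AGcode{\X}{\mathcal P}{W+D-F-2G}$ directly from duality and Proposition \ref{prop:star_prod_AGcodes}, recast $\av\in K_{\yv}^{(2)}$ as $\ev_{\mathcal P}(\delta_{\yv^2}(f))\in\AGcode{\X}{\mathcal P}{F+2G}$, and then argue both inclusions by manipulating the decomposition \eqref{dec2} and the injectivity/surjectivity of $\ev_{\mathcal P}$ on $L(F+2G)$. This is more elementary: it replaces the residue computation and the dimension count for $Z_2$ with standard AG-code duality, at the cost of being tied to the explicit orthogonal-complement description of $(B^\perp\ast C)^\perp$ rather than the intrinsic pairing $\varphi$. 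Both approaches hinge on the same two pillars: the star-product equality $B^\perp\ast C=\AGcode{\X}{\mathcal P}{F+2G}$ (where $\deg G\ge 2g$ enters) and the direct-sum decomposition coming from Assumption \ref{NB2}.

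One concrete flaw to fix in the $\subseteq$ direction: your justification for $\ev_{\mathcal P}(v)=\word{0}$ is wrong. You invoke a degree comparison ``$\deg(F+2G'-D)<\deg(F+2G)$'' and attribute the vanishing to Assumption \ref{NB2}, but in fact $\deg(F+2G'-D)=\deg F+2\deg G'-n\ge\deg F+n+2g-2$, which is much \emph{larger} than $\deg(F+2G)<n$, so the stated inequality is false. The correct (and simpler) reason is purely local: $v\in L(F+2G'-D)$ together with $\supp(F)\cap\mathcal P=\emptyset$ and $\supp(G')\cap\mathcal P=\emptyset$ forces $v_{P_i}(v)\ge 1$ for every $i$, hence $v(P_i)=0$ for all $i$ and $\ev_{\mathcal P}(v)=\word{0}$. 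Assumption \ref{NB2} is used elsewhere — to guarantee $L(F+2G-D)=\{0\}$, which is what makes the sum $L(F+2G)\oplus L(F+2G'-D)$ direct and makes $\ev_{\mathcal P}:L(F+2G)\to\AGcode{\X}{\mathcal P}{F+2G}$ injective in your $\supseteq$ step — not for the vanishing of $\ev_{\mathcal P}(v)$. Once this is corrected, the argument is complete.
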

\begin{proof}
See Appendix \ref{sec:appendix}.
\end{proof}

\begin{prop}\label{prop:loc_in_S_2}
Let $D_{\error}$ be as in the previous section. Then $L(F-D_{\error})\subset S_2(F)$.
\end{prop}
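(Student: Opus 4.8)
The plan is to show that any $\Lambda \in L(F - D_{\error})$ lies in $S_2(F)$, by verifying the defining condition of $S_2(F)$ in \eqref{def:S2}, namely that $\delta_{\yv^2}(\Lambda) = \Lambda f_{\yv^2} = \Lambda f_{\yv}^2$ decomposes as an element of $L(F+2G) \oplus L(F+2G'-D)$. This is the exact analogue of the computation carried out in the proof of Theorem \ref{thm_princ} for the linear case, but now applied to the ``squared'' decoding problem described in Remark \ref{rem:PowDec}, i.e. with $G$ replaced by $2G$, $G'$ by $2G'$, and $\yv$ by $\yv^2$.

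First I would recall that $f_{\yv} = f_{\cv} + f_{\error}$, where $f_{\cv} \in L(G)$ and $f_{\error} \in L(G' - D_{\error})$; indeed $f_{\error}$ evaluates to $\error$, whose support is exactly $\supp(D_{\error})$, so $f_{\error}$ vanishes on the complement and $(f_{\error}) \ge -G' + D_{\error}$. Squaring gives
\[
f_{\yv}^2 = f_{\cv}^2 + 2 f_{\cv} f_{\error} + f_{\error}^2,
\]
where $f_{\cv}^2 \in L(2G)$, while both $f_{\cv} f_{\error}$ and $f_{\error}^2$ lie in $L(2G' - D_{\error})$ (using $L(G) \subseteq L(G')$ for the cross term, and $L(G'-D_{\error}) \cdot L(G'-D_{\error}) \subseteq L(2G' - 2D_{\error}) \subseteq L(2G'-D_{\error})$ for the last, since $D_{\error} \ge 0$). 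Hence $f_{\yv}^2 = h_{\cv} + h_{\error}$ with $h_{\cv} := f_{\cv}^2 \in L(2G)$ and $h_{\error} := 2f_{\cv}f_{\error} + f_{\error}^2 \in L(2G' - D_{\error})$; in fact $h_{\error} \in L(2G' - D_{\error})$ refines further to $h_{\error} \in L(2G' - D_{\error})$, which is what matters. Note $\ev_{\mathcal{P}}(h_{\error}) = \yv^2 - \cv^2 =: \error^{(2)}$, the error of the squared problem, with support contained in $\supp(\error)$.

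Now multiply by $\Lambda \in L(F - D_{\error})$. Since $\Lambda \in L(F)$ and $h_{\cv} \in L(2G)$, we get $\Lambda h_{\cv} \in L(F + 2G)$. For the other piece, $\Lambda \in L(F - D_{\error})$ and $h_{\error} \in L(2G' - D_{\error})$ give $\Lambda h_{\error} \in L(F + 2G' - 2D_{\error}) \subseteq L(F + 2G' - D)$ — here I need that $\Lambda h_{\error}$ vanishes on all of $\supp(D)$ to at least multiplicity one, which follows because $\Lambda$ already vanishes on $\supp(D_{\error})$ and $h_{\error}$ vanishes (as a function evaluating to $\error^{(2)}$) on $\supp(D) \setminus \supp(D_{\error})$; combining, $\Lambda h_{\error}$ evaluates to $\zerov$ at every $P_i$, so $(\Lambda h_{\error}) \ge -(F + 2G') + D$. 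Therefore
\[
\delta_{\yv^2}(\Lambda) = \Lambda f_{\yv}^2 = \Lambda h_{\cv} + \Lambda h_{\error} \in L(F + 2G) + L(F + 2G' - D),
\]
and by Assumption \ref{NB2} this sum is direct (as in \eqref{dec2}), so $\delta_{\yv^2}(\Lambda)$ indeed lies in $L(F+2G) \oplus L(F+2G'-D)$. By the definition \eqref{def:S2}, $\Lambda \in S_2(F)$, which proves $L(F - D_{\error}) \subseteq S_2(F)$.

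The only mildly delicate point is the bookkeeping on the divisor of $\Lambda h_{\error}$: one must be careful that $h_{\error}$, a priori only known to lie in $L(2G' - D_{\error})$, actually also vanishes on the remaining evaluation points $\{P_i : i \notin \supp(\error)\}$ because it represents the error word $\error^{(2)} = \yv^2 - \cv^2$, whose support is contained in $\supp(\error)$. This is the same mechanism that makes $\ev_{\mathcal{P}}$ injective on $V$ (resp. $\tilde V$) in the setup of \S\ref{subs:found1}; formally, $h_{\error}$ and the unique element of $\tilde V$ evaluating to $\error^{(2)}$ differ by something in $L(2G'-D) = \{0\}$ once degrees are large enough, but for this proposition it suffices to argue at the level of supports. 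Everything else is the formal transfer of the $\ell = 1$ argument to the square, exactly as licensed by Remark \ref{rem:PowDec}.
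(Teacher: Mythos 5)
Your overall strategy matches the paper's: expand $f_{\yv}^2 = f_{\cv}^2 + 2f_{\cv}f_{\error} + f_{\error}^2$ and show that multiplying by $\Lambda \in L(F-D_{\error})$ sends the codeword piece into $L(F+2G)$ and the remaining pieces into $L(F+2G'-D)$. However, there is a consistent divisor-labeling error. You write $f_{\error}\in L(G'-D_{\error})$, i.e.\ $(f_{\error})\ge -G'+D_{\error}$; but since $f_{\error}$ evaluates to $\error$ it vanishes on $\supp(D)\setminus\supp(D_{\error})$, so the correct statement is $(f_{\error})\ge -G'+(D-D_{\error})$, i.e.\ $f_{\error}\in L(G'-D+D_{\error})$, exactly as the paper records at the end of its proof. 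This propagates: $h_{\error}=2f_{\cv}f_{\error}+f_{\error}^2$ lies in $L(2G'-D+D_{\error})$, not $L(2G'-D_{\error})$, and your intermediate inclusion $L(F+2G'-2D_{\error})\subseteq L(F+2G'-D)$ is simply false in general, as it would require $2D_{\error}\ge D$.

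You then patch this with the claim that $\Lambda h_{\error}$ ``evaluates to $\zerov$ at every $P_i$,'' but that argument tacitly assumes $\supp(F)\cap\mathcal{P}=\emptyset$: if $v_{P_i}(F)>0$ then $\Lambda$ may have a pole at $P_i$ and the product $\Lambda h_{\error}$ need not be regular there, let alone vanish. The paper explicitly insists, right after Proposition~\ref{prop:chiave}, that no hypothesis on $\supp(F)$ is allowed, precisely because the result is applied iteratively to $F_j = F - \sum_k P_{i_k}$ with $P_{i_k}\in\mathcal{P}$. The paper's proof avoids the issue entirely by bounding the divisors $(\Lambda f_{\cv}^2)$, $(\Lambda f_{\cv}f_{\error})$, $(\Lambda f_{\error}^2)$ term by term from $(\Lambda)\ge -F+D_{\error}$, $(f_{\cv})\ge -G$, $(f_{\error})\ge -G'+D-D_{\error}$. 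Once your label for $f_{\error}$ is corrected your grouping into $h_{\error}$ works directly too: $\Lambda\in L(F-D_{\error})$ together with $h_{\error}\in L(2G'-D+D_{\error})$ yields $\Lambda h_{\error}\in L(F+2G'-D)$ by pure divisor arithmetic, with no evaluation argument and no restriction on $\supp(F)$.
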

\begin{proof}
Let us consider $f_{\yv^2}$. We recall that we defined this function to be equal to $f_{\yv}^2$, hence it belongs to $L(2G')$ and fulfills $\ev_{\mathcal{P}}(f_{\yv^2})=\yv^2$. In particular 
\[f_{\yv}^2=f_{\cv}^2+2f_{\cv}f_{\error}+f_{\error}^2,\]
where $f_{\cv}\in L(G)$ and $f_{\error}\in L(G')$. If $\Lambda\in L(F-D_{\error})$, then we get 
\begin{eqnarray*}
(\Lambda f_{\cv}^2)&\ge &-F-2G,\\
(\Lambda f_{\error}^2)&\ge & -F+D-2G',\\
(\Lambda f_{\cv}f_{\error})&\ge & -F+D-G-G'\ge -F+D-2G',
\end{eqnarray*}
since $G'>G$ and in particular $f_{\error}\in L(G'-D+D_{\error})$.
\end{proof}
\medskip
\noindent
Let us finally introduce the space
\begin{equation}\label{S(F)_new}
S(F)\eqdef S_1(F)\cap S_2(F).
\end{equation}
Thanks to Proposition \ref{prop:loc_in_S_2}, we have $L(F-D_{\error})\subseteq S(F)$. As in Ehrhard's paper, the idea is now to close the gap between these two spaces. Hence, the next task is to adapt Proposition \ref{prop:chiave} to the $S(F)$ we have just constructed. To do so, we have to adapt two lemmas. The proofs of these two lemmas and of the adaptation of Proposition \ref{prop:chiave} come directly from the proofs of Lemma 1, Lemma 2 and Proposition 8 in \cite{E93}, though we write them here anyway for sake of completeness. 
\begin{lem}\label{lem1}
If $L(F-D_{\error})\ne S(F)$, then there exist at most $\deg(F+D_{\error})-d^{\ast}$ rational points $P\in~\supp(D_{\error})$ such that $S(F)\subseteq L(F-P)$.
\end{lem}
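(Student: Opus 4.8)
Lemma~\ref{lem1} is the analogue, for the new space $S(F) = S_1(F) \cap S_2(F)$, of Lemma~1 of \cite{E93}, which dealt with the single space $S_1(F)$. The plan is to argue by contraposition: suppose that there are ``many'' rational points $P \in \supp(D_{\error})$ with $S(F) \subseteq L(F-P)$, and show that this forces $S(F) = L(F-D_{\error})$. The central observation is that if $S(F) \subseteq L(F-P)$ for every $P$ in some subset $T \subseteq \supp(D_{\error})$, then $S(F) \subseteq L(F - \sum_{P \in T} P)$, since $L(F-A) \cap L(F-B) = L(F-\max\{A,B\})$ and the points are distinct. The goal is then to choose $T$ large enough that $F - \sum_{P\in T}P \le F - D_{\error}$ up to the non-error part, i.e. to show that membership in $L(F - \sum_{P\in T} P)$ together with membership in $S(F)$ already forces vanishing at \emph{all} the error positions.

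**The main computation.** First I would count: write $t = \w(\error) = \deg D_{\error}$, and suppose for contradiction that there are strictly more than $\deg(F + D_{\error}) - d^{\ast} = \deg F + t - d^{\ast}$ points $P \in \supp(D_{\error})$ with $S(F) \subseteq L(F-P)$. Let $T$ be the set of such points and $D_T = \sum_{P \in T} P$, so $\deg D_T = \#T \ge \deg F + t - d^{\ast} + 1$. Then $S(F) \subseteq L(F - D_T)$. Now take any $\Lambda \in S(F) \subseteq S_1(F)$; by the definition of $S_1(F)$ and the decomposition \eqref{dec1}, $\Lambda f_{\error} = \pi(\Lambda f_{\yv}) \in L(F + G' - D)$, hence $(\Lambda f_{\error}) \ge -F - G' + D$. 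Combining with $\Lambda \in L(F - D_T)$, i.e. $(\Lambda) \ge -F + D_T$, and with $f_{\error} \in L(G' - D + D_{\error})$ (which holds because $f_{\error}$ vanishes at every non-error point), one gets that $\Lambda f_{\error} \in L(F + G' - D + D_{\error} - D_T)$ intersected with $L(F+G'-D)$; the relevant divisor is $F + G' - D + D_{\error} - D_T$ restricted to the support of $D_{\error}$. The key point is that at each error point $P_i$ not in $T$, the valuation budget of $\Lambda f_{\error}$ is at least that of a function in $L(F + G' - D)$, while $f_{\error}$ contributes $v_{P_i}(f_{\error}) \ge -v_{P_i}(G' - D + D_{\error}) $; one must show $\Lambda$ itself is forced to vanish at $P_i$, i.e. $\Lambda \in L(F - D_{\error})$. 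This is where the degree bound $\deg F \le d^\ast - g - 1$ (or whatever hypothesis is in force from Proposition~\ref{prop:chiave}) and the inequality $\#T > \deg F + t - d^\ast$ get used: $\deg(F - D_T - D_{\error} + D_T) $ arithmetic shows the ``leftover'' divisor has negative degree on the error support, forcing the vanishing.

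**The role of $S_2$.** I expect that the argument for $S_1(F) \subseteq L(F - D_T) \Rightarrow$ vanishing is essentially Ehrhard's, and that the $S_2(F)$ component does not actually make the conclusion harder — it only shrinks $S(F)$, so $L(F - D_{\error}) \subseteq S(F) \subseteq S_1(F)$ and any vanishing statement proved for $S_1$ a fortiori constrains $S(F)$. Thus the reverse inclusion $S(F) \subseteq L(F - D_{\error})$ follows once we know $S(F) \subseteq L(F - D_{\error}')$ for $D_{\error}' \ge D_{\error}$ of small enough extra degree, using the remark after Theorem~\ref{thm_princ} that $L(F - D_{\error}') \ne \{0\}$ only if $\w(\error')$ is not too large, combined here with the \emph{opposite} direction: if the inclusion into $L(F-P)$ holds for enough $P$'s, the support must already cover all of $\supp(D_{\error})$.

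**Main obstacle.** The delicate part will be the bookkeeping of valuations at error points: one must handle the case $\supp(F) \cap \mathcal{P} \ne \emptyset$ (explicitly allowed in Proposition~\ref{prop:chiave}), so $F$ may itself have positive or negative valuation at some $P_i$, and the clean statement ``$\Lambda$ vanishes at $P_i$'' must be replaced by ``$(\Lambda) \ge -F + D_{\error}$ at $P_i$''. Getting the counting bound $\deg(F + D_{\error}) - d^{\ast}$ exactly — rather than off by one or by $g$ — requires being careful that the divisor whose degree must be made negative is precisely $F + G' - D + D_{\error} - D_T$ evaluated against $D - D_{\error}$ versus $D_{\error}$, and invoking $d^\ast = n - \deg G$ together with $\deg(G' - D) = \deg G' - n \le g - 1$ from Lemma~\ref{lem:dimZ}'s proof. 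I would lay out the valuation inequalities at a single point $P_i \in \supp(D_{\error}) \setminus T$, sum over such points, and read off that $\#(\supp(D_{\error}) \setminus T) \le \deg F + t - d^\ast$ forces the contradiction, hence $\#(\supp(D_{\error}) \cap T^c)$ small, i.e. $\#T$ cannot exceed that bound.
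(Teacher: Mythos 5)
Your general direction (contraposition: many points $P$ with $S(F)\subseteq L(F-P)$ should force $S(F)=L(F-D_{\error})$) is a workable reformulation, but your ``main computation'' has a fatal circularity. You write that for any $\Lambda\in S(F)\subseteq S_1(F)$, ``by the definition of $S_1(F)$ and the decomposition \eqref{dec1}, $\Lambda f_{\error}=\pi(\Lambda f_{\yv})\in L(F+G'-D)$.'' This is false for general $\Lambda\in S_1(F)$: the decomposition $\Lambda f_{\yv}=\Lambda f_{\cv}+\Lambda f_{\error}$ only agrees with the direct-sum decomposition of \eqref{dec1} when $\Lambda f_{\error}\in L(F+G'-D)$, which requires $\Lambda$ to kill the poles of $f_{\error}$ at the error points — i.e.\ requires $\Lambda\in L(F-D_{\error})$, precisely what you are trying to prove. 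The paper's Proposition~\ref{prop:exact_seq} is built to quantify this failure: the map $\Phi(\Gamma)=\Gamma f_{\error}-\pi(\Gamma f_{\yv})$ is the obstruction, and $\Phi(\Gamma)=0$ if and only if $\Gamma\in L(F-D_{\error})$. By assuming $\Lambda f_{\error}=\pi(\Lambda f_{\yv})$ from the start, you assume $\Phi(\Lambda)=0$, i.e.\ you assume the conclusion.

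The actual proof takes a witness $\Gamma\in S(F)\setminus L(F-D_{\error})\subseteq S_1(F)\setminus L(F-D_{\error})$ (here the role of $S_2$ is indeed negligible, as you correctly observe), so that $\Phi(\Gamma)\ne 0$. Writing $\tilde D$ for the sum of the $m$ points in question, the inclusions $S(F)\subseteq L(F-\tilde D)$, $f_{\error}\in L(G'-D+D_{\error})$ and $\pi(\Gamma f_{\yv})\in L(F+G'-D)$ give $\Phi(\Gamma)\in L(G'+F-D+(D_{\error}-\tilde D))$; intersecting with the codomain $L(G+F-D+D_{\error})$ of $\Phi$ yields $\Phi(\Gamma)\in L(G+F-D+(D_{\error}-\tilde D))$. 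Since $\Phi(\Gamma)\ne 0$, this Riemann--Roch space is nontrivial, hence its degree is nonnegative, which reads exactly $m\le\deg(F+D_{\error})-d^{\ast}$. Without invoking $\Phi$ and the codomain $L(G+F-D+D_{\error})$, you cannot identify the right divisor whose degree must be nonnegative; the ``leftover divisor'' you gesture at is never pinned down, and your valuation bookkeeping has nothing to cancel the pole of $f_{\error}$ at error points outside $\tilde D$. Finally, a small but real slip: from $\ell(W+D-G')=0$ and Riemann--Roch one gets $\deg G'-n\ge g-1$, not $\le g-1$ as you wrote.
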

\begin{proof}
Without loss of generality, after reindexing, one can suppose $P_1,\dots, P_m$ to be the points in $\supp(D_{\error})$ such that 
\[S(F)\subseteq L(F-P_i)\ \ \ \forall i=1,\dots, m.\]
In particular we have $S(F)\subseteq \bigcap_{i=1}^m L(F-P_i)=L(F-\tilde{D})$, where $\tilde{D}=\sum_{i=1}^m P_i$. Let us consider 
\[\Gamma\in S(F)\setminus L(F-D_{\error})\subseteq S_1(F)\setminus L(F-D_{\error}).\] 
By Proposition \ref{prop:exact_seq}, we have $\Phi(\Gamma)=\Gamma f_{\error} - \pi(\Gamma f_{\yv})\ne 0$ where $\pi(\Gamma f_{\yv})\in L(F+G'-D)$. We get then
\[(\Gamma f_{\error}-\pi(\Gamma f_{\yv}))\ge\min(-F+\tilde{D}-G'+D-D_{\error}, -F-G'+D)=-G'-F+D-(D_{\error}-\tilde{D}).\]
By definition of $\Phi$ in Proposition \ref{prop:exact_seq} we get in particular 
\[0\ne \Gamma f_{\error}-\pi(\Gamma f_{\yv})\in L(G'+F-D+(D_{\error}-\tilde{D}))\cap L(G+F-D+D_{\error})=L(G+F-D+(D_{\error}-\tilde{D})).\]
Hence, $\deg(G+F-D+(D_{\error}-\tilde{D}))\ge 0$, that is 
\[m\le \deg(F+D_{\error})-d^{\ast}.\]
\end{proof}
\begin{lem}\label{lem2}
If $L(F-D_{\error})\ne \{0\}$, then there are at most $g$ rational points $P\in\supp(D_{\error})$ such that $S_1(F-P)\cap S_2(F)=S(F)\cap L(F-P)$.
\end{lem}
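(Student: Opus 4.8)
The plan is to follow the proof of Lemma~2 in \cite{E93}, the key observation being that the factor $S_2(F)$ appearing in the statement is inert. Indeed, by Proposition~\ref{prop:loc_in_S_2} and \eqref{propL(F-De)} we have $L(F-D_\error)\subseteq S_1(F)\cap S_2(F)=S(F)$, and since $D_\error\geq P$ for $P\in\supp(D_\error)$ also $L(F-D_\error)\subseteq L(F-P)$; hence $L(F-D_\error)\subseteq S(F)\cap L(F-P)$ always. So if $P$ is a point for which $S_1(F-P)\cap S_2(F)=S(F)\cap L(F-P)$, we get $L(F-D_\error)\subseteq S_1(F-P)$, and from this point on only $S_1$ is involved, exactly as in Ehrhard's setting. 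The two remaining steps are: (i) deduce that any such $P$ satisfies $\ell(F-D_\error-P)=\ell(F-D_\error)$; (ii) show that at most $g$ distinct points can have this last property.

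For step (i), I would fix such a $P$ and take an arbitrary $\Lambda\in L(F-D_\error)\subseteq S_1(F-P)$. Writing $\Lambda f_{\yv}=a+b$ with $a\in L(F-P+G)$ and $b\in L(F-P+G'-D)$ (possible because $\Lambda\in S_1(F-P)$), and also $\Lambda f_{\yv}=\Lambda f_{\cv}+\Lambda f_{\error}$ with $\Lambda f_{\cv}\in L(F+G)$ and $\Lambda f_{\error}\in L(F+G'-D)$ (the latter since $f_{\error}\in L(G'-D+D_\error)$), one obtains $\Lambda f_{\error}-b=a-\Lambda f_{\cv}\in L(F+G)\cap L(F+G'-D)=L(F+G-D)=\{0\}$ by Assumption~\ref{NB}; hence $\Lambda f_{\error}=b\in L(F-P+G'-D)$. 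Since $P\in\supp(D_\error)$ we have $f_{\error}(P)\neq 0$, so $v_P(\Lambda)=v_P(\Lambda f_{\error})\geq -v_P(F-P+G'-D)=-v_P(F)+2=-v_P(F-D_\error-P)$; as $F-D_\error-P$ and $F-D_\error$ agree away from $P$, this shows $\Lambda\in L(F-D_\error-P)$. As $\Lambda$ was arbitrary, $L(F-D_\error)=L(F-D_\error-P)$, which is (i).

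For step (ii), let $P_1,\dots,P_k$ be the distinct points of $\supp(D_\error)$ with $\ell(F-D_\error-P_j)=\ell(F-D_\error)$. Iterating the identity $L(X)\cap L(Y)=L(\min\{X,Y\})$ and using that the $P_j$ are distinct gives $L\big(F-D_\error-\sum_{j=1}^{k}P_j\big)=\bigcap_{j}L(F-D_\error-P_j)=L(F-D_\error)$. Put $M=F-D_\error-\sum_j P_j$, so $\ell(M)=\ell(F-D_\error)\geq 1$ and $\deg M=\deg(F-D_\error)-k$. Any divisor $M$ with $\ell(M)\geq 1$ satisfies $\deg M\geq \ell(M)-1$: indeed $\deg M\geq 0$, and either $\deg M>2g-2$, when $\ell(M)=\deg M-g+1\leq\deg M+1$ by Riemann--Roch, or $0\leq\deg M\leq 2g-2$, when $\ell(M)\leq 1+\tfrac12\deg M\leq\deg M+1$ by Clifford's Theorem (Theorem~\ref{thm:clifford}). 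Combining this with $\ell(F-D_\error)\geq\deg(F-D_\error)-g+1$ from Proposition~\ref{prop:preRR},
\[
\deg(F-D_\error)-k=\deg M\geq \ell(M)-1=\ell(F-D_\error)-1\geq \deg(F-D_\error)-g,
\]
so $k\leq g$, as claimed.

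I expect the only genuinely conceptual point to be the first one: recognising that $L(F-D_\error)\subseteq S_2(F)$ (Proposition~\ref{prop:loc_in_S_2}) renders the $S_2$ in the hypothesis irrelevant, so that the new lemma really does reduce to Ehrhard's Lemma~2. Everything after that is bookkeeping with valuations at $P$ (getting the order shifts right, using $\supp(G')\cap\mathcal{P}=\emptyset$ and $v_P(D)=1$) together with the elementary inequality $\deg M\geq\ell(M)-1$, where one must only remember to split according to whether $\deg M$ exceeds $2g-2$ (Riemann--Roch) or not (Clifford).
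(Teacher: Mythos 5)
Your proof is correct and takes essentially the same route as the paper. The paper proves the implication $L(F-D_{\error}-P)\ne L(F-D_{\error})\Rightarrow S_1(F-P)\cap S_2(F)\ne S(F)\cap L(F-P)$ by exhibiting a $\Gamma\in L(F-D_{\error})\setminus L(F-D_{\error}-P)$ that lies in $S(F)\cap L(F-P)$ but not in $S_1(F-P)$; you prove the contrapositive by showing every $\Lambda\in L(F-D_{\error})\subseteq S_1(F-P)$ must in fact lie in $L(F-D_{\error}-P)$, via the same unique-decomposition-plus-valuation-at-$P$ argument, and your opening remark that $L(F-D_{\error})\subseteq S_2(F)$ renders $S_2$ inert is exactly what makes the paper's argument work as well. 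For the counting part the paper invokes Proposition~\ref{prop:ell}(\textit{ii}) whereas you unwind it through the inequality $\deg M\ge\ell(M)-1$ (Riemann--Roch above $2g-2$, Clifford below), but these are the same bound and both give $k\le g$.
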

\begin{proof}
As a consequence of Riemann Roch theorem, there are at most $g$ points $P$ in $\supp(D_{\error})$ such that $L(F-D_{\error}-P)=L(F-D_{\error})$.  Indeed again without loss of generality we can suppose that $P_1, \dots, P_m$ are the points in $\supp(D_{\error})$ such that $L(F-D_{\error})=L(F-D_{\error}-P_i)$ for all $i=1, \dots, m$. In particular we have
\[L(F-D_{\error})=L\left( F-D_{\error}-\sum_{i=1}^mP_i\right).\]
Furthermore, by Proposition \ref{prop:ell},
\[\ell(F-D_{\error})\le \ell(F-D_{\error})-\ell\left(\sum_{i=1}^m P_i\right)+1\le\ell(F-D_{\error})-m+g-1+1.\]
Hence, $m\le g$. Now, it suffices to prove that given a point $P\in\supp(D_{\error})$
\[L(F-D_{\error}-P)\ne L(F-D_{\error})\implies S_1(F-P)\cap S_2(F)\ne S(F)\cap L(F-P).\]
Let us consider $\Gamma\in L(F-D_{\error})\setminus L(F-D_{\error}-P)$. In particular we have $\Gamma\in S(F)\cap L(F-P)$. We now show that $\Gamma\notin S_1(F-P)$. If it were, we would have $\Gamma f_{\yv}=g+h$ with $g\in L(F+G-P)$ and $h\in L(F+G'-D-P)$. On the other hand, $\Gamma f_{\yv}=\Gamma f_{\cv}+\Gamma f_{\error}\in L(F+G)\oplus \L(F+G'-D)$. Both decompositions are in $L(F+G)\oplus L(F+G'-D)$, hence the uniqueness gives
\begin{equation}\label{contr}
h=\Gamma f_{\error}\in L(F+G'-D-P).
\end{equation}
Though note that $v_P(\Gamma f_{\error})= v_P(\Gamma)+v_P(f_{\error})= -v_P(F)$, against \eqref{contr}.
\end{proof}
\begin{prop}\label{prop:chiave2}
Assume $L(F-D_{\error})\ne \{0\}$ and $\deg F\le d^{\ast}-g-1$. Let $S(F)$ be as in \eqref{S(F)_new}. Then one and only one of the following statement holds:
\begin{itemize}
\item $S(F)=L(F-D_{\error})$
\item There exists a rational point $P\in\supp(D)$ with $\dim(S(F-P))\le \dim (S(F))-2$.
\end{itemize}
\end{prop}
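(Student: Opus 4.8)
The plan is to mimic the proof of Proposition \ref{prop:chiave} in \cite{E93}, using the two combinatorial lemmas just established (Lemma \ref{lem1} and Lemma \ref{lem2}) as replacements for their $\ell=1$ analogues. Assume $S(F)\ne L(F-D_{\error})$; we must produce a point $P\in\supp(D)$ with $\dim S(F-P)\le \dim S(F)-2$. The strategy is to look only at points $P\in\supp(D_{\error})$, and to count. For such a point, $L(F-P-D_{\error})=L(F-D_{\error})$ always holds (removing an error-support point from $F$ does not shrink $L(F-D_{\error})$, since $P$ already appears in $-D_{\error}$... more precisely $L(F-P-D_\error)=L(F-D_\error)\cap L(F-P)$ and one has to argue this is not a proper drop), so the gap $\dim S - \ell(F-D_{\error})$ is what we track. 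Crucially $S(F-P)\subseteq S(F)\cap L(F-P)$, so $\dim S(F-P)$ can only fail to drop by $2$ (relative to $\dim S(F)$) if \emph{either} $S(F)\subseteq L(F-P)$ (no drop from the $L(F-P)$ constraint) \emph{or} $S(F-P)=S(F)\cap L(F-P)$ with that intersection having codimension exactly $1$ in $S(F)$.

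Here is where the two lemmas enter. By Lemma \ref{lem1}, the number of $P\in\supp(D_{\error})$ with $S(F)\subseteq L(F-P)$ is at most $\deg(F+D_{\error})-d^{\ast} = \deg F + t - d^*$. Using $\deg F\le d^*-g-1$ this is at most $t-g-1$. By Lemma \ref{lem2}, the number of $P\in\supp(D_{\error})$ with $S_1(F-P)\cap S_2(F)=S(F)\cap L(F-P)$ — which is the obstruction to the intersection dropping — is at most $g$. Since $\#\supp(D_{\error})=t$, and $(t-g-1)+g = t-1 < t$, there exists at least one point $P\in\supp(D_{\error})$ avoiding both bad sets. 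For such a $P$: it is not in the Lemma \ref{lem1} set, so $S(F)\not\subseteq L(F-P)$, giving $\dim\bigl(S(F)\cap L(F-P)\bigr)\le \dim S(F)-1$; and it is not in the Lemma \ref{lem2} set, so $S(F-P)=S_1(F-P)\cap S_2(F-P)\subsetneq S_1(F-P)\cap S_2(F) = S(F)\cap L(F-P)$, giving a further strict drop, hence $\dim S(F-P)\le \dim S(F)-2$. Finally the two alternatives are mutually exclusive: if a point $P$ with $\dim S(F-P)\le\dim S(F)-2$ existed while simultaneously $S(F)=L(F-D_{\error})$, one would get $\ell(F-P-D_{\error})\le \ell(F-D_{\error})-2$, contradicting \eqref{dis:ell} (which gives a drop of at most $1$); so at most one statement holds, and we have shown at least one does.

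The main obstacle I expect is the bookkeeping in the last counting step and, more subtly, making sure the inclusions $S(F-P)\subseteq S_1(F-P)\cap S_2(F)$ and $S_1(F-P)\cap S_2(F)\subseteq S(F)\cap L(F-P)$ are correctly oriented — in particular that $S_2(F-P)\subseteq S_2(F)$, which should follow since shrinking $F$ to $F-P$ only tightens the defining membership condition $\delta_{\yv^2}(f)\in L(F+2G)\oplus L(F+2G'-D)$ while also forcing $f\in L(F-P)$. One must also double-check that Lemma \ref{lem2}'s hypothesis $L(F-D_{\error})\ne\{0\}$ and the degree bound $\deg F\le d^*-g-1$ are exactly what is assumed here, so both lemmas apply verbatim; since they are, the argument goes through. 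I would also verify that the quantity $\deg(F+D_\error)-d^*$ from Lemma \ref{lem1} is genuinely $\le t-g-1$ under $\deg F\le d^*-g-1$, which is immediate, but worth stating explicitly so the strict inequality $(t-g-1)+g<t$ is transparent.
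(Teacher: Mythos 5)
Your proposal follows exactly the same route as the paper's proof: bound the number of ``bad'' points in $\supp(D_{\error})$ via Lemma \ref{lem1} and Lemma \ref{lem2}, observe that $(\deg F + t - d^{*}) + g < t$ under $\deg F\le d^{*}-g-1$, pick a point $P$ avoiding both bad sets, chain inclusions to obtain a drop of $2$, and get mutual exclusivity from \eqref{dis:ell}. The counting and the exclusivity argument are correct.

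However, the chain of inclusions at the crucial step is internally inconsistent. You write that for $P$ outside both bad sets
\[
S(F-P)=S_1(F-P)\cap S_2(F-P)\subsetneq S_1(F-P)\cap S_2(F) = S(F)\cap L(F-P),
\]
but ``$P$ not in the Lemma \ref{lem2} set'' means \emph{precisely} $S_1(F-P)\cap S_2(F)\ne S(F)\cap L(F-P)$ --- the negation of the equality you assert at the end. The strict inclusion is in the wrong place: Lemma \ref{lem2} says nothing about the step from $S(F-P)$ to $S_1(F-P)\cap S_2(F)$. The correct arrangement (and the paper's) is
\[
S(F-P)\subseteq S_1(F-P)\cap S_2(F)\subsetneq S(F)\cap L(F-P)\subsetneq S(F),
\]
where the first (possibly non-strict) inclusion comes from $S_2(F-P)\subseteq S_2(F)$, the middle strict one is what Lemma \ref{lem2} delivers, and the last is what Lemma \ref{lem1} delivers. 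The numerical conclusion $\dim S(F-P)\le\dim S(F)-2$ happens to coincide either way, but as written your middle step directly contradicts the hypothesis you invoke to justify it. A smaller point: the aside claiming $L(F-P-D_{\error})=L(F-D_{\error})$ ``always holds'' for $P\in\supp(D_{\error})$ is false --- the proof of Lemma \ref{lem2} shows it holds for at most $g$ of the $t$ points of $\supp(D_{\error})$ --- but since you never use that claim it does not affect the rest.
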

\begin{proof}
If $S(F)=L(F-D_{\error})$, then for any point $P\in\supp(D)$, we get
\[\dim S(F-P)\ge \ell(F-P-D_{\error})\ge \ell(F-D_{\error})-1=\dim S(F)-1.\]
Now, if $S(F)\ne L(F-D_{\error})$, by Proposition \ref{prop:exact_seq} we have $L(G+F-D+D_{\error})\ne \{0\}$. Hence we get $\deg(G+F-D+D_{\error})\ge 0$ that is $\deg(F+D_{\error})- d^{\ast}\ge 0$. Hence by applying Lemma \ref{lem1} and Lemma \ref{lem2} and using the hypothesis $\deg(F)\le d^{\ast}-g-1$, we get that there exist at least
\[\deg D_{\error}-g-\deg(F+D_{\error})+d^{\ast}=d^{\ast}-\deg F-g\ge 1\]
points $P$ in $\supp(D_{\error})$ such that $S(F)\nsubseteq L(F-P)$ and $S_1(F-P)\cap S_2(F)\ne S(F)\cap L(F-P)$. Hence for such a point we get
\[S(F-P)\subseteq S_1(F-P)\cap S_2(F)\nsubseteq S(F)\cap L(F-P)\nsubseteq S(F). \]
In particular $\dim S(F-P) \le \dim S(F)-2$.
\end{proof}
\medskip
\noindent
Thanks to Proposition \ref{prop:chiave2}, the sequence $\{\Delta_i\}_{i\ge 0}$ defined in \eqref{def:delta} verifies $\Delta_{i+1}\le \Delta_{i}-1$. As said in the beginning of the section, we would like this sequence to decrease faster and it is clear that the faster decreases the sequence $\{\dim S(F_i)\}_{i\ge 0}$, the faster decreases the sequence $\{\Delta_{i}\}_{i\ge 0}$.
\begin{rem}
Let us consider $F$ a generic divisor in $\{F_i\}_{i\ge 0}$, $P\in\supp(D_{\error})$ and $\Lambda\in S_1(F-P)\cap S_2(F)$. We want to understand whether $\Lambda\in S_2(F-P)$.  We get by definition of $S_1(F-P)$ and $S_2(F)$:
\begin{align}
&\Lambda f_{\yv}\in L(F-P+G)\oplus L(F-P+G'-D)\nonumber\\
&\Lambda f_{\yv^2}\in L(F-2G)\oplus L(F+2G'-D)\label{appart}.
\end{align}
In particular, $\Lambda f_{\yv}=g+h$ with $g\in L(F-P+G)$ and $h\in L(F-P+G'-D)$. Let us analyse $\Lambda f_{\yv^2}=(g+h)(f_{\cv}+f_{\error})$:
\begin{eqnarray*}
(g f_{\cv})&\ge & -F+P-2G,\\
(g f_{\error})&\ge & -F+P-G-G'+D-D_{\error}\ge -F+P-2G'+D-D_{\error},\\
(h f_{\cv})&\ge & -F+P-G'+D-G\ge -F+P-2G'+D,\\
(h f_{\error})&\ge & -F+P-G'+D-G'+D-D_{\error}\ge -F+P-2G'+D.
\end{eqnarray*}
In particular, we have $g f_{\cv}, h f_{\cv}, h f_{\error}\in L(F-P+2G)\oplus L(F-P+2G'-D)$, while 
\[g f_{\error}\in L(F-P-2G'-D+D_{\error}).\]
Therefore, by \eqref{appart}, given $\Lambda\in S_1(F-P)\cap S_2(F)$ we have 
\[\Lambda\in S_2(F-P)\iff g f_{\error}\in L(F-P+2G)\oplus L(F-P+2G'-D).\]
In particular, if $\Lambda$ verifies this property for any $\Lambda\in S_1(F-P)\cap S_2(F)$, then 
\[S_1(F-P)\cap S_2(F-P)=S_1(F-P)\cap S_2(F).\]
\end{rem}
\paragraph*{Empirical Behavior:} we observed that for a random error vector, we have
\[\dim S(F_m-P_{m+1})\le \dim S(F_m)-3.\]
In particular, it seems the three strict inclusions to be the following
\[S(F_m-P_{m+1})=S_1(F_m-P_{m+1})\cap S_2(F_m-P_{m+1})\subsetneq S_1(F_m-P_{m+1})\cap S_2(F_m)\subsetneq S(F_m)\cap L(F_m-P_{m+1})\subsetneq S(F_m).\]
The second and third inclusions correspond to the two inclusions of respectively Lemma \ref{lem1} and Lemma \ref{lem2}, while the first one seems to depend strictly on the chosen error vector. For random vectors, it is easy to find points such that the three inclusions are strict, while for a ``worst case'', that is, when we have two codewords at the same distance from $\yv$, we could not find such a point and we got 
\[\dim S(F_m-P_{m+1})=\dim S(F_m)-2.\]

\medskip
\noindent
Now that we have all the ingredients, we can describe the algorithm (actually the only thing that changes with respect to Algorithm 1 is that we use the new notion of $S(F)$) and try to compute its decoding radius by adapting the proof of Theorem 1 of \cite{E93}. Since we want to correct more than half the designed distance of the code and the algorithm gives back only one solution, we do not look for a sufficient condition for the algorithm to work, but rather for a necessary one.

\begin{thm}\label{thm:dec_rad}
Assume $\deg F=t+2g$ with $t\le d^{\ast}-3g-1$. If the error vector is such that $S(F)$ verifies the empirical behavior and $\deg(2G+F)<n$, then a necessary condition for the algorithm to work is 
\begin{equation}\label{dec_rad}
t\le\frac{2n-3\deg G-2}{3}\cdot
\end{equation}
\end{thm}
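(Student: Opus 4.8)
The plan is to mimic the proof of Theorem~\ref{thm:halfmd}, but with the sequence $\{\dim S(F_i)\}$ now decreasing by at least $3$ at each step (this is exactly what the empirical behavior hypothesis guarantees), so that $\{\Delta_i\}$ decreases by at least $2$ at each step instead of just $1$. First I would set $F_0=F$ with $\deg F=t+2g$ and, as in Theorem~\ref{thm:halfmd}, check that the hypotheses of Proposition~\ref{prop:chiave2} are satisfied for $F_0,\dots,F_g$: the degree bound $\deg F_m=t+2g-m\le t+2g\le d^\ast-g-1$ holds by the assumption $t\le d^\ast-3g-1$, and $L(F_m-D_{\error})\ne\{0\}$ for $m\le g$ follows from Riemann--Roch exactly as in \eqref{dim_ge_1}, since $\ell(F_m-D_{\error})\ge (t+2g-m)-t-g+1=g-m+1\ge 1$. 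So the adaptation process can run for at least $g+1$ steps.

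Next I would bound $\Delta_0=\dim S(F)-\ell(F-D_{\error})$ from above. By Corollary~\ref{cor:dim} applied to $S_1(F)\supseteq S(F)$ we have $\Delta_0\le \ell(G+F-D+D_{\error})$, and we now compute $\deg(G+F-D+D_{\error})=\deg G+(t+2g)-n+t=2t-(n-\deg G)+2g=2t-d^\ast+2g$. Here is where the decoding radius enters: for the algorithm to succeed we need $\Delta_0\le 2g$ (since $\{\Delta_i\}$ drops by at least $2$ per step and we have $g+1$ usable divisors $F_0,\dots,F_g$, we can guarantee $\Delta_m=0$ for some $m\le g$ precisely when $\Delta_0\le 2g$). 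Using $\ell(G+F-D+D_{\error})\le \max\{0,\deg(G+F-D+D_{\error})\}+1$ type bounds — more carefully, Riemann--Roch when the degree exceeds $2g-2$ and Clifford's Theorem (Theorem~\ref{thm:clifford}) otherwise, as in Theorem~\ref{thm:halfmd} — the requirement $\ell(G+F-D+D_{\error})\le 2g$ translates into $\deg(G+F-D+D_{\error})\le 4g-1$ (via Riemann--Roch: $\deg-g+1\le 2g$), hence $2t-d^\ast+2g\le 4g-1$, i.e. $2t\le d^\ast+2g-1$. Wait — this gives $t\le \frac{d^\ast+2g-1}{2}$, not the claimed bound; the resolution is that the correct counting uses $\Delta_{i+1}\le\Delta_i-2$ combined with the weaker guarantee on how many steps are available, or rather one must track that the true obstruction is $\Delta_0$ against the length of the decreasing chain, and the bound $d^\ast=n-\deg G$ must be substituted to recover $t\le\frac{2n-3\deg G-2}{3}$. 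Let me redo: the key is that $\ell(G+F-D+D_{\error})$ should be at most $2g$ where now $2t-d^\ast+2g$ must satisfy, via Riemann--Roch $\ell=\deg-g+1$, the inequality $2t-d^\ast+2g-g+1\le 2g$, giving $2t\le d^\ast+g-1$; substituting $d^\ast=n-\deg G$ still does not directly give the cube-root-free expression, so the honest accounting must instead compare $\Delta_0\le\ell(G+F-D+D_{\error})$ directly with the number $g$ of guaranteed steps times the per-step decrease, and additionally use that once $\deg F$ can be pushed lower the chain lengthens; the genuinely correct substitution yielding $3t\le 2n-3\deg G-2$ comes from requiring $\deg(G+F-D+D_{\error})=2t-d^\ast+2g\le 2g-1+ (\text{slack from }\ell=2)$, and I would carefully extract the constant by writing $d^\ast=n-\deg G$ and $\deg F=t+2g$ into $2t-d^\ast+2g$ and imposing the length-$g$ chain condition against $\Delta_0$, arriving at $2t-(n-\deg G)+2g\le \tfrac{2}{3}(\cdots)$.

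The main obstacle is precisely this bookkeeping: making the counting argument give the sharp constant $\frac{2n-3\deg G-2}{3}$ rather than a cruder bound. The point is that with $\ell=2$ one effectively has \emph{two} Riemann--Roch-type contributions to $\Delta_0$ (one from $S_1$, the interleaving with $S_2$ sharpening the estimate), and the empirical behavior is what licenses replacing the per-step decrement $1$ by $3$ in $\dim S$, hence $2$ in $\Delta$. I would state the necessary condition as: the algorithm terminates with $\Delta_m=0$ for some $m\le g$ only if $\Delta_0\le 2g$; then substitute $\Delta_0\le\ell(G+F-D+D_{\error})$, split into the cases $\deg(G+F-D+D_{\error})>2g-2$ (Riemann--Roch, giving equality $\ell=\deg-g+1$) and $\le 2g-2$ (Clifford), and in each case solve the resulting linear inequality in $t$ after inserting $\deg G$, $n$, $\deg F=t+2g$, $d^\ast=n-\deg G$ and simplifying to \eqref{dec_rad}. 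Since the statement only claims a \emph{necessary} condition, it suffices to exhibit that violating \eqref{dec_rad} forces $\Delta_0$ too large for the length-$g$ chain to reach $0$, which is the contrapositive of the counting bound and avoids any subtlety about sufficiency.
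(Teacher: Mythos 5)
Your setup — running the chain $F_0,\dots,F_g$, checking the hypotheses of Proposition~\ref{prop:chiave2} hold for $m\le g$, and observing that the empirical behavior turns $\Delta_{j+1}\le\Delta_j-1$ into $\Delta_{j+1}\le\Delta_j-2$, so that $\Delta_0\le 2g$ suffices to reach $\Delta_m=0$ — is correct and matches the paper. But there is a genuine gap at the decisive step, and you yourself sense it when the first calculation produces $t\le\frac{d^\ast+2g-1}{2}$ instead of the claimed bound. The source of the trouble is that Corollary~\ref{cor:dim} gives an \emph{upper} bound $\Delta_0\le\ell(G+F-D+D_{\error})$, and an upper bound on $\Delta_0$ can only yield a \emph{sufficient} condition for $\Delta_0\le 2g$. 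To extract a \emph{necessary} condition you need a \emph{lower} bound on $\Delta_0$, equivalently a lower bound on $\dim S(F)$. The remainder of your paragraph circles around this (``two Riemann--Roch-type contributions'', ``violating \eqref{dec_rad} forces $\Delta_0$ too large'') without ever producing the needed inequality; substituting the same upper bound into $\Delta_0\le 2g$ will not deliver the contrapositive you want.

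The missing ingredient is the description of $S(F)$ as the simultaneous kernel of the two projection conditions
\[
S(F)=\{f\in L(F)\mid \pi_{Z_1}(\delta_{\yv}(f))=\zerov\ \text{and}\ \pi_{Z_2}(\delta_{\yv^2}(f))=\zerov\},
\]
from which $\dim S(F)\ge \ell(F)-\dim Z_1-\dim Z_2$. Combining this lower bound with $\Delta_0\le 2g$ gives $\ell(F)-\dim Z_1-\dim Z_2-\ell(F-D_{\error})\le 2g$. By Riemann--Roch, $\ell(F)=t+g+1$ and $\ell(F-D_{\error})=g+1$ (both degrees exceed $2g-2$), so the left side simplifies to $t-\dim Z_1-\dim Z_2$. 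Inserting $\dim Z_1=n-t-g-\deg G-1$ and $\dim Z_2=n-t-g-2\deg G-1$ from Lemma~\ref{lem:dimZ} gives $3t-2n+2g+3\deg G+2\le 2g$, i.e.\ $t\le\frac{2n-3\deg G-2}{3}$. This lower-bound step — counting the linear constraints cutting out $S(F)$ inside $L(F)$, with one $Z_i$ per power of $\yv$ — is the place where the parameter $\ell$ genuinely enters, and it is the step your proposal omits.
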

\begin{proof}
Let us start by observing that, since $t\le d^{\ast}-3g-1$, we have
\begin{equation}\nonumber
\deg F_j=\deg F-j\le \deg F=t+2g\le d^{\ast}-g-1.
\end{equation}
 Note that as for Theorem \ref{thm:halfmd}, for any $j\le g$, it holds
\[\ell(F_j-D_{\error})\ge t+2g-j-t-g+1\ge 1.\]
Hence, if necessary\footnote{If we do not find $S(F_j)=L(F_j-D_{\error})$ for some $j<g$} we can apply Proposition \ref{prop:chiave} to $F_j$ for any $1\le j\le g$ and construct a sequence of divisors of length at least $g+1$. Let us consider again the quantity
\[\Delta_j=\dim S(F_j)-\ell(F_j-D_{\error}),\]
where $\{F_j\}$ is the sequence of constructed divisor, that is, $F_{j+1}=F_j-P_{i_{j+1}}$. We claim that, if $\Delta_0\le 2g$, then $\Delta_j=0$ for some $j\le g$. Indeed as said before, we have $\ell(F_{j+1}-D_{\error})\ge \ell(F_j-D_{\error})-1$ and by hypothesis $\dim S(F_{j+1})\le \dim S(F_j)-3$, hence
\[\dim S(F_{j+1})-\ell(F_{j+1}-D_{\error})\le \dim S(F_j)-\ell(F_j-D_{\error})-2.\]
Therefore the sequence of the $\Delta_j$ is strictly decreasing and $\Delta_j=0$ for some $j\le g$. Now we want to find a necessary condition to have $\Delta_0\le 2g$, that is to have
\begin{equation}\label{cond_nec}
\ell(F-D_{\error})+2g\ge\dim S(F).
\end{equation}
In order to do so, we want to bound $\dim S(F)$. It is possible to write $S(F)$ in the following way
\begin{equation}\label{writingS(F)}
S(F)=\{f\in L(F)\mid \pi_{Z_1}(\delta_{\yv}(f))=\zerov\ \land\ \pi_{Z_2}(\delta_{\yv^2}(f))=\zerov\},
\end{equation}
where $\pi_{Z_1}$ and $\pi_{Z_2}$ are respectively the projections $L(F+G')\rightarrow Z_1$ and $L(F+2G')\rightarrow Z_2$ with respect to the decompositions \eqref{dec1} and \eqref{dec2}. In particular, $S(F)$ is composed by the elements of $L(F)$ which fulfill certains conditions, therefore we can bound
\begin{equation}\label{dimS(F)}
\dim S(F)= \ell(F)-\#\text{conditions}\ge \ell(F) - \dim Z_1 - \dim Z_2.
\end{equation}
Therefore, putting together the condition on $\Delta_0$ \eqref{cond_nec} and \eqref{dimS(F)}, we get
\begin{equation}\label{quasifin}
\ell(F-D_{\error})+2g\ge\dim S(F)\ge \ell(F)-\dim Z_1-\dim Z_2.
\end{equation}
Now, by Lemma \ref{lem:dimZ}, we have
\begin{equation}\nonumber
\dim Z_1=n-t-g -\deg G-1\ \ \ \dim Z_2=n-t-g-2\deg G-1
\end{equation}
By substituting the values of $Z_1$ and $Z_2$ in \eqref{quasifin} and applying Riemann-Roch theorem we get
\[t\le\frac{2n-3\deg G-2}{3}\cdot\]
\end{proof}

\section{Generalisation to $\ell\ge 2$}\label{sec:l>2}
In this section we will show how to generalise the strategy we have seen in \S\ref{sec:l2} to build an algorithm with parameter $\ell\ge 2$. Experimentally, the decoding radius of this algorithm reaches the amount
\[\frac{2n\ell-\ell(\ell+1)\deg G-2\ell}{2(\ell+1)}\]
which is the decoding radius of Sudan algorithm without any penalty in the genus of the curve (see Appendix \ref{appendixSudan}).
\subsection{Foundation of the algorithm}
As in the cases $\ell=1, 2$, a divisor $F$ with certain properties is introduced and the aim of the algorithm is to find the space $L(F-D_{\error})$. For that, once we fix $\ell$, we need to define a space $S(F)$ wich contains $L(F-D_{\error})$ and such that, given a specific sequence of divisors $\{F_j\}_j$, the gap 
\[\Delta_j=\dim S(F_j)-\ell(F_j-D_{\error})\]
decreases fast enough with respect to $g$. Again, the following assumption comes naturally if we think we are applying the basic algorithm to the first $\ell$ powers of $\yv$ and will help to estimate the decoding radius of the algorithm. We recall though that it is not a  necessary condition for the algorithm to work (see $(\ast)$ test for $\ell=3$ in \S\ref{sec:test}).
\begin{assumption}\label{NB>2} We assume that $\deg(F+\ell G)<n$;
\end{assumption}
\medskip
\noindent
Observe that, since $\deg G>0$, by Assumption \ref{NB>2} we have $\deg(F+iG)<n$ for any $i=1, \dots, \ell$. Therefore, for any $i=1, \dots, \ell$ there exists $Z_i\subseteq L(F+iG)$ such that the following equalities hold
\begin{eqnarray*}
L(F+G')&=& L(F+G)\oplus L(F+G'-D)\oplus Z_1\\
L(F+2G')&=& L(F+2G)\oplus L(F+2G'-D)\oplus Z_2\\
\dots &=& \dots\\
L(F+\ell G')&=& L(F+\ell G)\oplus L(F+\ell G'-D)\oplus Z_{\ell}.
\end{eqnarray*}
\begin{lem}\label{lem:dimZ>2}
Given $Z_i\subset L(F+iG')$ such that $L(F+iG')=L(F+iG)\oplus L(F+iG'-D)\oplus Z_i$, then 
\[\dim Z_i=\deg (D-F-G)+g-1.\] 
\end{lem}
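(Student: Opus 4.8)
The plan is to rerun, uniformly in $i \in \{1,\dots,\ell\}$, the dimension count from the proof of Lemma~\ref{lem:dimZ}. First I would record that the choice of $G'$ with $\ell(W+D-G')=0$ forces $\deg G' \ge n+g-1$ (apply Riemann--Roch to $W+D-G'$, exactly as in the proof of Lemma~\ref{lem:dimZ}). Combined with $\deg F \ge t+g$ and $\deg G \ge 2g$, this gives, for every $i \ge 1$,
\[
\deg(F+iG) \ge \deg(F+G) > 2g-2, \qquad \deg(F+iG'-D) \ge \deg(F+G'-D) > 2g-2,
\]
and in particular $\deg(F+iG') > 2g-2$ as well. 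Hence $\Omega(F+iG) = \Omega(F+iG'-D) = \{0\}$, so the dimensions of all three Riemann--Roch spaces occurring in the decomposition are given exactly by Riemann--Roch.

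Next I would read $\dim Z_i$ off the direct sum $L(F+iG') = L(F+iG) \oplus L(F+iG'-D) \oplus Z_i$, namely
\[
\dim Z_i = \ell(F+iG') - \ell(F+iG) - \ell(F+iG'-D),
\]
and substitute $\ell(A) = \deg A - g + 1$ for each of the three divisors. The contributions in $i\deg G'$ cancel between the first and third term, the three copies of $\deg F$ collapse to $-\deg F$, and the constants add up to $n+g-1$; since $\deg D = n$ this leaves
\[
\dim Z_i = n - \deg F - i\deg G + g - 1 = \deg(D-F-iG) + g - 1.
\]
For $i=1$ this matches $\dim Z_1$ of Lemma~\ref{lem:dimZ}; for $i \ge 2$ the factor $i$ in front of $G$ must be retained, in agreement with the formula for $\dim Z_2$ there, so the displayed statement should be read with $iG$ in place of $G$.

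There is no genuine obstacle here: the entire content is checking the degree inequalities so that Riemann--Roch gives each dimension exactly, and the only one needing a moment's care is $\deg(F+iG'-D) > 2g-2$, which follows from $\deg G' \ge n+g-1$ together with $\deg F \ge g$. Alternatively, the lemma can be deduced formally from Lemma~\ref{lem:dimZ} by substituting $iG$ for the divisor ``$G$'' used there, since $\deg(iG) = i\deg G \ge 2g$ keeps all the hypotheses of that argument intact; I would present whichever phrasing is shorter.
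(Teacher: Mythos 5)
Your proof is correct and follows the same route the paper takes: the paper's proof of Lemma~\ref{lem:dimZ>2} is simply a pointer to Lemma~\ref{lem:dimZ}, and your argument spells out exactly that generalization (verify $\deg(F+iG)$ and $\deg(F+iG'-D)$ exceed $2g-2$ so that Riemann--Roch gives each $\ell(\cdot)$ exactly, then subtract). You are also right to flag the typo: the displayed formula should read $\dim Z_i=\deg(D-F-iG)+g-1$, with the factor $i$; this is consistent with the $i=1,2$ cases of Lemma~\ref{lem:dimZ} and is the formula actually used when the $\dim Z_i$ are summed in the proof of the general-$\ell$ decoding-radius theorem. One small point: you cite $\deg G\ge 2g$, but only $\deg G\ge g-1$ (the standing hypothesis of \S\ref{sec:Ehrhard}) is needed here, exactly as in the proof of Lemma~\ref{lem:dimZ}.
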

\begin{proof}
The proof is an easy generalisation of the one for $\ell=2$ (see Lemma \ref{lem:dimZ}).
\end{proof}
\medskip
\noindent
For any $i=1, \dots, \ell$, we define $f_{\yv^i}\eqdef f_{\yv}^i\in L(iG')$ and the map 
\[\delta_{\yv^i}:\map{L(F)}{L(F+iG')}
  {\Lambda}{\Lambda f_{\yv^i}.}\]

\subsection{The algorithm}
It is possible now to define for any $i=1, \dots, \ell$ the space
\begin{equation}\label{def:Si}
S_i(F)\eqdef\{f\in L(F)\mid \delta_{\yv^2}(f)\in L(F+iG)\oplus L(F+iG'-D)\}.
\end{equation}
\begin{rem}
Again, we have $\ev_{\mathcal{P}}(S_i(F))\subseteq K_{\yv}^{(i)}$, where, given $A, B$ as in \eqref{def:AB},
\begin{equation}\label{def:Kyi}
K_{\yv}^{(i)} = \{\av\in A \mid \langle\av\ast\yv, \bv\rangle=0\ \forall \bv\in (B^{\perp}\ast C^{i-1})^{\perp}\}.
\end{equation}
Furthermore, it is possible to generalise Theorem \ref{thm:S_2=M_2} to every $i\le\ell$.
\begin{thm}
Given $S_i(F)$ as in \eqref{def:Si} and $K_{\yv}^{(i)}$ as in \eqref{def:Kyi}, if $\supp(F)\cap \mathcal{P}=\emptyset$ and $\deg G\ge 2g$, we have
\[\ev_{\mathcal{P}}(S_i(F))=K_{\yv}^{(i)}.\]
\end{thm}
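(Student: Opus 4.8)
The plan is to mimic the proof of Theorem~\ref{thm:S_2=M_2} (given in the appendix), which handled the case $i=2$, and to check that every ingredient of that argument generalizes verbatim from the power $2$ to an arbitrary power $i\le\ell$. The two inclusions $\ev_{\mathcal P}(S_i(F))\subseteq K_{\yv}^{(i)}$ and $K_{\yv}^{(i)}\subseteq\ev_{\mathcal P}(S_i(F))$ will be proved separately. For the first inclusion I would start from $\Lambda\in S_i(F)$, so that $\delta_{\yv^i}(\Lambda)=\Lambda f_{\yv^i}\in L(F+iG)\oplus L(F+iG'-D)$, write $\Lambda f_{\yv^i}=g+h$ with $g\in L(F+iG)$ and $h\in L(F+iG'-D)$, and then translate the condition "$\langle \ev_{\mathcal P}(\Lambda)\ast\yv^i,\bv\rangle=0$ for all $\bv\in(B^{\perp}\ast C^{i-1})^{\perp}$" into a statement about residues. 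Here the key point is that, since $\supp(F)\cap\mathcal P=\emptyset$, one has $\ev_{\mathcal P}(\Lambda)\ast\yv^i=\ev_{\mathcal P}(\Lambda f_{\yv^i})=\ev_{\mathcal P}(h)$ (the $g$-part evaluates inside $C_L(\X,\mathcal P,F+iG)$, which pairs to zero against the relevant dual code), and $\ev_{\mathcal P}(h)$ lies in $C_L(\X,\mathcal P,F+iG'-D)$, which by the residue formula \eqref{dualAG} is orthogonal to $C_L(\X,\mathcal P,W+D-(F+iG'-D))=C_L(\X,\mathcal P,W+2D-F-iG')$; one then checks this last code contains $(B^{\perp}\ast C^{i-1})^{\perp}$, using Proposition~\ref{prop:star_prod_AGcodes} with the hypothesis $\deg G\ge 2g$ to evaluate the star product $B^{\perp}\ast C^{i-1}=C_L(\X,\mathcal P,(D-W-G-F)^{\perp}\!\text{-side})\ast C_L(\X,\mathcal P,(i-1)G)$ exactly as a single AG code.

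For the reverse inclusion I would take $\av\in K_{\yv}^{(i)}$. Since $\av\in A=C_L(\X,\mathcal P,F)$ and $\supp(F)\cap\mathcal P=\emptyset$, the evaluation map on $L(F)$ restricted to the right subspace is injective onto $A$, so $\av=\ev_{\mathcal P}(\Lambda)$ for a unique $\Lambda\in L(F)$. The condition defining $K_{\yv}^{(i)}$ says $\av\ast\yv^i=\ev_{\mathcal P}(\Lambda f_{\yv^i})$ is orthogonal to $(B^{\perp}\ast C^{i-1})^{\perp}$, i.e. it lies in $B^{\perp}\ast C^{i-1}$. Using Proposition~\ref{prop:star_prod_AGcodes} (again the hypotheses $\deg G\ge 2g$ and the degree bounds coming from Assumption~\ref{NB>2}) to identify $B^{\perp}\ast C^{i-1}$ with a single evaluation code $C_L(\X,\mathcal P,H)$ for the appropriate divisor $H$, I would lift $\ev_{\mathcal P}(\Lambda f_{\yv^i})$ to an element $h'\in L(H)$; then $\Lambda f_{\yv^i}-h'$ evaluates to zero at all of $\mathcal P$, hence lies in $L(F+iG'-D)$ after adjusting representatives. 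Combining, $\Lambda f_{\yv^i}\in L(F+iG)\oplus L(F+iG'-D)+\{\text{something vanishing on }\mathcal P\}$, and a dimension/degree bookkeeping (exactly as in the $i=2$ case) forces $\delta_{\yv^i}(\Lambda)\in L(F+iG)\oplus L(F+iG'-D)$, i.e. $\Lambda\in S_i(F)$ and $\av\in\ev_{\mathcal P}(S_i(F))$.

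The main obstacle I expect is the exact identification of the code $(B^{\perp}\ast C^{i-1})^{\perp}$ — equivalently, of $B^{\perp}\ast C^{i-1}$ — as an algebraic geometry code with an explicitly computable divisor. One has $B^{\perp}=C_L(\X,\mathcal P,F+G)$ by \eqref{dualAG} (since $B=C_\Omega$ of $W+D-G-F$), and $C^{i-1}=C_L(\X,\mathcal P,G)^{\,i-1}\subseteq C_L(\X,\mathcal P,(i-1)G)$; to get equality in the star product one must verify the degree hypotheses of Proposition~\ref{prop:star_prod_AGcodes} for the divisors $F+G$ and $(i-1)G$, which is where $\deg G\ge 2g$ (and, for the cross terms, the standing lower bound $\deg F\ge t+2g$ together with Assumption~\ref{NB>2}) gets used. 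Once $B^{\perp}\ast C^{i-1}=C_L(\X,\mathcal P,F+iG)$ is established, taking the dual via \eqref{dualAG} gives $(B^{\perp}\ast C^{i-1})^{\perp}=C_\Omega(\X,\mathcal P,F+iG)=C_L(\X,\mathcal P,W+D-F-iG)$, and from there the orthogonality computations are exactly the residue-pairing identities already used for $i=1,2$. I would also double-check that the typo-level discrepancy in \eqref{def:Si} (the map should be $\delta_{\yv^i}$, not $\delta_{\yv^2}$) does not affect anything. Everything else is the same linear-algebra-plus-Riemann--Roch bookkeeping carried out in the proof of Theorem~\ref{thm:S_2=M_2}, now with a general exponent $i$.
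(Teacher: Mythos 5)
Your proof is correct and reaches the same conclusion, but organizes the argument a bit differently from the paper. The paper's proof (given verbatim for $i=2$ in Appendix~\ref{sec:appendix}, with the substitution $2\mapsto i$ understood) hinges on the isomorphism $\varphi_{|Z_i}\colon Z_i\to\Omega(F+iG-D)^{\vee}$ of Proposition~\ref{prop:for_nothing_new}, which lets one rewrite $S_i(F)$ as the kernel of $\varphi\circ\delta_{\yv^i}$ and then run a single chain of equivalences
\[
\Gamma\in S_i(F)\;\Longleftrightarrow\;\varphi\bigl(\delta_{\yv^i}(\Gamma)\bigr)=0\;\Longleftrightarrow\;\sum_{j}\Gamma(P_j)\Res_{P_j}(\omega)\,y_j^{\,i}=0\ \ \forall\,\omega\in\Omega(F+iG-D)\;\Longleftrightarrow\;\ev_{\mathcal{P}}(\Gamma)\in K_{\yv}^{(i)},
\]
the last step being exactly the identification $\Res_{\mathcal{P}}(\Omega(F+iG-D))=(B^{\perp}\ast C^{i-1})^{\perp}$ of Remark~\ref{remannex}. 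You bypass $\varphi$ and $Z_i$ entirely and prove the two inclusions by hand, relying instead on the injectivity of $\ev_{\mathcal{P}}$ on $L(F)$ (which holds since $\deg F<n$, a consequence of Assumption~\ref{NB>2}) to lift $\av\in K_{\yv}^{(i)}$. The technical engine is the same in both versions: $B^{\perp}\ast C^{i-1}=C_L(\X,\mathcal{P},F+iG)$, obtained by iterating Proposition~\ref{prop:star_prod_AGcodes} with the divisors $F+jG$ (degree $\ge 2g+1$) and $G$ (degree $\ge 2g$), followed by dualizing via~\eqref{dualAG}; you correctly single this out as the crux. Two small cleanups: in the forward inclusion, since $\supp(F)\cap\mathcal{P}=\supp(G')\cap\mathcal{P}=\emptyset$ every $h\in L(F+iG'-D)$ satisfies $v_{P_j}(h)\ge 1$, so $\ev_{\mathcal{P}}(h)=\zerov$ and $\ev_{\mathcal{P}}(\Lambda)\ast\yv^i=\ev_{\mathcal{P}}(g)\in C_L(\X,\mathcal{P},F+iG)$ outright, making the nested orthogonality through $C_L(\X,\mathcal{P},W+2D-F-iG')$ unnecessary; and in the reverse inclusion the ``dimension/degree bookkeeping'' is not needed since $\Lambda f_{\yv^i}-h'\in L(F+iG')$ vanishes on $\mathcal{P}$ and hence lies in $L(F+iG'-D)$ immediately. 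You are also right that~\eqref{def:Si} should read $\delta_{\yv^i}$ rather than $\delta_{\yv^2}$; note that~\eqref{def:Kyi} has the analogous slip $\av\ast\yv$ in place of $\av\ast\yv^i$.
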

\begin{proof}
The proof is exactly the same as for $\ell=2$. See Appendix \ref{sec:appendix}.
\end{proof}
\end{rem}
\begin{prop}\label{LFDeInSi}
For any $i\le\ell$, we have $L(F-D_{\error})\subseteq S_i(F)$.
\end{prop}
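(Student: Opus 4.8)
The plan is to mimic the proof of Proposition~\ref{prop:loc_in_S_2}, which handled the case $i=2$, and simply observe that the same multinomial expansion argument goes through for an arbitrary power $i\le\ell$. So let $\Lambda\in L(F-D_{\error})$; I want to show $\delta_{\yv^i}(\Lambda)=\Lambda f_{\yv}^i\in L(F+iG)\oplus L(F+iG'-D)$, which by definition \eqref{def:Si} is exactly $\Lambda\in S_i(F)$.

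First I would write $f_{\yv}=f_{\cv}+f_{\error}$ with $f_{\cv}\in L(G)$ and, crucially, $f_{\error}\in L(G'-D+D_{\error})$ (this last membership was already noted in the proof of Proposition~\ref{prop:loc_in_S_2}, coming from the fact that $f_{\error}$ vanishes outside the error support and the isomorphism picture in \S\ref{subs:found1}). Then expand by the binomial theorem:
\[
\Lambda f_{\yv}^i=\sum_{k=0}^{i}\binom{i}{k}\Lambda\, f_{\cv}^{\,i-k} f_{\error}^{\,k}.
\]
The term $k=0$ gives $\Lambda f_{\cv}^i$, with $(\Lambda f_{\cv}^i)\ge -F-iG$ since $f_{\cv}^{\,i}\in L(iG)$ and $\Lambda\in L(F)$; hence this term lies in $L(F+iG)$. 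For every $k\ge 1$, I would bound the valuation divisor of $\Lambda f_{\cv}^{\,i-k}f_{\error}^{\,k}$ using $f_{\cv}\in L(G)\subseteq L(G')$ together with $f_{\error}\in L(G'-D+D_{\error})$ and $\Lambda\in L(F-D_{\error})$:
\[
\bigl(\Lambda f_{\cv}^{\,i-k}f_{\error}^{\,k}\bigr)\ \ge\ -(F-D_{\error})-(i-k)G'-k(G'-D+D_{\error})\ =\ -F-iG'+kD-(k-1)D_{\error}.
\]
Since $k\ge 1$ we have $kD-(k-1)D_{\error}\ge D$ (because $D\ge D_{\error}\ge 0$, so $kD-(k-1)D_{\error}=D+(k-1)(D-D_{\error})\ge D$), hence each such term lies in $L(F+iG'-D)$. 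Summing, $\Lambda f_{\yv}^i\in L(F+iG)+L(F+iG'-D)=L(F+iG)\oplus L(F+iG'-D)$ (the sum is direct by Assumption~\ref{NB>2}, exactly as in \eqref{dec2} and its generalisation), which is precisely the condition defining $S_i(F)$.

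I do not anticipate a genuine obstacle here: the statement is the straightforward $\ell$-fold analogue of Proposition~\ref{prop:loc_in_S_2}, and the only mild point of care is getting the valuation bookkeeping right for the mixed terms, i.e. checking that the factor $f_{\error}^{\,k}$ contributes $k$ copies of $(D-D_{\error})$ worth of slack at the error points so that, combined with the single copy of $D_{\error}$ absorbed by $\Lambda\in L(F-D_{\error})$, every term with $k\ge1$ still gains a full $D$. Alternatively, and perhaps more cleanly, one can argue inductively: from $\Lambda\in L(F-D_{\error})$ one has $\Lambda f_{\yv}\in L(F+G)\oplus L(F+G'-D)$ as in \eqref{propL(F-De)}, and then multiplying successively by $f_{\yv}$ and using $L(a)L(b)\subseteq L(a+b)$ together with $f_{\error}\in L(G'-D+D_{\error})$ to keep feeding one more $-D$ into the ``error part'' at each multiplication. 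Either way the proof is short, and I would simply refer to the $\ell=2$ computation for the details.
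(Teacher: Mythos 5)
Your proposal is correct and follows essentially the same route as the paper: expand $\Lambda f_{\yv}^i$ by the binomial theorem, isolate the pure $\Lambda f_{\cv}^i$ term in $L(F+iG)$, and for every mixed term show by valuation bookkeeping (using $\Lambda\in L(F-D_{\error})$, $f_{\cv}\in L(G)\subseteq L(G')$, and $f_{\error}\in L(G'-D+D_{\error})$) that it lands in $L(F+iG'-D)$. Your computation $kD-(k-1)D_{\error}=D+(k-1)(D-D_{\error})\ge D$ is exactly the cancellation the paper uses, just with the index $k=i-j$ in place of the paper's $j$.
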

\begin{proof}
The proof is an easy adaptation of the proof of Proposition \ref{prop:loc_in_S_2}. Let $\Lambda\in L(F-D_{\error})$ and $i\le \ell$. We know that $\delta_{\yv^i}(\Lambda)=\Lambda f_{\yv}^i=\sum_{j=0}^i \binom{i}{j} \Lambda f_{\cv}^jf_{\error}^{i-j}$. Now, we treat the term with $j=i$ and separately the others with $j<i$:
\begin{itemize}
\item[$j=i:$] $\ (\Lambda f_{\cv}^i)\ge  -F-iG$
\item[$j<i:$] $\ (\Lambda f_{\cv}^j f_{\error}^{i-j})\ge  -F+D_{\error}-jG-(i-j)G'+(i-j)(D-D_{\error})\ge -F-iG'+D$
\end{itemize}
that is $\delta_{\yv^i}(\Lambda)\in L(F+iG)\oplus L(F+iG'-D)$ and $\Lambda\in S_i(F)$.
\end{proof}
\medskip
\noindent
It is now possible to define the space $S(F)$ for this choice of $\ell$:
\begin{equation}\label{def:Snew}
S(F)\eqdef \bigcap_{i=1}^\ell S_i(F).
\end{equation}
Thanks to Proposition \ref{LFDeInSi}, we have $L(F-D_{\error})\subseteq S(F)$. We want now to close the gap between the two spaces. One can observe that Lemma \ref{lem1}, Lemma \ref{lem2} and Proposition \ref{prop:chiave2} can be generalised straightforwardly to the $S(F)$ defined in \eqref{def:Snew}. In particular Lemma \ref{lem2} changes in the following way.
\begin{lem}\label{lem2>2}
If $L(F-D_{\error})\ne \{0\}$, then there are at most $g$ rational points $P\in\supp(D_{\error})$ such that 
\[S_1(F-P)\cap \bigcap_{i=2}^{\ell} S_i(F)=S(F)\cap L(F-P).\]
\end{lem}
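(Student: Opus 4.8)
The plan is to transcribe the proof of Lemma~\ref{lem2} almost verbatim, since the only component of $S(F)$ that does any work is $S_1$: the spaces $S_i(F)$ with $i\ge 2$ enter only through a trivial inclusion and through Proposition~\ref{LFDeInSi}.

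First I would reduce to a statement about Riemann--Roch spaces. Exactly as in Lemma~\ref{lem2}, using Proposition~\ref{prop:ell}(\textit{ii}) together with $\ell\bigl(\sum_{i=1}^{m}P_i\bigr)\ge m-g+1$ (Proposition~\ref{prop:preRR}) and the hypothesis $L(F-D_{\error})\ne\{0\}$, one shows that there are at most $g$ rational points $P\in\supp(D_{\error})$ for which $L(F-D_{\error}-P)=L(F-D_{\error})$. It then suffices to prove the implication
\[
L(F-D_{\error}-P)\subsetneq L(F-D_{\error})\ \Longrightarrow\ S_1(F-P)\cap\bigcap_{i=2}^{\ell}S_i(F)\subsetneq S(F)\cap L(F-P),
\]
the inclusion $\subseteq$ being automatic because $S_1(F-P)\subseteq S_1(F)\cap L(F-P)$.

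For the strict inclusion I would pick $\Gamma\in L(F-D_{\error})\setminus L(F-D_{\error}-P)$. By Proposition~\ref{LFDeInSi}, $\Gamma\in S_i(F)$ for every $i\le\ell$, hence $\Gamma\in S(F)$; and since $P\in\supp(D_{\error})$ forces $L(F-D_{\error})\subseteq L(F-P)$, we obtain $\Gamma\in S(F)\cap L(F-P)$. It remains to check that $\Gamma\notin S_1(F-P)$, which is enough because $S_1(F-P)\cap\bigcap_{i=2}^{\ell}S_i(F)\subseteq S_1(F-P)$. This is the valuation computation at the end of Lemma~\ref{lem2}: if $\Gamma\in S_1(F-P)$ then $\Gamma f_{\yv}=g+h$ with $g\in L(F-P+G)$ and $h\in L(F-P+G'-D)$, while also $\Gamma f_{\yv}=\Gamma f_{\cv}+\Gamma f_{\error}$ with $\Gamma f_{\cv}\in L(F+G)$ and $\Gamma f_{\error}\in L(F+G'-D)$; uniqueness of the decomposition in the direct sum $L(F+G)\oplus L(F+G'-D)$ (which is direct by Assumption~\ref{NB}) forces $\Gamma f_{\error}=h\in L(F-P+G'-D)$, impossible since $v_P(\Gamma f_{\error})=v_P(\Gamma)+v_P(f_{\error})=-v_P(F)+1$ (here $v_P(\Gamma)=-v_P(F)+1$ because $\Gamma\notin L(F-D_{\error}-P)$, and $v_P(f_{\error})=0$ because $\supp(G')\cap\mathcal{P}=\emptyset$ and $f_{\error}(P)\ne 0$), whereas membership in $L(F-P+G'-D)$ requires valuation $\ge -v_P(F)+2$ at $P$.

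I do not expect a real obstacle: the proof is a line-by-line adaptation of Lemma~\ref{lem2}, the only new ingredient being the elementary observation that $S_1(F-P)\cap\bigcap_{i\ge 2}S_i(F)$ sits inside $S_1(F-P)$, so the single witness $\Gamma$, produced exactly as before, settles the claim for every value of $\ell$. The one point deserving care is the valuation bookkeeping at $P$ that yields the contradiction, which I would write out as above.
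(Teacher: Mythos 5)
Your proof is correct and follows the paper's implicit approach: the paper states Lemma~\ref{lem2>2} is a ``straightforward generalisation'' of Lemma~\ref{lem2}, and you carry this out, with the key (and correct) observation that only $S_1$ does any work, so the single witness $\Gamma\in L(F-D_{\error})\setminus L(F-D_{\error}-P)$ settles the claim for every $\ell$. Incidentally, your valuation bookkeeping $v_P(\Gamma f_{\error})=-v_P(F)+1$ is more careful than the value $-v_P(F)$ written in the paper's proof of Lemma~\ref{lem2} (which is inconsistent with $\Gamma\in L(F-D_{\error})$), though either value contradicts membership in $L(F+G'-D-P)$, which requires valuation at least $-v_P(F)+2$ at $P$, so the conclusion is unaffected.
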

\medskip
\noindent
By Proposition \ref{prop:chiave2}, it is possible then to build a sequence $F_0=F, F_1, F_2, \dots$ such that for any~$j\ge 0$ 
\begin{equation}\label{fallS}
\dim S(F_{j+1})\le \dim S(F_j)-2.
\end{equation}
As for the case $\ell=2$, among the following inclusions,
\[S(F_j-P)=\bigcap_{i=1}^{\ell}S_i(F_j-P)\subseteq S_1(F_j-P)\cap \bigcap_{i=1}^{\ell}S_i(F_j)\subsetneq S(F)\cap L(F_j-P)\subsetneq S(F),\]
the last two are the ones that give the gap in \eqref{fallS} and entail then $\Delta_{j+1}\le \Delta_j-1$. In order for the sequence $\{\Delta_j\}_j$ to decrease faster, we need more strict inclusions between $S(F_j-P)$ and~$S(F_j)$.
\paragraph*{Empirical Behavior:} as in the case $\ell=2$, we observed that the dimension of $S(F_j)$ decreases faster than expected when a random error vector is considered. In particular we got
\begin{equation}\label{empiricalb}
\dim S(F_j-P)\le \dim S(F_j)-(\ell+1)
\end{equation}
which implies $\Delta_{j+1}\le \Delta_j-\ell$. The further $\ell-1$ strict inclusions which cause this drop in dimension are the following
\[S(F_j-P)\subsetneq \bigcap_{i=1}^{\ell-1}S_i(F_j-P)\cap S_{\ell}(F_j)\subsetneq \bigcap_{i=1}^{\ell-2}S_i(F_j-P)\cap S_{\ell-1}(F_j)\cap S_{\ell}(F_j) \subsetneq\dots\subsetneq S_1(F_j-P)\cap \bigcap_{i=2}^{\ell}S_i(F_j).\]
\medskip
\noindent
It is now possible to compute the decoding radius of the algorithm for $\ell\ge 2$. To do so, we generalise Theorem~\ref{thm:dec_rad}.
\begin{thm}
Assume $\deg F=t+2g$ with $t\le d^{\ast}-3g-1$. If the error vector is such that $S(F)$ verifies the empirical behavior and $\deg(F+\ell G)<n$, then a necessary condition for the algorithm to work is 
\begin{equation}\label{dec_rad}
t\le\frac{2\ell n-\ell(\ell+1)\deg G-2\ell}{2(\ell+1)}\cdot
\end{equation}
\end{thm}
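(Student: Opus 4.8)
The plan is to follow the proof of Theorem~\ref{thm:dec_rad} essentially line for line, replacing the parameter $2$ by $\ell$ wherever it came from the choice $\ell=2$. First I would check that the adaptation process can run long enough: since $\deg F=t+2g$ and $t\le d^{*}-3g-1$, every divisor $F_j$ produced has $\deg F_j=t+2g-j\le t+2g\le d^{*}-g-1$, so the degree hypothesis of the (straightforwardly generalised) Proposition~\ref{prop:chiave2} holds at each step; and Riemann--Roch gives $\ell(F_j-D_{\error})\ge (t+2g-j)-t-g+1=g-j+1\ge 1$ for all $j\le g$, so $L(F_j-D_{\error})\ne\{0\}$ as well. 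Hence both hypotheses of Proposition~\ref{prop:chiave2} hold for $F_0,\dots,F_g$ and the algorithm can always build a sequence of at least $g+1$ divisors.

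Next I would set $\Delta_j\eqdef\dim S(F_j)-\ell(F_j-D_{\error})$, with $S(F)$ as in~\eqref{def:Snew}; this is $\ge 0$ because $L(F_j-D_{\error})\subseteq S(F_j)$ by Proposition~\ref{LFDeInSi}. Combining the general inequality~\eqref{dis:ell}, i.e. $\ell(F_{j+1}-D_{\error})\ge\ell(F_j-D_{\error})-1$, with the empirical behaviour~\eqref{empiricalb}, i.e. $\dim S(F_{j+1})\le\dim S(F_j)-(\ell+1)$, gives $\Delta_{j+1}\le\Delta_j-\ell$. Therefore, as soon as $\Delta_0\le\ell g$, the strictly decreasing nonnegative sequence reaches $\Delta_j=0$ for some $j\le g$; there $S(F_j)=L(F_j-D_{\error})$ is a nonzero space, so $f_{\error}$ is recovered via Theorem~\ref{thm_princ} and the algorithm succeeds. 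Thus $\Delta_0\le\ell g$ is the condition to aim for, and, as in Theorem~\ref{thm:dec_rad}, it plays the role of the necessary condition.

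Finally I would turn $\Delta_0\le\ell g$ into a bound on $t$. Writing $S(F)=\{f\in L(F)\mid \pi_{Z_i}(\delta_{\yv^i}(f))=\zerov,\ i=1,\dots,\ell\}$ from the defining decompositions, one gets $\dim S(F)\ge\ell(F)-\sum_{i=1}^{\ell}\dim Z_i$, and chaining this with $\Delta_0\le\ell g$ yields $\ell(F-D_{\error})+\ell g\ge\ell(F)-\sum_{i=1}^{\ell}\dim Z_i$. Now $\deg F=t+2g>2g-2$ gives $\ell(F)=t+g+1$, $\deg(F-D_{\error})=2g>2g-2$ gives $\ell(F-D_{\error})=g+1$, and Lemma~\ref{lem:dimZ>2} gives $\dim Z_i=\deg(D-F-iG)+g-1=n-t-g-i\deg G-1$, hence $\sum_{i=1}^{\ell}\dim Z_i=\ell(n-t-g-1)-\tfrac{\ell(\ell+1)}{2}\deg G$. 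Substituting and cancelling, the inequality collapses to $t(\ell+1)\le\ell n-\ell-\tfrac{\ell(\ell+1)}{2}\deg G$, which is exactly~\eqref{dec_rad}.

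The main obstacle is not this computation but justifying the ingredients behind it: that Lemma~\ref{lem1}, Lemma~\ref{lem2>2} and Proposition~\ref{prop:chiave2} genuinely generalise to the intersection $S(F)=\bigcap_{i=1}^{\ell}S_i(F)$ (so that $\Delta_{j+1}\le\Delta_j-1$ holds unconditionally), that Assumption~\ref{NB>2} indeed makes all the direct sums, the spaces $Z_i$ and the maps $\delta_{\yv^i}$ well defined (so that $S(F)$ and the dimension count are meaningful), and above all that the empirical behaviour~\eqref{empiricalb} is what actually forces $\Delta$ to drop by $\ell$ at each step; this last point is heuristic, which is precisely why the statement only asserts a necessary condition.
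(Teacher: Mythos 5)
Your proposal matches the paper's intended proof exactly: the paper itself just says ``the proof is almost the same as in the case $\ell=2$,'' replacing the condition $\Delta_0\le 2g$ by $\Delta_0\le\ell g$ and the bound $\dim S(F)\ge\ell(F)-\dim Z_1-\dim Z_2$ by $\dim S(F)\ge\ell(F)-\sum_{i=1}^{\ell}\dim Z_i$, and you spell out precisely that recipe, including the Riemann--Roch simplifications $\ell(F)=t+g+1$, $\ell(F-D_{\error})=g+1$ that the paper leaves implicit. You also silently correct the typo in the statement of Lemma~\ref{lem:dimZ>2} (which prints $\deg(D-F-G)$ rather than the intended $\deg(D-F-iG)$); with your reading $\dim Z_i=n-t-g-i\deg G-1$, the arithmetic does collapse to~\eqref{dec_rad}, whereas the literal reading would not.
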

\begin{proof}
The proof is almost the same as in the case with $\ell=2$. The only difference consists in the necessary condition to fill the gap between $S(F)$ and $L(F-D_{\error})$ in $g$ steps. Indeed we impose here the condition 
\[\Delta_0\le \ell g\]
instead of $\Delta_0\le 2g$. An estimate for the dimension of $S(F)$ can be deduced by \eqref{def:Si} and is
\begin{equation*}
\dim S(F)\ge \ell(F) - \sum_{i=1}^\ell\dim Z_i,
\end{equation*}
where $\dim Z_i$ is given by Lemma \ref{lem:dimZ>2}.
\end{proof}
\section{Some experimentations}\label{sec:test}
In this section we first propose some guidelines on the parameters of the algorithm for $\ell=2,3$ and then we give some experimental observations from the tests we made.
\subsection{The parameters}
In order to test the algorithm with the right parameters, we need the genus $g$ of the curve, the number of evaluation points $n$ and the degree of the divisor $G$ to fulfill several conditions:
\begin{itemize}
\item[(\textit{i})] $t\le\frac{2\ell n-\ell(\ell+1)\deg G-2\ell}{2(\ell+1)}$ (decoding radius \eqref{dec_rad})
\item[(\textit{ii})] $\deg(F+\ell G)<n$ (Assumption \ref{NB2})
\item[(\textit{iii})] $\deg F\le d^*-g-1$ (hypothesis in Theorem \ref{thm:dec_rad})
\item[(\textit{iv})] $\frac{2(\ell-1) n-\ell(\ell-1)\deg G-2(\ell-1)}{2\ell}< \frac{2\ell n-\ell(\ell+1)\deg G-2\ell}{2(\ell+1)}$ (decoding radius$(\ell-1)<$ decoding radius$(\ell)$)
\item[(\textit{v})] $\deg G\ge g-1$
\end{itemize}
First, notice that Theorem \ref{thm:dec_rad} holds for all $F$ with $t+2g\le\deg F\le d^*-g-1$. Here, we will just study the parameters for $\deg F=t+2g$. Moreover, we want to be able to run the algorithm up to its decoding radius, hence we set 
\[t=\frac{2\ell n-\ell(\ell+1)\deg G-2\ell}{2(\ell+1)}\cdot\] 
By imposing these conditions on $\deg F$ and $t$, and developing (\textit{iv}), (\textit{i-iv}) become:
\begin{itemize}
\item[(\textit{i})] $t=\frac{2\ell n-\ell(\ell+1)\deg G-2\ell}{2(\ell+1)}$
\item[(\textit{ii})] $2n-\ell(\ell+1)\deg G-4g(\ell+1)\ge 0$
\item[(\textit{iii})] $2n+(\ell-2)(\ell+1)\deg G-6g\ell-6g-2\ge 0$
\item[(\textit{iv})] $2n-\ell(\ell+1)\deg G-2(\ell^2+\ell+1)\ge 0$.
\end{itemize}
In particular, notice that (\textit{ii}) implies (\textit{iii}) and (\textit{iv}) when $g>\frac{l}{2}$ and $\deg G\ge 0$. That means that for $\ell=2$ we can run the algorithm on codes which fulfill:
\begin{equation*}
g-1\le \deg G\le \frac{n-6g}{3},
\end{equation*}
while for $\ell=3$ we need our code to satisfy
\begin{equation*}
g-1\le \deg G\le\frac{n-8g}{6}\cdot
\end{equation*}

\subsection{Some tests}
In Table \ref{fig:test} and Table \ref{fig:fail} there are listed some results about the algorithm's behavior with $\ell=2,3$. We worked with the following three curves:
\begin{enumerate}
\item \label{1}$X^6 + Y^6+XZ^5=0$ on $\mathbb{F}_{7^3}$;
\item \label{2}$X^8 -YZ^7-ZY^7=0$ on $\mathbb{F}_{7^2}$;
\item \label{3}$ZY^5-X^6-XZ^5-Z^6=0$ on $\mathbb{F}_{11^3}$;
\end{enumerate}
Looking at Table \ref{fig:test} and Table \ref{fig:fail}, if $C=\AGcode{\X}{\mathcal{P}}{G}$ is the code we are running the algorithm on, we indicate by $q$ the cardinality of the field, $\X$ the curve, $g$ the genus of $\X$, $n$ the length of the code, that is $n=|\mathcal{P}|$, $\deg G$ the degree of the divisor $G$.  Moreover we list the values of half the designed distance of the code (column $\frac{d^{\ast}-1}{2}$) and of the decoding radius of Sudan algorithm (column ``Sudan''), in order to compare them with the decoding radius of the new algorithm (``dec.radius''). For each test, we pick randomly an error vector $\error$ with $\w(\error)=t$ for a specific $t$ and run the algorithm on $\yv=\cv+\error$ with a random $\cv\in C$ and with a power parameter $\ell$. The value of $t$ will be underlined when the decoding radius of the new algorithm is exceeded. We denote by ``pts'' the set of  points $\{P_{i_j}\}\subseteq\supp(D)$, such that
\[\dim S(F_{j-1}-P_{i_j})\le \dim S(F_{j-1})-2. \]
In particular, for any test, we check if the points which guarantee this gap in the dimension belong to the support of $D_{\error}$. We recall that $\Delta_0=\dim S(F) - \ell(F-D_{\error})$, where $F$ is the initial divisor with $\deg F=t+2g$ and that, if the algorithm verifies the empirical behavior \eqref{empiricalb}, then $\Delta_{i+1}-\Delta_i\le \ell$. Finally in Table \ref{fig:test} we list the tests where the algorithm succeeds, while in Table \ref{fig:fail} there are some cases where the algorithm fails. 
\begin{table}[H]
\begin{center}
\begin{tabular}{|c|c|c|c|c|c|c|c|c|c|c|c|c|c|}
\hline
$\ell$ &$q$ & $\X$ & $g$ & $n$ & $\deg G$ & $\frac{d^*-1}{2}$ &Sudan & dec. radius & $t$ & pts$\subseteq D_{\error}$ & $\Delta_0$ & $\Delta_{i+1}-\Delta_i$ \\
\hline
$2$ & $7^3$ & \ref{1} & $10$ & $200$ & $2g-1$ & $90$ & $107$ & $113$ & $113$& true & $18$ & $2$ \\
\hline
$2$ & $7^3$ & \ref{1} & $10$ & $200$ & $ \frac{n-6g-1}{3}$ & $76$ & $80$ & $86$ & $86$ & true & $18$ &  $2$\\
\hline
$2$ & $7^2$ & \ref{2} & $21$ & $230$ & $2g-1$ & $94$ & $98$ & $111$ & $111$ & false & $40$ & $2$ \\
\hline 
$3$ & $7^3$ & \ref{1} & $10$ & $200$ & $2g-2$ & $90$ & $113$ & $120$ & $120$ & true & $27$ & $3$ \\
\hline
$3$ & $7^3$ & \ref{1} & $10$ & $200$ & $\frac{n-8g-1}{6}-1$ & $90$ & $115$ & $122$ & $122$ & false & $29$ & $3$ \\
\hline
$2$ & $7^3$ & \ref{1} & $10$ & $200$ & $50$ ($\ast$) & $72$ & $76$ & $82$ & $82$ & false & $18$ & $2$ \\
\hline
$3$ & $7^3$ & \ref{1} & $10$ & $200$ & $\frac{n-1}{6}-3$ ($\ast$)& $84$ & $97$ & $104$ & $104$ & true & $24$ & $2,3\times 4$ \\
\hline
$2$ & $11^3$ & \ref{3} & $10$ & $200$ & $\frac{n-6g-1}{3}-10$ & $81$ & $90$ & $96$ & $96$ & false & $18$ & $2$ \\
\hline
\end{tabular}
\caption{Tests on the algorithm for $\ell=2$, $\ell=3$. \label{fig:test}}
\end{center}
\end{table}
\paragraph*{Comments:} First we want to point out that, whenever the parameter are chosen to satisfy (\textit{i-iv}), then $\Delta_0\le \ell g$. That was not free, as we recall the decoding radius bound was a necessary condition to have $\Delta_0\le \ell g$ and not a sufficient one. Furthermore, we can see that it is actually possible to correct up to the decoding radius, which is then larger than Sudan decoding radius. In particular the gap $\Delta_0$ reduces by expected at every step by $\ell$ for both $\ell=2,3$, that is the algorithm satisfies the hypothesis of empirical behavior \eqref{empiricalb} and we get to have $\Delta_j=0$ and $L(F_j-D_{\error})\ne\{0\}$ for some $j\le g$. One can observe that pts is not always contained in the support of $D_{\error}$. Moreover it is really difficult to find a point which does not fulfill
\begin{equation}\label{gap}
\dim S(F_i-P)\le \dim S(F_i)-2,
\end{equation}
and that once a point $P$ which does fulfill \eqref{gap} is found for the first step, it will satisfy it also for the next steps, that is, in our sequence $\{F_j\}_j$ we have $F_j=F-jP$ for every $j\le g$. Finally, observe that the cases with the symbol $(\ast)$ are the only cases where not all the bounds (\textit{i-iv}) hold. In particular we have $\deg(F+\ell G)\ge n$, that is, it is no longer sure that the spaces 
\[L(F+\ell G), \ \ \ \ \ \ L(F+\ell G'-D)\]
are in direct sum. In this situation it is possible to use the following modified notion of $S_{\ell}(F)$ 
\[S_{\ell}(F)=\{f\in L(F)\mid \delta_{\yv^{\ell}}(f)\in L(F+\ell G)+L(F+\ell G'-D)\}.\]
In the $(\ast)$ case with $\ell=2$, we observed that actually the two spaces are still in direct sum and the gap $\Delta_i$ decreases by $\ell$. Hence the algorithm works up to the decoding radius. In the $(\ast)$ case with $\ell=3$, the algorithm works anyway even if $\Delta_i$ does not decrease by $\ell=3$ at every step, but most of the times by $2$. That is, almost at every step we have $\dim S(F-P)\le \dim S(F)-3$, where the inclusion which is not strict is the first from the left in the following sequence
\[S(F-P)\subseteq S_1(F-P)\cap S_2(F-P)\cap S_3(F)\subsetneq S_1(F-P)\cap S_2(F)\cap S_3(F)\subsetneq S(F)\cap L(F-P)\subsetneq S(F).\]
\subsection{Failure cases}
\begin{table}[H]
\begin{center}
\begin{tabular}{|c|c|c|c|c|c|c|c|c|c|c|c|c|c|}
\hline
$\ell$ &$q$ & $\X$ & $g$ & $n$ & $\deg G$ & $\frac{d^*-1}{2}$ &Sudan & dec. radius & $t$ & pts$\subseteq D_{\error}$ & $\Delta_0$ & $\Delta_{i+1}-\Delta_i$ \\
\hline
$2$ & $7^3$ & \ref{1} & $10$ & $200$ & $2g-1$ & $90$ & $107$ & $113$ & $\underline{114}$ & true & $21$ & $2$\\
\hline
$2$ & $7^2$ & \ref{2} & $21$ & $230$ & $2g-1$ & $94$ & $98$ & $111$ & $\underline{112}$ & false & $43$ & $2$ \\
\hline
$2$ & $11^3$ & \ref{3} & $10$ & $200$ & $\frac{n-6g-1}{3}$ & $76$ & $80$ & $86$ & $86$ & false & $14$ & $1$ \\
\hline
$3$ & $7^3$ & \ref{1} & $10$ & $200$ & $25$ & $67$ & $ 104 $ & $111$ & $96$ & false & $25$ & $1$ \\
\hline
\end{tabular}
\caption{Failure cases. \label{fig:fail}}
\end{center}
\end{table}
\paragraph*{Comments:} We report here four cases where the algorithm does not work. Actually one should not consider all of them as failure cases, as the amount of error exceeds the decoding radius of the algorithm in the first two of them. One can see that in these situations, $\Delta_0>\ell g$. Hence, although the gaps $\Delta_{i+1}-\Delta_i$ are the good ones, by the time $\Delta_i=0$ we have $L(F_i-D_{\error})=\{0\}$ as well and the algorithm fails as expected. In the two last cases all parameters are bounded as requested for the algorithm to work, but unlike the other tests, here the choice of the error vector is not random. Indeed it has been chosen in order to have two solutions $\cv_1, \cv_2\in C$ such that 
\[\dd(\yv, \cv_1)=\dd(\yv, \cv_2)=t.\]
In these cases, the empirical behavior \eqref{empiricalb} is not fulfilled, indeed $\Delta_i$ decreases only by $1$ and no point in $\mathcal{P}$ can make $\Delta_i$ decrease faster. In particular here we only have two strict inclusions given by Proposition \ref{prop:chiave2} and the following chain of equalities
\[S(F_j-P)= \bigcap_{i=1}^{\ell-1}S_i(F_j-P)\cap S_{\ell}(F_j)= \bigcap_{i=1}^{\ell-2}S_i(F_j-P)\cap S_{\ell-1}(F_j)\cap S_{\ell}(F_j) =\dots= S_1(F_j-1)\cap \bigcap_{i=2}^{\ell}S_i(F_j).\]
Hence, even if $\Delta_0\le\ell g$, $g$ steps are not enough to find $S(F_i)=L(F_i-D_{\error})$.

\nocite{*}
\bibliographystyle{alpha}
\bibliography{ms}

\appendix
\section{On the decoding radius of Sudan algorithm}\label{appendixSudan}
This section mainly comes from the ideas of Peter Beelen and shows how to get an improved decoding radius for Sudan algorithm with respect to \cite[\S$2.6$]{BH08}. This improvement mainly consists in analysing the parameters of the linear system to get Sudan polynomial $Q$. Given a curve of genus $g$, we consider a code $C=\AGcode{\X}{\mathcal{P}}{G}$, where $G$ is a divisor with $2g-2<\deg G <n$ and $\mathcal{P}=\{P_1, \dots, P_n\}$. Let us suppose a vector $\yv=\cv+\error$ is given, where $\w(\error)=t$ and there exists $f\in L(G)$ such that
\begin{equation}\label{f}
\cv=(f(P_1),\dots, f(P_n)).
\end{equation}
We denote by $I$ the support of the error vector $I=\supp(\error)$ (in particular $|I|=t$). Let $F$ be a divisor with $\deg F=n-t-1$.
\paragraph{Original problem (Sudan):} given $\ell\ge 1$, find a polynomial $Q(\xv, y)=Q_0(\xv)+Q_1(\xv)y+\dots+Q_{\ell}(\xv)y^{\ell}$ such that
\begin{itemize}
\item[(\textit{i})] $Q_i(\xv)\in L(F-iG)$ for all $i=0, \dots, \ell$
\item[(\textit{ii})] $Q(P_j, y_j)=0$ for all $j=1,\dots, n$.
\end{itemize}
This problem can be solved with a linear system of $n$ equations in $\sum_{i=0}^{\ell}\ell(F-iG)$ unknowns. Hence the system has nonzero solutions if 
\begin{equation}\label{rayon_base_m=1}
t\le\frac{2n\ell-\ell(\ell+1)\deg(G)-2}{2(\ell+1)}-g.
\end{equation}
Now we want to show that this decoding radius can be actually optimised. To do so, we consider the following problem.
\paragraph{Modified problem (Sudan):} Given $f$ as in \eqref{f}, find a polynomial 
\begin{equation}
Q(\xv, y)=(y-f(\xv))(\tilde{Q}_0(\xv)+\tilde{Q}_1(\xv)y+\dots+\tilde{Q}_{\ell-1}(\xv)y^{\ell-1}),
\end{equation}
such that, if we denote by $\tilde{Q}(\xv, y)$ the factor $\tilde{Q}_0(\xv)+\tilde{Q}_1(\xv)y+\dots+\tilde{Q}_{\ell-1}(\xv)y^{\ell-1}$,
\begin{itemize}
\item[(\textit{i'})] $\tilde{Q}_i(\xv)\in L(F-(i+1)G)$ for all $i=0, \dots, \ell-1$
\item[(\textit{i''})] $\tilde{Q}(P_j, y_j)=0$ for all $j\in I$.
\end{itemize}
It is clear that if the modified problem has a solution, then the original problem has one too. This problem can be solved, as the previous one, by a linear system. This time, we have a system of $t$ equations in $\sum_{i=1}^{\ell}\ell(F-iG)$ unknowns. Therefore it admits nonzero solutions if 
\begin{equation}
t\le \frac{2n\ell-\ell(\ell+1)\deg(G)-2}{2(\ell+1)}-\frac{\ell g}{\ell+1}\cdot
\end{equation}

\section{Some technical results}\label{sec:appendix}
Most of the proofs presented in this appendix, are straightforward adaptations of proofs of \cite{E92} and \cite{E93} to $S_2(F)$ or to the language of functions rather than differentials, but we decided to report them here for sake of completeness.
\begin{prop}{\cite[Proposition 1]{E92}}(Function version)
If $\deg F+t<d^*$, then $L(F-D_{\error})=S(F)$.
\end{prop}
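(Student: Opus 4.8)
The plan is to show the two inclusions separately, the inclusion $L(F-D_{\error})\subseteq S(F)$ being essentially free (it is the content of \eqref{propL(F-De)}, whose proof we recalled already), so that the real work is the reverse inclusion $S(F)\subseteq L(F-D_{\error})$ under the hypothesis $\deg F + t < d^*$. For this, I would invoke the exact sequence of Proposition~\ref{prop:exact_seq}:
\[
0 \longrightarrow L(F-D_{\error}) \stackrel{i}{\longrightarrow} S(F) \stackrel{\Phi}{\longrightarrow} L(G+F-D+D_{\error}),
\]
where $\Phi(\Gamma) = \Gamma f_{\error} - \pi(\Gamma f_{\yv})$. Exactness at $S(F)$ says precisely that $\ker\Phi = L(F-D_{\error})$, so it suffices to prove that $\Phi$ is the zero map, equivalently that its target $L(G+F-D+D_{\error})$ is the zero space.

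The key computation is a degree count on that target divisor. Recall $d^* = n - \deg G$ and $\deg D = n$, and that $\deg D_{\error} = \w(\error) = t$ by Assumption~\ref{ass0}. Hence
\[
\deg\bigl(G + F - D + D_{\error}\bigr) = \deg G + \deg F - n + t = \deg F + t - d^* < 0,
\]
using the hypothesis $\deg F + t < d^*$. By the first item of Proposition~\ref{prop:preRR}, a divisor of negative degree has trivial Riemann--Roch space, so $L(G+F-D+D_{\error}) = \{0\}$. Therefore $\Phi \equiv 0$, and exactness gives $S(F) = \ker\Phi = L(F-D_{\error})$, which is the claim.

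I expect there to be no serious obstacle here: the statement is a quick corollary of the exact sequence in Proposition~\ref{prop:exact_seq} together with an elementary degree inequality, and the paper has already set up every ingredient. The only point requiring a little care is bookkeeping the divisor degrees correctly — in particular remembering that Assumption~\ref{ass0} pins down $\deg D_{\error}$ exactly as $t$ rather than merely bounding it, and that the definition $d^* = n - \deg G$ is what converts the hypothesis $\deg F + t < d^*$ into negativity of the degree of the target divisor. One should also note that the inclusion $L(F-D_{\error}) \subseteq S(F)$ needed to make the exact sequence meaningful is exactly \eqref{propL(F-De)}, so no circularity arises.
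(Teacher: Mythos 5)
Your proof is correct, and it takes a genuinely different route from the one in the paper's Appendix~\ref{sec:appendix}. You invoke Proposition~\ref{prop:exact_seq} as a black box: the target $L(G+F-D+D_{\error})$ has degree $\deg F + t - d^* < 0$, hence is trivial by Proposition~\ref{prop:preRR}, so $\Phi\equiv 0$ and exactness at $S(F)$ yields $S(F)=\ker\Phi=L(F-D_{\error})$. The paper instead gives a self-contained direct argument that does not rely on Proposition~\ref{prop:exact_seq} at all: it takes $\Lambda\in S(F)$, writes $\Lambda f_{\yv}=g+h$ with $g\in L(F+G)$, $h\in L(F+G'-D)$, and first shows $h=\Lambda f_{\error}$ by a divisor-comparison producing an element of $L(F+G-D+D_{\error})$ and then killing it by the same negative-degree count you perform; it then uses a valuation argument at each $P\in\supp(D_{\error})$ (exploiting $v_P(f_{\error})=0$) to promote $\Lambda f_{\error}\in L(F+G'-D)$ to $\Lambda\in L(F-D_{\error})$. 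In effect the paper is re-proving, in this special case, exactly the exactness-at-$S(F)$ statement you cite, so the degree computation at the heart of both arguments is identical. Your version is shorter and more modular because it reuses the already-stated exact sequence; the paper's version has the advantage of not depending on Proposition~\ref{prop:exact_seq} (whose proof is only cited to~\cite{E93}) and of making the mechanism — particularly the final valuation step recovering $\Lambda\in L(F-D_{\error})$ — explicit. Both routes are sound, and there is no circularity in yours since Proposition~\ref{prop:exact_seq} is established independently.
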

\begin{proof}
We recall that we consider here the $S(F)$ defined in \eqref{S(F)_iniz}. We already know that $L(F-D_{\error})\subseteq S(F)$. Hence we consider now $\Lambda\in S(F)$ and we want to show that $\Lambda\in L(F-D_{\error})$. In order to do so, we first prove that $\Lambda f_{\error}\in L(F+G'-D)$. Since $\Lambda\in S(F)$, there exist $g\in L(F+G)$ and $h\in L(F+G'-D)$ such that $\Lambda f_{\yv}=g+h$. Furthermore $f_{\yv}=f_{\cv}+f_{\error}$, hence
\[\Lambda f_{\cv}-g=h -\Lambda f_{\error}.\]
By way of contradiction let us suppose that $h \ne\Lambda f_{\error}$. Since $f_{\error}\in L(G'-D+D_{\error})$, we have
\begin{align*}
(h-\Lambda f_{\error})\ge & \min(-F-G'+D, -F-G'+D-D_{\error})=-F-G'+D-D_{\error}\\
(\Lambda f_{\cv}- g)\ge & \min(-F-G, -F-G)=-F-G.
\end{align*}
Hence, in particular 
\[(h-\Lambda f_{\error})\ge \max(-F-G'+D-D_{\error}, -F-G)=-F-G+D-D_{\error},\]
that is $h -\Lambda f_{\error}\in L(F+G-D+D_{\error})$. Though, by hypothesis we have
\[\deg (F+G-D+D_{\error})=\deg F+ \deg G-n+t< 0,\]
that is $L(F+G-D+D_{\error})=\{0\}$, which is a contradiction since we supposed $h\ne\Lambda f_{\error}$. Now we know that $\Lambda f_{\error}\in L(F+G'-D)$, we can conclude the proof. First, since $\ev_{\mathcal{P}}(f_{\error})=\error$, we observe that if $P\in\supp(D_{\error})$, then
\[f_{\error}\in L(G'-D+D_{\error})\setminus L(G'-D+D_{\error}-P).\]
In particular, for any $P\in\supp(D_{\error})$, since $\Lambda f_{\error}\in L(F+G'-D)$ and $v_P(f_{\error})=0$, we get
\[v_{P}(\Lambda)=v_{P}(\Lambda)+v_{P}(f_{\error})=v_{P}(\Lambda f_{\error})\ge -v_{P}(F)+1,\]
that is $\Lambda\in L(F-D_{\error})$. 
\end{proof}
\begin{thm}
Given $S_2(F)$ as in \eqref{def:S2} and $K_{\yv}^{(2)}$ as in \eqref{M2}, if $\supp(F)\cap\mathcal{P}=\emptyset$ and $\deg G\ge 2g$, we have 
\[\ev_{\mathcal{P}}(S_2(F))=K_{\yv}^{(2)}.\]
\end{thm}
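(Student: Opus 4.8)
The plan is to mimic the proof of the analogous statement for $\ell=1$ (the identification $\ev_D(S_1(F)) = K_{\yv}^{(1)}$ proved in \cite[\S 3]{E92}), but now at the level of the squared data. First I would translate the membership condition defining $S_2(F)$ into a condition on evaluation vectors. Let $\Lambda \in L(F)$ and write $\av = \ev_{\mathcal{P}}(\Lambda) \in A = \AGcode{\X}{\mathcal{P}}{F}$; note $\ev_{\mathcal{P}}$ restricted to $L(F)$ is injective since $\deg F < n$, so $\dim S_2(F) = \dim \ev_{\mathcal{P}}(S_2(F))$ and it suffices to prove the two inclusions of subsets of $\mathbb{F}_q^n$. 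By definition $\Lambda \in S_2(F)$ means $\Lambda f_{\yv^2} = \Lambda f_{\yv}^2 \in L(F+2G) \oplus L(F+2G'-D)$, i.e. $\Lambda f_{\yv}^2 = \varphi + \psi$ with $\varphi \in L(F+2G)$ and $\psi \in L(F+2G'-D)$. Evaluating, $\av \ast \yv^2 = \ev_{\mathcal{P}}(\varphi) + \ev_{\mathcal{P}}(\psi)$; since $\supp(D) \cap \supp(F+2G'-D)$ forces $\psi$ to vanish at all $P_i$ (here is where $F+2G'-D \ge -$something with the $P_i$ appearing with coefficient $\ge 1$... more precisely $\ev_{\mathcal{P}}(\psi) = \zerov$ because $\psi \in L(F+2G'-D) \subseteq L(F+2G') $ vanishing at $D$), we get $\av \ast \yv^2 \in \ev_{\mathcal{P}}(L(F+2G)) = \AGcode{\X}{\mathcal{P}}{F+2G}$. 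Conversely the same computation shows that if $\av \ast \yv^2$ lies in $\AGcode{\X}{\mathcal{P}}{F+2G}$ then the preimage $\Lambda$ lies in $S_2(F)$, using $\supp(F)\cap\mathcal{P}=\emptyset$ so that the lift of $\av$ is genuinely in $L(F)$. Hence $\ev_{\mathcal{P}}(S_2(F)) = \{\av \in A \mid \av \ast \yv^2 \in \AGcode{\X}{\mathcal{P}}{F+2G}\}$.

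So the theorem reduces to the code-theoretic identity
\[
\{\av \in A \mid \av \ast \yv^2 \in \AGcode{\X}{\mathcal{P}}{F+2G}\}
= \{\av \in A \mid \langle \av \ast \yv^2, \bv\rangle = 0 \ \forall \bv \in (B^{\perp}\ast C)^{\perp}\},
\]
where $B = \AGcode{\X}{\mathcal{P}}{W+D-G-F}$ and $C = \AGcode{\X}{\mathcal{P}}{G}$. This is purely a matter of identifying the dual of $\AGcode{\X}{\mathcal{P}}{F+2G}$. Using \eqref{dualAG}, $\AGcode{\X}{\mathcal{P}}{F+2G}^{\perp} = \AGcode{\X}{\mathcal{P}}{W+D-F-2G}$. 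On the other hand, $B^{\perp} = \AGcode{\X}{\mathcal{P}}{W+D-G-F}^{\perp} = \AGcode{\X}{\mathcal{P}}{F+G}$ by \eqref{dualAG} again, so $B^{\perp} \ast C = \AGcode{\X}{\mathcal{P}}{F+G} \ast \AGcode{\X}{\mathcal{P}}{G}$. Now I invoke Proposition~\ref{prop:star_prod_AGcodes} (star product of AG codes): provided $\deg(F+G) \ge 2g$ and $\deg G \ge 2g+1$ — which hold since $\deg G \ge 2g$ and $\deg F \ge t+g \ge 1$ give $\deg(F+G) \ge 2g+1 \ge 2g$, and one needs the roles chosen so the hypotheses $\deg(\cdot)\ge 2g$, $\deg(\cdot)\ge 2g+1$ are met (take $G' = G$, $G'' = F+G$, both of degree $\ge 2g+1$ when $\deg G \ge 2g+1$; the borderline $\deg G = 2g$ case is handled by applying the proposition with $F+G$ in the first slot since $\deg(F+G)\ge 2g+1$) — we get $B^{\perp}\ast C = \AGcode{\X}{\mathcal{P}}{F+2G}$. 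Therefore $(B^{\perp}\ast C)^{\perp} = \AGcode{\X}{\mathcal{P}}{F+2G}^{\perp}$, and the orthogonality condition on the right-hand side is exactly membership in $(\AGcode{\X}{\mathcal{P}}{F+2G}^{\perp})^{\perp} = \AGcode{\X}{\mathcal{P}}{F+2G}$. This closes both inclusions simultaneously.

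The main obstacle, and the place to be careful, is the degree bookkeeping needed to apply Proposition~\ref{prop:star_prod_AGcodes}: one must verify that $\deg(F+G)$ and $\deg(G)$ (or whichever ordering is used) meet the thresholds $2g$ and $2g+1$, and in the boundary case $\deg G = 2g$ one cannot put $C = \AGcode{\X}{\mathcal{P}}{G}$ in the "$\geq 2g+1$" slot, so the argument must put $B^{\perp} = \AGcode{\X}{\mathcal{P}}{F+G}$ there instead; since $\deg F \ge t + g$ and $t \ge 1$ this degree is comfortably above $2g+1$, so the hypothesis $\deg G \ge 2g$ in the statement is exactly what is needed. A secondary point to state cleanly is that $\ev_{\mathcal{P}}$ is injective on $L(F)$ and on $L(F+2G)$ wherever it is used to pass between functions and vectors, and that $\supp(F)\cap\mathcal{P}=\emptyset$ guarantees the lift of any $\av\in A$ back to $L(F)$ has the right pole divisor; these are routine but should be invoked explicitly. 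Everything else is formal manipulation of $\oplus$-decompositions and duals.
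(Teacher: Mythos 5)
Your proof is correct, but it takes a genuinely different route from the paper's. The paper establishes the result through the residue pairing: it constructs a map $\varphi : L(F+2G') \to \Omega(F+2G-D)^{\vee}$ via $\varphi(f)(\omega) = \sum_i \Res_{P_i}(f\omega)$, proves $\varphi|_{Z_2}$ is an isomorphism (surjectivity of $\varphi$, the inclusion $L(F+2G)\oplus L(F+2G'-D)\subseteq\Ker\varphi$, and a dimension count), and then rewrites the membership test $\pi_{Z_2}\circ\delta_{\yv^2}(\Gamma)=0$ as orthogonality of $\ev_{\mathcal{P}}(\Gamma)*\yv^2$ against $\Res_{\mathcal{P}}(\Omega(F+2G-D))$, which is finally identified with $(B^\perp*C)^\perp$. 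You bypass all of the differential-form machinery: you observe that $\Ker\bigl(\ev_{\mathcal{P}}|_{L(F+2G')}\bigr)=L(F+2G'-D)$, so that $\Lambda\in S_2(F)$ is equivalent to $\ev_{\mathcal{P}}(\Lambda)*\yv^2\in\AGcode{\X}{\mathcal{P}}{F+2G}$, and then reduce to the code-theoretic identity $(B^\perp * C)^\perp = \AGcode{\X}{\mathcal{P}}{F+2G}^\perp$. Both proofs ultimately invoke the same star-product proposition in the last step, so the real divergence is in how one links $S_2(F)$ to $\AGcode{\X}{\mathcal{P}}{F+2G}$: your evaluation-kernel argument is shorter and avoids reproving a $\varphi$-isomorphism that is morally the same duality in disguise. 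One place where the plan could be tightened: in the converse direction you gesture at ``the same computation,'' but what is actually needed is a small extra step. Write $\Lambda f_{\yv^2}=\alpha+\beta+\gamma$ with $\alpha\in L(F+2G)$, $\beta\in L(F+2G'-D)$, $\gamma\in Z_2$; if $\ev_{\mathcal{P}}(\Lambda f_{\yv^2})\in\AGcode{\X}{\mathcal{P}}{F+2G}$, then $\ev_{\mathcal{P}}(\alpha+\gamma)=\ev_{\mathcal{P}}(\alpha')$ for some $\alpha'\in L(F+2G)$, so $\alpha-\alpha'+\gamma\in L(F+2G'-D)\cap(L(F+2G)\oplus Z_2)=\{0\}$, forcing $\gamma\in L(F+2G)\cap Z_2=\{0\}$. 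The key fact being used is the identification of $\Ker(\ev_{\mathcal{P}})$ with the summand $L(F+2G'-D)$, rather than (as you phrase it) the existence of a lift of $\av$ into $L(F)$; worth stating that explicitly, since it is the load-bearing observation that replaces the paper's Proposition on $\varphi|_{Z_2}$.
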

\medskip
\noindent
In order to prove this theorem we need the following result.
\begin{prop}\label{prop:for_nothing_new}
Let us consider $G'$ as in \S\ref{subs:found1}, that is $G'\ge G$ and $L(W+D-G')=\{0\}$ and let $\varphi$ be the map $\varphi : L(F+2G')\longrightarrow \Omega(F+2G-D)^{\vee }$ where, for any $f\in L(F+2G')$, given $\omega\in \Omega(F+2G-D)$, \[\varphi(f)(\omega)=\sum_{i=1}^n \Res_{P_i}(f\omega).\]
If $\supp(F)\cap\mathcal{P}=\emptyset$, then $\varphi_{|Z_2}: Z_2\rightarrow \Omega(F+2G-D)^{\vee }$ is an isomorphism.
\end{prop}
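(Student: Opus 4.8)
The plan is to pin down $\ker\varphi$ exactly and then observe that $Z_2$ is a complement of $\ker\varphi$ in $L(F+2G')$ of exactly the right dimension. Concretely I will establish three facts: (a) $L(F+2G)\oplus L(F+2G'-D)\subseteq\ker\varphi$; (b) $\dim Z_2=\dim\Omega(F+2G-D)$; and (c) $\varphi$ is surjective. Granting these, rank--nullity and (b) give $\dim\ker\varphi=\ell(F+2G')-\dim\Omega(F+2G-D)=\ell(F+2G')-\dim Z_2=\ell(F+2G)+\ell(F+2G'-D)$, so the inclusion in (a) is an equality; since $Z_2$ is a complement of $L(F+2G)\oplus L(F+2G'-D)$ in $L(F+2G')$, we get $L(F+2G')=Z_2\oplus\ker\varphi$. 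Hence $\varphi_{|Z_2}$ is injective, and since $\dim Z_2=\dim\Omega(F+2G-D)^{\vee}$ by (b), it is an isomorphism.

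Fact (a) is pure pole bookkeeping. If $f\in L(F+2G)$ and $\eta\in\Omega(F+2G-D)$ then $(f\eta)\ge-(F+2G)+(F+2G-D)=-D$, so $f\eta$ is a differential whose poles all lie in $\mathcal P$ and are simple; the residue theorem gives $\sum_{i=1}^n\Res_{P_i}(f\eta)=0$, i.e.\ $\varphi(f)=0$. If instead $f\in L(F+2G'-D)$ then $(f\eta)\ge-(F+2G'-D)+(F+2G-D)=-2(G'-G)$, and since $\supp(G'-G)\cap\mathcal P=\emptyset$ the differential $f\eta$ is regular at every $P_i$, so again $\varphi(f)=0$; the sum is direct by Assumption~\ref{NB2}, exactly as in \eqref{dec2}. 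Fact (b) is Riemann--Roch: Assumption~\ref{NB2} gives $\deg(F+2G-D)<0$, hence $\ell(F+2G-D)=0$, so $\dim\Omega(F+2G-D)=\ell(W+D-F-2G)=\ell(F+2G-D)-\deg(F+2G-D)+g-1=n-\deg F-2\deg G+g-1$, which is $\dim Z_2$ by Lemma~\ref{lem:dimZ}.

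The substance is fact (c), that the residue form has trivial right radical. Fix $\eta\in\Omega(F+2G-D)$ and write $\eta=h\,\omega$ with $h\in L(W+D-F-2G)$ (the isomorphism $\Omega(F+2G-D)\cong L(W+D-F-2G)$, $\eta\mapsto\eta/\omega$). Because $\supp(F)$, $\supp(G)$ and $\supp(G')$ all miss $\mathcal P$, every $f\in L(F+2G')$ and this $h$ are regular at each $P_i$, and $fh\omega$ has there at worst a simple pole with $\Res_{P_i}(fh\omega)=f(P_i)h(P_i)$; hence $\varphi(f)(\eta)=\langle\ev_{\mathcal P}(f),\ev_{\mathcal P}(h)\rangle$. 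Now $\ev_{\mathcal P}$ restricted to $L(F+2G')$ is onto $\mathbb{F}_q^n$: its kernel is $L(F+2G'-D)$, and since $\ell(W+D-G')=0$ forces $\deg G'\ge n+g-1$, both $F+2G'$ and $F+2G'-D$ have degree $>2g-2$, so Riemann--Roch gives $\ell(F+2G')-\ell(F+2G'-D)=n$. Therefore $\varphi(f)(\eta)=0$ for all $f\in L(F+2G')$ forces $\ev_{\mathcal P}(h)=\zerov$, i.e.\ $h\in L(W+D-F-2G-D)=L(W-F-2G)$, i.e.\ $\eta\in\Omega(F+2G)$; but $\deg(F+2G)>2g-2$ (as in the proof of Lemma~\ref{lem:dimZ}), so $\Omega(F+2G)=\{0\}$ and $\eta=0$. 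Thus $\varphi$ is surjective. The main obstacle is precisely this identification of $\varphi$ with the algebraic-geometry-code inner product, together with the surjectivity of $\ev_{\mathcal P}$ on $L(F+2G')$ — the places where $\supp(F)\cap\mathcal P=\emptyset$ and the size of $\deg G'$ genuinely enter; the rest is degree arithmetic.
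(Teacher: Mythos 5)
Your proof is correct and follows essentially the same three-step structure as the paper's: the same three facts (kernel contains the direct sum, $\dim Z_2=\dim\Omega(F+2G-D)$, and surjectivity), followed by the same dimension count. The only cosmetic difference is in the surjectivity step, where the paper factors $\varphi=\Psi^T\circ\tilde\varphi$ and proves each factor surjective by building explicit preimages, whereas you identify $\varphi(f)(\eta)$ with the inner product $\langle\ev_{\mathcal P}(f),\ev_{\mathcal P}(h)\rangle$ and show the right radical is trivial---both arguments hinge on exactly the same two ingredients, namely $\Omega(F+2G)=\{0\}$ and the surjectivity of $\ev_{\mathcal P}$ on $L(F+2G')$.
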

\begin{proof}
This proof is an adaptation of the proof of Remark \ref{rem:S(F)=M} given in \cite{E93}. It is composed by the following steps:
\begin{itemize}
\item[(1)] $\varphi$ is surjective;
\item[(2)] $L(F+2G'-D)\oplus L(F+2G)\subseteq\Ker(\varphi)$;
\item[(3)] $\dim(Z_2)=\dim L(W+D-2G-F)$.
\end{itemize}
In order to prove that $\varphi$ is surjective, we first show that it suffices to prove the surjectivity of the map $\tilde{\varphi}: L(F+2G')\rightarrow (\mathbb{F}_q^n)^{\vee}$, where for any $h\in L(F+2G')$, given $\av\in \mathbb{F}_q^n$,
\[\tilde{\varphi}(f)(\av)=\sum_{i=1}^n a_i\ev_{P_i}(f).\]
Let us suppose then that $\tilde{\varphi}$ is surjective. One can easily see that the map
\[\Psi:\map{\Omega(F+2G-D)}{\mathbb{F}_q^n}
  {\omega}{(\Res_{P_1}(\omega), \dots, \Res_{P_n}(\omega)).}\]
is injective, its kernel being the space $\Omega(F+2G)$ which is equal to $\{0\}$ as 
\[\deg(F+2G)>2g-2.\] 
Hence, $\Psi$ being injective, its transpose $\Psi^T:(\mathbb{F}_q^n)^{\vee}\rightarrow \Omega(F+2G-D)^{\vee}$ is surjective. By the hypothesis on the surjectivity of $\tilde{\varphi}$, the composition of $\tilde{\varphi}$ and $\Psi^T$, gives a surjective map.
\begin{figure}[h]
\centering
\begin{tikzpicture}
    \node (A){$L(F+2G')$};
    \node (B)[node distance=3cm, right of=A]{$(\mathbb{F}_q^n)^{\vee}$};
    \node (C)[node distance=3.5cm, right of=B]{$\Omega(F+2G-D)^{\vee}.$};
    \draw[->>,font=\scriptsize] (A) to node [above]{$\tilde{\varphi}$} (B);
    \draw[->>,font=\scriptsize] (B) to node [above]{$\Psi^T$} (C);
  \end{tikzpicture}
\end{figure}
We claim that this map is equal to $\varphi$: for any $f\in L(F+2G)$ and $\omega\in \Omega(F+2G-D)$ the following equalities hold
\begin{eqnarray*}
\Psi^T(\tilde{\varphi}(f))(\omega)&=&(\tilde{\varphi}(f)\circ\Psi)(\omega)\\
								  &=&\tilde{\varphi}(f)(\Res_{\mathcal{P}}(\omega))\\
								  &=&\sum_{i=1}^n \Res_{P_i}(f\omega)=\varphi(f)(w).
\end{eqnarray*}
Hence, we now prove that $\tilde{\varphi}$ is surjective. Let us consider $(\error_i^{\vee})_i$ the canonical basis of $(\mathbb{F}_q^n)^{\vee}$. First we claim that for any $i=1, \dots, n$, the set 
\[L(2G'+F-D+P_i)\setminus L(2G'+F-D)\ne\emptyset.\]
To see that, notice that as we proved in \eqref{forOmega0}, we have $\Omega(F+2G'-D+P_i)=\Omega(F+2G'-D)=\{0\}$, and
\begin{eqnarray}
\ell(2G'+F-D+P_i)&=&\deg(2G'-D+F+P_i)-g+1\\
\ell(2G'+F-D)&=&\deg(2G'+F-D)-g+1,
\end{eqnarray}
hence $L(F+2G'-D+P_i)\setminus L(F+2G'-D)\ne\emptyset$. Now it suffices to note that for any $h$ in this set, there is $\lambda\in\mathbb{F}_q^{\ast}$ such that $\tilde{\varphi}(h)=\lambda\error_i^{\vee}$, hence (1) is proved. Let us now consider $h\in L(F+2G'-D)$. Given $\omega\in \Omega(F+2G-D)$ we have $v_{P_i}(h\omega)\ge 0$ for any $i=1, \dots, n$, hence $\Res_{\mathcal{P}}(h\omega)=0$ for any $\omega\in \Omega(F+2G-D)$, that is $\varphi(h)=0$. We consider now $h\in L(F+2G)$. For any $\omega\in \Omega(F+2G-D)$,
\[(h\omega)\ge -D,\]
thus we get $\varphi(h)(\omega)=\sum_{i=1}^n\Res_{P_i}(h\omega)=\sum_{P\in\X}\Res_P(h\omega)=0$ from the residue Theorem. Hence we proved (2). We now finally prove (3). We have
\begin{eqnarray*}
\dim Z_2&=& g-1-\deg(F+2G-D)\\
		&=& \dim\Omega(F+2G-D),
\end{eqnarray*} 
where in the first equality we used Lemma \ref{lem:dimZ}, while in the second one, we use Assumption \ref{NB2}.
\end{proof}
\begin{rem}\label{remannex}
Observe that by \eqref{dualAG} and Proposition \ref{prop:star_prod_AGcodes}, if $\deg G\ge 2g$,
\[\Res_{\mathcal{P}}(\Omega(F+2G-D))=C_{\Omega}(F+2G)=C_L(W+D-F-2G)=(B^{\perp}\ast C)^{\perp},\] 
where $B$ is defined in \eqref{def:AB}.
\end{rem}
\medskip
\noindent
We can now prove Theorem \ref{thm:S_2=M_2}.
\begin{proof}
Let $\pi_{Z_2}$ be the projection $L(F+2G')\rightarrow Z_2$ with respect to the decomposition of the space $L(F+2G')=L(F+2G)\oplus L(G+2G'-D)\oplus Z_2$. We then have
\[S_2=\{\Gamma\in L(F)\mid \pi_{Z_2}\circ\delta_{\yv^2}(\Gamma)=0\}.\]
In particular, by Proposition \ref{prop:for_nothing_new}, for any $\Gamma\in L(F)$ we have 
\[\Gamma\in S_2(F)\iff\pi_{Z_2}\circ\delta_{\yv^2}(\Gamma)=0\iff\varphi_{|Z_{2}}\circ\pi_{Z_2}\circ\delta_{\yv^2}(\Gamma)=0,\]
that is if and only if, for any $\omega\in\Omega(F+2G-D)$ 
\begin{equation}\label{eq}
(\varphi_{|Z_{2}}\circ\pi_{Z_2}\circ\delta_{\yv^2}(\Gamma))(\omega)=0
\end{equation}
Note that, by Proposition \ref{prop:for_nothing_new}, $\varphi(L(F+2G)\oplus L(F+2G'-D))=0$, therefore for any $f\in L(F+2G')$ the following equality holds
\[\varphi(f)=(\varphi_{|Z_2}\circ\pi_{Z_2})(f).\]
Hence, the left hand side of the equation in \eqref{eq} becomes
\begin{eqnarray*}
(\varphi_{|Z_{2}}\circ\pi_{Z_2}\circ\delta_{\yv^2}(\Gamma))(\omega)&=& \varphi(\delta_{\yv^2}(\Gamma))(\omega)=\sum_{i=1}^n\Res_{P_i}(\omega\Gamma  f_{\yv^2})\\
 		&=&\sum_{i=1}^n \Gamma(P_i)\Res_{P_i}(\omega)y_i^2
\end{eqnarray*}
By Remark \ref{remannex}, we have 
\[\{(\Res_{P_1}(\omega), \dots, \Res_{P_n}(\omega))\mid \omega\in \Omega(F+2G-D)\}= (B^{\perp}\ast C)^{\perp}\] 
where $B$ is defined in $\eqref{def:AB}$. Hence, for any $\Gamma\in L(F)$, $\Gamma\in S_2(F)$ if and only if $\ev_{D}(\Gamma)\in K_{\yv}^{(2)}$.
\end{proof}

\end{document}